\documentclass[11pt]{article}
\usepackage{graphicx}
\usepackage{subcaption}

\usepackage{titlesec}
\usepackage{natbib}
\usepackage[T1]{fontenc}
\usepackage[utf8]{inputenc}
\usepackage{lmodern}
\usepackage[utf8]{inputenc}
\usepackage{enumitem}
\usepackage{hyperref}
\usepackage{comment}
\usepackage{authblk}
\usepackage{amsmath}
\usepackage{amssymb}
\usepackage{amsthm}

\newtheorem{theorem}{Theorem}[section]
\usepackage[ruled,vlined]{algorithm2e}
\SetKwRepeat{Do}{do}{while}%
\SetKwInput{KwInit}{Initialization}%
\SetKwIF{If}{ElseIf}{Else}{if}{then}{else if}{else}{endif}%
\usepackage{soul}
\usepackage{booktabs}
\usepackage{color}
\usepackage{epsfig}
\usepackage{epstopdf}
\usepackage[section]{placeins}
\usepackage{graphicx}

\usepackage{amsfonts}
\usepackage{mathrsfs}
\usepackage{soul}
\usepackage{multirow}
\usepackage{array}
\usepackage{natbib}

\newsavebox{\theorembox}
\newsavebox{\lemmabox}
\newsavebox{\claimbox}
\newsavebox{\factbox}
\newsavebox{\corollarybox}
\newsavebox{\examplebox}
\newsavebox{\remarkbox}
\newsavebox{\assbox}
\newsavebox{\propositionbox}
\newsavebox{\problembox}
\newsavebox{\defbox}

\newtheorem{lemma}{Lemma}[section]

\savebox{\theorembox}{\noindent\bf Theorem}
\savebox{\lemmabox}{\noindent\bf Lemma}
\savebox{\factbox}{\noindent\bf Fact}
\savebox{\corollarybox}{\noindent\bf Corollary}
\savebox{\examplebox}{\noindent\bf Example}
\savebox{\remarkbox}{\noindent\bf Remark}
\savebox{\assbox}{\noindent\bf Assumption}
\savebox{\propositionbox}{\noindent\bf Proposition}
\savebox{\problembox}{\noindent\bf Problem}
\savebox{\defbox}{\noindent\bf Definition}

\def\blackslug{\hbox{\hskip 1pt \vrule width 4pt height 8pt depth 1.5pt
\hskip 1pt}}

\newcommand{\vs}{\vspace*{3mm}}

\newcommand{\ba}{ {\boldsymbol a} }
\newcommand{\bA}{ {\boldsymbol A} }
\newcommand{\bb}{ {\boldsymbol b} }
\newcommand{\bB}{ {\boldsymbol B} }
\newcommand{\bc}{ {\boldsymbol c} }

\newcommand{\bD}{ {\boldsymbol D} }

\newcommand{\boldf}{ {\boldsymbol f} }

\newcommand{\bg}{ {\boldsymbol g} }
\newcommand{\bG}{ {\boldsymbol G} }
\newcommand{\bh}{ {\boldsymbol h} }
\newcommand{\bH}{ {\boldsymbol H} }

\newcommand{\bI}{ {\boldsymbol I} }

\newcommand{\bL}{ {\boldsymbol L} }

\newcommand{\br}{ {\boldsymbol r} }
\newcommand{\bR}{ {\boldsymbol R} }
\newcommand{\bs}{ {\boldsymbol s} }

\newcommand{\bu}{ {\boldsymbol u} }

\newcommand{\bv}{ {\boldsymbol v} }

\newcommand{\bw}{ {\boldsymbol w} }
\newcommand{\bW}{ {\boldsymbol W} }
\newcommand{\bx}{ {\boldsymbol x} }

\newcommand{\by}{ {\boldsymbol y} }

\newcommand{\bz}{ {\boldsymbol z} }

\newcommand{\balpha}{ {\boldsymbol \alpha} }

\newcommand{\bbeta}{ {\boldsymbol \beta} }
\newcommand{\bgam}{ {\boldsymbol \gamma} }
\newcommand{\bgamma}{ {\boldsymbol \gamma} }

\newcommand{\bepsilon}{ {\boldsymbol \epsilon} }
\newcommand{\bphi}{ {\boldsymbol \phi} }
\newcommand{\bPhi}{ {\boldsymbol \Phi} }

\newcommand{\bmu}{ {\boldsymbol \mu} }

\newcommand{\bet}{ {\boldsymbol \eta} }

\newcommand{\bsigma}{ {\boldsymbol \sigma} }

\newcommand{\bSigma}{ {\boldsymbol \Sigma} }
\newcommand{\btheta}{ {\boldsymbol \theta} }
\newcommand{\bTheta}{ {\boldsymbol \Theta} }

\newcommand{\bxi}{ {\boldsymbol \xi} }

\newcommand{\bzero}{ {\boldsymbol 0} }

\usepackage{hyperref}
\usepackage{comment}
\usepackage{authblk}
\usepackage{amsthm}

\def\blot{\quad {$\vcenter{\vbox{\hrule height.4pt
             \hbox{\vrule width.4pt height.9ex \kern.9ex \vrule
width.4pt}
             \hrule height.4pt}}$}}

\usepackage{multirow}

\usepackage{cleveref}

\begin{document}

\title{MCMC-Free Uncertainty Quantification for Deep Generative Inference for Spatially Varying Coefficient Models at Scale}

\author{
Yeseul Jeon$^{1}$,
Aaron Scheffler$^{2}$, and
Rajarshi Guhaniyogi$^{3,\dagger}$\\
{\small $^{1}$Department of Mathematics and Statistics, 
University of North Carolina at Charlotte}\\
{\small $^{2}$Department of Epidemiology and Biostatistics, 
University of California, San Francisco}\\
{\small $^{3}$Department of Statistics, Texas A\&M University}\\
{\small $^{\dagger}$Corresponding author:
\texttt{rajguhaniyogi@tamu.edu}}
}

\date{}
\maketitle

\begin{abstract}
Varying coefficient (VC) regression models have become indispensable tools in spatial data analysis, providing unmatched flexibility in capturing complex, nonlinear relationships and spatially-varying effects of predictors on responses. Although hierarchical Bayesian approaches offer a rigorous probabilistic framework for uncertainty quantification in VC modeling, their practical application to large-scale spatial datasets remains severely hindered by the scalability limitations of Markov chain Monte Carlo (MCMC) algorithms. In response, the past decade has seen substantial advances in developing more efficient hierarchical Bayesian VC models, primarily through substituting traditional Gaussian processes (GP) with computationally efficient stochastic surrogates for estimating unknown coefficient functions. This article introduces a fundamentally different approach: the \emph{Geostatistical Variational Auto-Encoder} (GeoVAE), a hierarchical deep generative framework built specifically for joint estimation of multiple spatially varying coefficient functions. GeoVAE departs from both classical GP-based Bayesian models and standard variational auto-encoders (VAEs) in two key ways. First, it constructs a coefficient-specific auto-encoder for each coefficient function, allowing each to capture its own spatial resolution and smoothness. Second, a hierarchical synthesis layer integrates information across these auto-encoders to a shared auto-encoder, explicitly modeling cross-coefficient dependencies arising from the shared spatial domain. This design enables GeoVAE to recover complex spatial patterns at computational cost that scales favorably with sample size, without MCMC or explicit GP covariance representations. We characterize the advantages and limitations of GeoVAE relative to hierarchical Bayesian spatial models, demonstrating its strong potential for large-scale spatial analysis.
\end{abstract}

\baselineskip18pt

\noindent \textbf{Keywords:} Spatial big data; Gaussian processes; Spatial statistics; Variational Auto-encoder; Varying coefficient model; Variational inference.
\section{Introduction}\label{sec:intro}
Varying coefficient models (VCMs) offer a flexible extension to traditional linear regression frameworks \citep{gelfand2003spatial} by allowing regression coefficients to vary as smooth functions of an index variable, thereby accommodating nonlinear dependence between the response and covariates in a parsimonious and interpretable manner. In the Bayesian paradigm, VCMs seamlessly integrate the interpretability of parametric approaches with the adaptability of nonparametric methods, whilst delivering principled uncertainty quantification through the posterior distribution. These features render  Bayesian VCMs particularly attractive for applications involving temporal, spatial, and spatiotemporal data.

Let $\mathcal{D} \subseteq \mathbb{R}^d$ denote the $d$-dimensional index set, with indexing variable $\bu$, where $d = 1$ for temporal data, $d = 2$ for spatial data. This article mainly focuses on the spatial outcome and predictors such that $d=2$. At each $\bu \in \mathcal{D}$, a spatial outcome $y(\bu) \in \mathbb{R}$ and $J$ predictors $x_1(\bu), \ldots, x_J(\bu) \in \mathbb{R}$ are related through the VCM:
\begin{equation} 
\label{eq:VCM_basic}
    y(\bu) = \beta_0(\bu) + \sum_{j=1}^{J} x_j(\bu)\,\beta_j(\bu) + \epsilon(\bu),
\end{equation}
where $\bbeta(\bu) = (\beta_1(\bu), \ldots, \beta_J(\bu))^\top$ are the spatially varying coefficient functions, $\beta_0(\bu)$ is the spatially varying intercept, and $\epsilon(\bu) \sim \mathcal{N}(0, \tau^2)$ captures measurement error. VCMs with spatially varying coefficients are well suited to capturing smoothly varying predictor effects, and have found successful applications in a wide variety of spatial modeling \citep{gelfand2003spatial, mu2018estimation, guhaniyogi2025bayesian, guhaniyogi2022distributed, 
kim2021generalized}. spatially varying coefficient (SVC) models serve as principled process-based alternatives to geographically weighted regression \citep{fotheringham2009geographically, wheeler2021geographically} for modeling spatially nonstationary behavior in the mean, with documented utility across ecological and environmental applications \citep{zhan2026mapping}.

The canonical Bayesian approach to model \eqref{eq:VCM_basic} assigns a multivariate Gaussian process (GP) prior to $\bbeta(\bu)$, with a 
cross-covariance kernel encoding both within-coefficient spatial correlation and across-coefficient dependencies \citep{gelfand2003spatial}. Whilst this formulation is conceptually elegant and theoretically well-understood, it is severely constrained in practice by the computational demands of GP inference. Updating $\bbeta(\bu)$ at $N$ spatial locations within each Gibbs sampling iteration requires the inversion and storage of a $JN \times JN$ cross-covariance matrix, incurring a storage cost of $\mathcal{O}(J^2 N^2)$ and a computational cost of $\mathcal{O}(J^3 N^3)$. Consequently, standard MCMC-based inference becomes computationally prohibitive for large $N$ and 
even moderate $J$.


\subsection{Existing Approaches}
The literature on scalable Bayesian inference for large spatial datasets has expanded considerably in recent years; see \citet{heaton2019case} for a comprehensive review. Broadly, scalable approaches have proceeded broadly along two lines: model-based approaches, including low-rank dimension reduction \citep{cressie2011statistics, guhaniyogi2011adaptive, guhaniyogi2013modeling}; multi-resolution representations \citep{nychka2015multiresolution, guhaniyogi2020large}; sparsity-inducing GP constructions \citep{katzfuss2021general, sauer2023vecchia}; algorithmic approaches, including meta-kriging strategies \citep{andros2024robust, guhaniyogi2023distributed, guhaniyogi2025bayesian}. Despite the impressive advances these methods represent, they all remain fundamentally anchored to the hierarchical Bayesian paradigm: they reduce or approximate GP computations while retaining an MCMC-based inference engine, and therefore continue to face scalability challenges for very large datasets.

A parallel and rapidly growing literature has sought to harness the 
representational power of deep neural networks for spatial and spatiotemporal modelling. Graph neural networks (GNNs) and their spatiotemporal extensions \citep{Kipf2017, Hamilton2017, Wu2019} have demonstrated strong empirical performance in learning over irregular spatial graphs, whilst convolutional neural networks (CNNs) have been adapted to capture local spatial dependencies in gridded data \citep{LeCun1998, Reichstein2019}. Physics-informed neural 
networks \citep{Raissi2019} and neural ordinary differential equations 
\citep{Chen2018NODE} further extend this paradigm to spatiotemporal processes governed by physical laws. More recently, transformer-based architectures with spatial attention mechanisms \citep{Vaswani2017,Gao2022} have been applied to large-scale geospatial prediction tasks. Despite their considerable predictive power, these approaches offer limited uncertainty quantification, and are not designed for the interpretable estimation of spatially varying regression coefficients.

Variational autoencoders (VAEs) \citep{Kingma2014} and generative adversarial networks (GANs) \citep{Goodfellow2014} have emerged as powerful frameworks for learning latent representations of complex high-dimensional data. In spatial and geoscientific applications, deep generative models have been used for probabilistic precipitation nowcasting~\citep{Ravuri2021}, stochastic downscaling of atmospheric fields and precipitation forecasts \citep{Leinonen2020, Harris2022},
and geological model generation and geostatistical inversion \citep{Laloy2018}. Conditional generative models have also been developed for surrogate modeling and uncertainty quantification in systems governed by partial differential equations \citep{Zhu2019}. Gaussian 
process-based latent variable models \citep{Lawrence2005, Titsias2010} provide a principled probabilistic bridge between GP priors and latent variable frameworks, and have been extended to deep GP architectures \citep{Damianou2013, Salimbeni2017} for flexible nonparametric modeling of spatial fields. Conditional neural processes \citep{Garnelo2018, Gordon2020}. Gaussian neural processes~\citep{Bruinsma2021} further model correlations among predictions, providing a framework for meta-learning predictive stochastic processes. Whilst these generative approaches are highly expressive  and computationally scalable, they are not tailored to the problem of jointly estimating multiple spatially varying coefficient functions in a regression framework, nor do they directly address the inferential goals of VCMs, such as quantifying the spatially heterogeneous impact of individual predictors on a response variable.

\subsection{Our Contribution: GeoVAE}
We address this gap by introducing the \emph{Geostatistical Variational Autoencoder} (GeoVAE), a hierarchical deep generative framework purpose-built for the scalable and joint estimation of spatially varying coefficient functions in \eqref{eq:VCM_basic}. GeoVAE is a cohesive integration of five distinct  methodological innovations, each targeting a specific challenge in SVC estimation and together forming a unified, end-to-end inferential framework.

\textbf{(A) Coefficient-specific encoders.} Rather than learning a single shared latent representation for all coefficient functions, GeoVAE assigns a dedicated encoder to each of the $J$ coefficients. Each encoder independently maps the observed response vector to a coefficient-specific latent distribution, allowing the spatial structure of each coefficient, its intrinsic resolution, smoothness, and local variation, to be learned separately and at its own natural scale.

\textbf{(B) Hierarchical synthesis across coefficients.} A hierarchical synthesis layer consolidates information across all $J$ coefficient-specific encoders, explicitly capturing the cross-coefficient dependencies that arise from the shared spatial domain and common outcome structure. Via a shared latent pooling mechanism, the estimation of each coefficient is informed by the spatial patterns learned across the remaining coefficients, providing a flexible \emph{neural analogue of the multivariate GP cross-covariance structure} without requiring its explicit construction or inversion.

\textbf{(C) Coefficient-specific decoders.} Each of the $J$ coefficient-specific decoders reconstructs the $j$th varying coefficient function by drawing on both the individual latent representation from its corresponding encoder and the shared cross-coefficient representation produced by the hierarchical synthesis layer. The decoder outputs the basis-coefficient vector associated with the $j$th varying coefficient function, enabling flexible and spatially adaptive reconstruction.

\textbf{(D) End-to-end outcome reconstruction and calibrated uncertainty.} The GeoVAE variational objective is constructed so that the synthesized latent representations feed directly into the reconstruction of the observed spatial outcome. Specifically, the pooled cross-coefficient latent vector is passed through the decoders to reconstruct the response at each observed location, ensuring that the variational objective is directly tied to predictive fidelity. This end-to-end formulation simultaneously drives the optimization of the coefficient-specific encoders, the hierarchical synthesis layer, and the decoders within a single unified training procedure. The resulting joint estimation strategy guarantees that all learned coefficient functions are mutually consistent, spatially coherent, and aligned with predictive accuracy, offering a principled and computationally efficient alternative to multivariate GP-based Bayesian inference. To account for the modest underestimation of predictive uncertainty that can arise from stochastic gradient variational inference, we incorporate a post-hoc calibration step based on split conformal prediction. This procedure constructs predictive intervals that correct the nominal coverage of the variational predictive distribution, ensuring that the reported uncertainty estimates closely attain the desired frequentist coverage guarantees without requiring any modification to the core variational training procedure.

\textbf{(E) Scalable variational inference.} By replacing the MCMC inference engine with stochastic gradient variational Bayes, GeoVAE avoids the construction and inversion of $N\times N$ spatial covariance matrices entirely. With $H$ basis functions per coefficient, the dominant response-reconstruction cost is approximately $O(NH(J+1))$ per full-data optimization step, scaling favorably in $N$ and rendering large-scale spatial inference feasible in regimes where GP-based MCMC remains computationally challenging.


The remainder of the paper proceeds as follows. Section~\ref{sec:spatial_basis_expansion} develops the GeoVAE modeling framework, including the spatially varying coefficient model and its basis representation. Section~\ref{sec:hierarchical_geovae} presents the variational inference procedure in detail, covering the coefficient-specific encoder-decoder pairs and the hierarchical synthesis layer through which a shared encoder pools information across coefficient-specific latent representations. Section~\ref{sec:theoretical_study} establishes theoretical guarantees for the proposed framework, deriving consistency of the estimated varying coefficient surfaces and predictive consistency under a set of regularity conditions. Section~\ref{sec:simulation} reports simulation studies assessing the finite-sample performance of GeoVAE across a range of settings, and Section~\ref{sec:realdata} demonstrates the methodology on a remote sensing vegetation data application. Section~\ref{sec:conclusion} concludes with a discussion and directions for future research. Appendix offers proofs of the theoretical results presented in Section~\ref{sec:theoretical_study}.

\section{Methodology}\label{sec:method}

\subsection{Spatially Varying Coefficient Model with Basis Representation}\label{sec:spatial_basis_expansion}
Directly estimating $\beta_j(\bu)$ at every observed location would result in an ill-posed, high-dimensional problem. To obtain a structured and computationally tractable formulation, we instead
represent each coefficient function through a finite-dimensional basis expansion. Specifically, for $j = 0, \ldots, J$, we write
\begin{equation}
\label{eq:basis_representation}
\beta_j(\bu)
= \sum_{k=1}^{K_j} B_{jk}(\bu)\,\alpha_{jk}
= \mathbf{B}_j(\bu)^\top \boldsymbol{\alpha}_j,
\end{equation}
where
$\mathbf{B}_j(\bu) = \bigl(B_{j1}(\bu), \ldots, B_{jK_j}(\bu)\bigr)^\top \in \mathbb{R}^{K_j}$ is the vector of pre-specified spatial basis functions evaluated at $\bu$, and $\boldsymbol{\alpha}_j = (\alpha_{j1}, \ldots, \alpha_{jK_j})^\top
\in \mathbb{R}^{K_j}$ is the corresponding vector of basis coefficients, which are shared across all spatial locations and serve as the unknown parameters to
be estimated. The basis functions $\{B_{jk}(\cdot)\}$ can be chosen to encode the anticipated smoothness of $\beta_j(\bu)$; common choices include B-spline basis functions \citep{shen2015adaptive}, radial basis functions \citep{buhmann2000radial}, thin-plate splines \citep{wood2003thin}, or Gaussian kernels centered at a set of knots distributed over $\mathcal{D}$ \citep{guhaniyogi2011adaptive}. Through the expansion  \eqref{eq:basis_representation},
the infinite-dimensional estimation problem for the continuously defined surface $\beta_j(\bu)$ is reduced to estimating the finite-dimensional parameter vector $\boldsymbol{\alpha}_j \in\mathbb{R}^{K_j}$, thereby substantially reducing the effective
dimension of the parameter space, and offering similar results with a sufficiently large choice of $K_j$.

Let the data be observed at $N$ spatial locations $\bu_1,...,\bu_N$. To unify the intercept and predictor effects within a single framework, we set $x_0(\mathbf{u}_i) = 1$ for all $i=1,...,N$ and stack all basis coefficient vectors into a single global
parameter vector
$\boldsymbol{\alpha}
= \bigl(\boldsymbol{\alpha}_0^\top,\,
         \boldsymbol{\alpha}_1^\top,\,\ldots,\,
         \boldsymbol{\alpha}_J^\top\bigr)^\top
\in \mathbb{R}^{K},\:
K = \sum_{j=0}^{J} K_j.$
For each observation $i$, we construct a
\emph{basis-covariate design vector} by scaling the basis evaluations for each coefficient by the corresponding covariate value
$\mathbf{D}_i
= \Big[
    x_0(\bu_i)\mathbf{B}_0(\bu_i)^\top,\;
    x_1(\bu_i)\mathbf{B}_1(\bu_i)^\top,\;
    \ldots,$\\
    $
    x_J(\bu_i)\mathbf{B}_J(\bu_i)^\top
  \Big]
\in \mathbb{R}^{1 \times K}.$
By construction,
$\mathbf{D}_i \boldsymbol{\alpha}
= \sum_{j=0}^J x_j(\bu_i)\,
  \mathbf{B}_j(\bu_i)^\top \boldsymbol{\alpha}_j
= \sum_{j=0}^J x_j(\bu_i)\,\beta_j(\bu_i)$. Stacking these row vectors over all $N$ observations yields the global design matrix
$\mathbf{D}
= \bigl[\mathbf{D}_1^\top :\, \cdots :\,
         \mathbf{D}_N^\top\bigr]^\top
\in \mathbb{R}^{N \times K},$
so that the VC model \eqref{eq:VCM_basic} admits the
compact matrix form
\begin{equation}
\label{eq:matrix_svc}
\mathbf{y} = \mathbf{D}\boldsymbol{\alpha} + \boldsymbol{\epsilon},
\qquad
\boldsymbol{\epsilon} \sim \mathcal{N}(\mathbf{0},\, \tau^2 \mathbf{I}_n),
\end{equation}
where $\mathbf{y} = (y(\bu_1), \ldots, y(\bu_N))^\top$. Equation~\eqref{eq:matrix_svc} is formally a standard Gaussian linear regression model in the stacked basis coefficient vector $\boldsymbol{\alpha}$. Crucially, all spatial structure of the original VC model is fully embedded within the design matrix $\mathbf{D}$: once $\mathbf{D}$ is constructed from the observed locations and covariate values, standard linear model machinery applies directly to inference on $\boldsymbol{\alpha}$. We also denote $\bD^{(j)}=[x_j(\bu_1)\bB_j(\bu_1):\cdots:x_j(\bu_N)\bB_j(\bu_N)]^T\in\mathbb{R}^{N\times K_j}$ as the design matrix corresponding to the coefficient vector $\balpha_j$ representing the basis coefficient for the $j$th predictor.

\subsection{Hierarchical VAE-Based Estimation of the Basis Coefficients}
\label{sec:hierarchical_geovae}
The predominant approach in the VCM literature for spatial statistics is to place a multivariate Gaussian prior \citep{gelfand2003spatial}, or a scale mixture thereof \citep{PalaciosSteel2006}, on the coefficient functions, thereby jointly capturing inter-coefficient dependencies while accommodating high-dimensionality. We depart fundamentally from this paradigm and instead propose a hierarchical variational autoencoder (VAE) framework for estimating the basis coefficients $\balpha_0,...,\balpha_J$ that determine the spatially varying coefficient functions. While the basis expansion in Section~\ref{sec:spatial_basis_expansion} reduces the infinite-dimensional estimation problem to a finite-dimensional one, the basis coefficient vectors associated with different predictors may encode heterogeneous spatial structures. At the same time, since all coefficient functions jointly explain the same outcome, they are likely to share common latent information that, if properly exploited, can improve the estimation of each individual coefficient.

To accommodate both coefficient-specific and shared structure, we introduce a two-level hierarchical latent representation. At the first level, a separate encoder is constructed for each coefficient function $\beta_j(\bu)$, producing a coefficient-specific latent vector $\bz_j\in\mathbb{R}^{d_j}$
that captures the spatial structure unique to the $j$th coefficient. At the second level, the collection of coefficient-specific latent variables is jointly mapped to a shared global latent vector $\bz_g \in \mathbb{R}^{d_g}$, which distills structure that is common across all coefficient functions. Each basis coefficient vector $\balpha_j$ is then generated from both its own coefficient-specific representation $\bz_j$ and the shared representation $\bz_g$, enabling each coefficient to retain its individual latent structure while borrowing strength from the remaining coefficients through the shared latent layer.  Figure~\ref{fig:architecture} provides a schematic overview of the full GeoVAE architecture.

\begin{figure}[htbp]
\centering
\includegraphics[width=\linewidth]{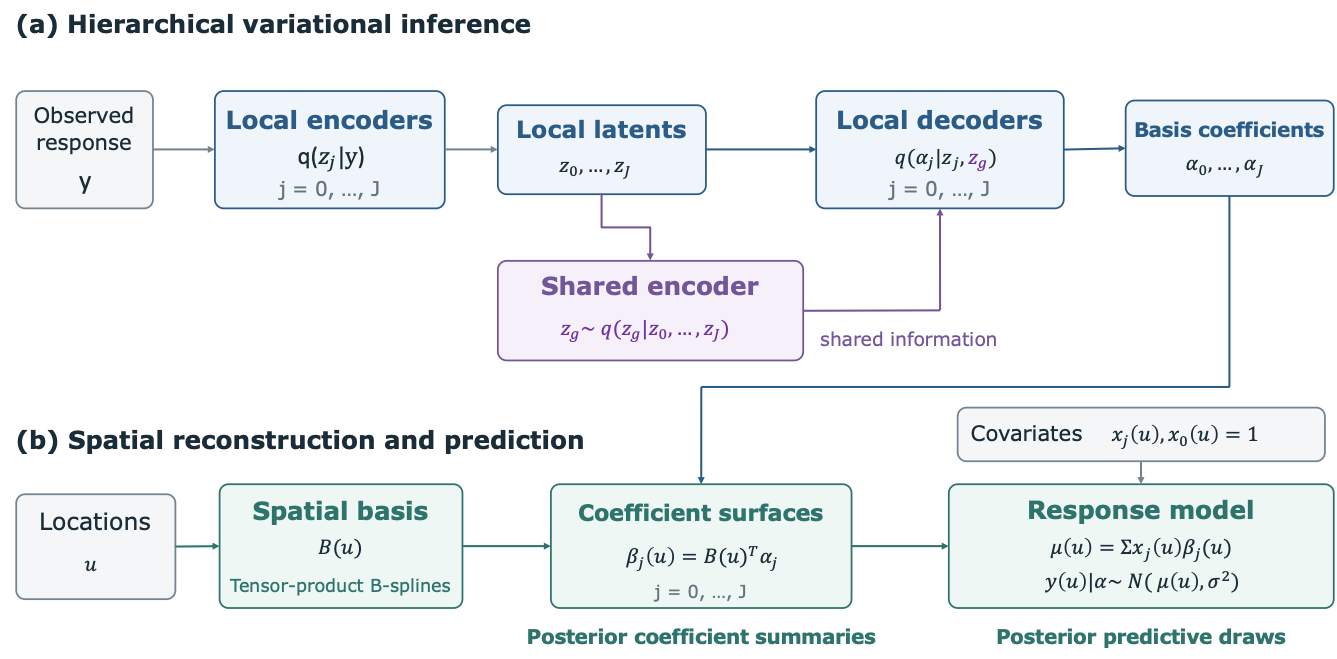}
\caption{
Architecture of GeoVAE.
(a) Hierarchical variational inference combining coefficient-specific and shared latent representations to estimate the basis coefficients.
(b) Spatial reconstruction of coefficient surfaces using basis functions and response prediction using covariates.
}
\label{fig:architecture}
\end{figure}

\subsubsection{Coefficient-Specific Encoders}
For each $j=0,...,J$, the $j$th coefficient-specific encoder maps the observed response vector $\by \in \mathbb{R}^N$ to a coefficient-specific approximate posterior distribution over the latent variable $\bz_j$. We adopt a mean-field Gaussian variational family:
\begin{equation}
q_{\bphi_j}(\bz_j\mid \by))
=
N\left(
\bmu_{\bphi_j}(\by),
\operatorname{diag}
\left\{
\bsigma_{\bphi_j}^2(\by)
\right\}
\right),
\label{eq:local_encoder}
\end{equation}
where $\bmu_{\bphi_j}(\by)\in\mathbb{R}^{d_j}$ and $\bsigma_{\bphi_j}^2(\by)\in\mathbb{R}_{+}^{d_j}$ are the variational mean and standard deviation vectors, both parameterized as functions of $\by$ through the $j$th encoder network with parameters $\bphi_j$. To enable gradient-based optimization, we employ the reparameterization formula and draw samples as
\begin{equation}
\bz_j
=
\bmu_{\bphi_j}(\by)
+
\bsigma_{\bphi_j}(\by)
\odot
\bxi_j,
\qquad
\bxi_j\sim N(\bzero,\bI_{d_j}),
\label{eq:local_reparameterization}
\end{equation}
where $\odot$ denotes the element-wise (Hadamard) product, and samples are drawn independently for $j=0,...,J$ conditional on the observed data.

It is important to note that all $J+1$ coefficient-specific encoders receive the same observed response vector $\by$ as input, but possess entirely separate network parameters $\bphi_0,...,\bphi_J$. Consequently, each encoder produces a distinct latent representation tailored to its respective coefficient function. The association between the $j$th latent branch and the $j$th coefficient function is induced through the corresponding branch-specific decoder and the joint response reconstruction term in the variational objective, described in Section~\ref{sec:elbo}.

\subsubsection{Hierarchical Synthesis via Shared Encoder}
Although the latent variables $\bz_0,...,\bz_J$ are constructed independently, the corresponding coefficient functions jointly explain the same response. To capture shared structure across all coefficient functions, we introduce a second-level shared encoder. Let $\bz=(\bz_0^T,...,\bz_J^T)^T\in\mathbb{R}^{\sum_{j=0}^Jd_j}$ denote the concatenation of all coefficient-specific latent variables. The shared encoder maps $\bz$ to a global latent variable $\bz_g\in\mathbb{R}^{d_g}$
via the approximate variational posterior
\begin{equation}
q_{\bphi_g}(\bz_g\mid \bz)
=
N\left(
\bmu_{\bphi_g}(\bz),
\operatorname{diag}
\left\{
\bsigma_{\bphi_g}^2(\bz)
\right\}
\right),
\label{eq:global_encoder}
\end{equation}
where $\bmu_{\bphi_g}(\bz)\in\mathbb{R}^{d_g}$ and $\bsigma_{\bphi_g}^2(\bz)\in\mathbb{R}_{+}^{d_g}$ are parameterized through a shared encoder network with parameters $\bphi_g$. A sample from the shared latent distribution is obtained via reparameterization as
\begin{equation}
\bz_g = \bmu_{\bphi_g}(\bz) + \bsigma_{\bphi_g}(\bz)
\odot \bxi_g, \qquad
\bxi_g\sim N(\bzero,\bI_{d_g}).
\label{eq:global_reparameterization}
\end{equation}
The shared latent variable $\bz_g$ therefore provides a learned, compact summary of structure that is common across all coefficient-specific representations. This allows the model to preserve effect-specific features through each $\bz_j$ while enabling cross-coefficient information sharing through $\bz_g$.

\subsubsection{Coefficient-Specific Decoders}
Each coefficient-specific decoder generates the basis coefficient vector $\balpha_j$ from both the coefficient-specific latent variable $\bz_j$ and the shared latent variable $\bz_g$. Define the combined latent input for the $j$th decoder as the concatenation $\bgamma_j = (\bz_j^\top, \bz_g^\top)^\top\in\mathbb{R}^{d_j+d_g}$ for $j=0,\ldots,J$. The $j$th coefficient-specific decoder specifies a Gaussian conditional distribution over $\balpha_j$ as
\begin{equation}
p_{\btheta_j}(\balpha_j\mid \bgam_j)
=
N\left(
\bmu_{\btheta_j}(\bgamma_j),
\operatorname{diag}
\left\{
\bsigma_{\btheta_j}^2(\bgamma_j)
\right\}
\right),
\label{eq:local_decoder}
\end{equation}
where $\bmu_{\btheta_j}(\bgamma_j)\in\mathbb{R}^{K_j}$ and $\bsigma_{\btheta_j}^2(\bgamma_j)\in\mathbb{R}^{K_j}_{+}$ are the decoder mean and standard deviation vectors, parameterized through a fully connected network with parameters $\btheta_j$. Samples are obtained via reparameterization as
\begin{equation}
\balpha_j
=
\bmu_{\btheta_j}(\bgamma_j)
+
\bsigma_{\btheta_j}(\bgamma_j)
\odot
\bxi_{\balpha_j},
\qquad
\bxi_{\balpha_j}\sim N(\bzero,\bI_{K_j}).
\label{eq:a_reparameterization}
\end{equation}

The use of both $\bz_j$ and $\bz_g$ in \eqref{eq:local_decoder} is central to the hierarchical construction. The coefficient-specific latent variable $\bz_j$ preserves information unique to the $j$th varying coefficient, whereas $\bz_g$ carries information synthesized
across all coefficient functions. Consequently, the basis coefficient vector
$\balpha_j$ is estimated using both coefficient-specific and shared information. Once $\balpha_j$ is generated, the corresponding coefficient surface is recovered via basis representation $\bbeta_j(\bu) = B_j(\bu)^\top \balpha_j, ~ j=0,\ldots,J.$ Conditional on $\balpha_0,\ldots,\balpha_J$, the fitted response at location $\bu_i$ is $
= \sum_{j=0}^{J} x_j(\bu_i)\bB_j(\bu_i)^\top \balpha_j=\bD_i\balpha,$ and the observation model is
\begin{equation}
p(\by\mid \balpha_0,\ldots,\balpha_J,D)
=
N\left(
\sum_{j=0}^{J}\bD^{(j)} \balpha_j,
\tau^2I_N
\right).
\label{eq:hier_observation}
\end{equation}

\subsubsection{Deep Neural Network Representation of the Encoders and Decoders}\label{sec:nn_representation}
We now specify the neural network architectures used to parameterize the Gaussian mean and variance functions appearing in the coefficient-specific encoders~\eqref{eq:local_encoder}, the shared encoder~\eqref{eq:global_encoder}, and the coefficient-specific decoders~\eqref{eq:local_decoder}. We proceed in three steps: we first define fully connected networks as a general building block, then introduce one-dimensional convolutional networks for processing the spatial response vector, and finally assemble these components into the complete encoder-decoder parameterization.

\noindent\underline{\textbf{Fully connected networks.}} The unknown mean and variance functions of each variational distribution are modeled using \emph{fully connected neural networks} (FCNNs). For an input vector $\bv$, the hidden layer representations for an FCNN are computed as
\begin{align}
\bh^{(o)}=
\rho
\left(
\bW^{(o)}
\bh^{(o-1)}
+
\bb^{(o)}
\right),
~ \bh^{(0)}= \bv, ~ o=1,\ldots,O,
\label{eq:full_connected}
\end{align}
where $O$ is the number of hidden layers, $\bW^{(o)}$ and $\bb^{(o)}$ are the weight matrix and bias vector at layer $o$, and $\rho(\cdot)$ is the activation function. Given a desired output dimension $r$, two separate linear output heads map the final hidden representation $\bh^{(O)}$
to a mean vector and an unconstrained log-variance vector
\begin{equation}
    \bmu_r(\bv) = \bA_{\mu} \bh^{(O)} + \ba_{\mu} \in \mathbb{R}^r, \qquad
    \br_r(\bv) = \bA_{\sigma} \bh^{(O)} + \ba_{\sigma} \in \mathbb{R}^r,
    \label{eq:output_heads}
\end{equation}
where $\bA_{\mu}, \bA_{\sigma} \in \mathbb{R}^{r \times d_O}$ and $\ba_\mu$, $\ba_\sigma\in\mathbb{R}^r$ are learnable parameters, and $d_O$ is the width of the final hidden layer. The corresponding standard deviation vector is obtained by applying the element-wise soft-plus or exponential transformation to ensure positivity defined as $\bsigma_r(\bv) = \exp \left\{\frac{1}{2}\br_r(\bv)\right\}\in \mathbb{R}_{+}^r$ where the exponential is applied element-wise ensuring that every component of $\bsigma_r(\bv)$ is strictly positive. We denote the resulting Gaussian parameter map compactly as
$\mathcal G_r(\bv;\boldsymbol\eta) = \left\{\bmu_\bet(\bv), \bsigma_\bet(\bv) \right\},$ where $\bet$ collects all hidden-layer weights, biases, and output-head parameters of the FCNN.

\noindent\underline{\textbf{Coefficient-specific 1D convolutional networks.}} Since $\by$ consists of scalar responses at $N$ spatial locations, we treat it as a one-dimensional sequence and process it using a \emph{one-dimensional convolutional neural network} (1D-CNN). This allows each encoder to extract spatially local features from the response without requiring the spatial coordinates as explicit inputs. Specifically, let $\pi$ denote a fixed ordering of the observed spatial locations, constructed once from the observed location coordinates and retained throughout model fitting. Define the sequentially ordered response vector $\by^{\mathrm{seq}} = \lbrace y(\bs_{\pi(1)}),\ldots, y(\bs_{\pi(N)}) \rbrace \in\mathbb R^{1\times N}.$ For each $j=0,...,J$, the 1D-CNN of the $j$th encoder processes $\by^{\mathrm{seq}}$ through $L$ convolutional layers. Setting $\bH_j^{(0)}=\by^{\mathrm{seq}}$, the feature maps are computed recursively as
\begin{equation}
\bH_j^{(\ell)}
=
\rho
\left(
\bW_{j\ell}^{C}
*
\bH_j^{(\ell-1)}
+
\bb_{j\ell}^{C}
\right),
\qquad
\ell=1,\ldots,L,
\label{eq:feature}
\end{equation}
where $*$ denotes one-dimensional convolution, $\bW_{jl}^C$ and $\bb_{jl}^C$
are the convolution filter weights and biases at layer $l$, and $\rho(\cdot)$ is an element-wise nonlinear activation function (e.g., ReLU). The output of the final convolutional layer,  $\bH_j^{(L)}\in\mathbb R^{C_L\times T_L}$, where $C_L$ is the number of output channels and $T_L$ is the sequence length after convolution, is then reduced to a fixed-length feature vector using \emph{adaptive average pooling} (AAP) over a target sequence length $L_{\textrm{pool}}$
\begin{equation}
\bc_j
=
\operatorname{vec}
\left[
\operatorname{AAP}{L{\mathrm{pool}}}
\lbrace
\bH_j^{(L)}
\rbrace
\right]
\in
\mathbb R^{C_L L_{\mathrm{pool}}}.
\label{eq:cnn}
\end{equation}
Crucially, the dimension of $\bc_j$ is $C_L L_{\mathrm{pool}}$, which is independent of the number of observed locations $N$. This ensures that the downstream fully connected layers have fixed input dimensions regardless of dataset size.

\noindent\underline{\textbf{Network parameterizations for coefficient-specific encoders and decoders, and shared encoder.}}
Using the convolutional feature vector $\bc_j$ defined in~\eqref{eq:cnn} as input to the fully connected Gaussian parameter map $\mathcal{G}$, the variational parameters of the coefficient-specific encoders, the shared encoder, and the coefficient-specific decoders are parameterized as follows:
\begin{equation}
\begin{aligned}
    \left\{ \bmu_{\bphi_j}(\by),\, \bsigma_{\bphi_j}(\by) \right\} 
    &= \mathcal{G}_{d_j}\!\left(\bc_j;\, {\boldsymbol \phi}_j\right), \qquad j = 0, \ldots, J, \\[4pt]
    \left\{ \bmu_{\bphi_g}(\bz_g),\, \bsigma_{\bphi_g}(\bz_g) \right\} 
    &= \mathcal{G}_{d_g}\!\left(\bz_g;\, \bphi_g\right), \\[4pt]
    \left\{ \bmu_{\btheta_j}(\bgamma_j),\, \bsigma_{\btheta_j}(\bgamma_j) \right\} 
    &= \mathcal{G}_{K_j}\!\left(\bgamma_j;\, \btheta_j\right), \qquad j = 0, \ldots, J,
\end{aligned}
\label{eq:encoder_decoder_networks}
\end{equation}
where $\bphi_j$ denotes the FCNN parameters of the $j$th coefficient-specific encoder (applied to the pooled convolutional feature $\bc_j$), $\bphi_g$ denotes the parameters of the shared encoder (applied to the concatenated latent vector $\bz_g$), and $\btheta_j$ denotes the parameters of the $j$th coefficient-specific decoder (applied to the combined latent input $\bgamma_j = (\bz_j^\top,\, \bz_g^\top)^\top$).

Let $\bPhi=\{\bphi_1,...,\bphi_J\}$ be the set of all coefficient specific encoder parameters and $\bTheta=\{\btheta_1,...,\btheta_J\}$ be the coefficient-specific decoder parameters. The complete set of parameters of the GeoVAE model is 
$\boldsymbol{\Theta} \cup \bPhi \cup \left\{ \bphi_g, \tau^2 \right\}.$ All parameters are estimated jointly through the variational objective described in Section~\ref{sec:elbo}.


\subsubsection{Joint Variational Objective}\label{sec:elbo}

We assign standard Gaussian priors to the coefficient-specific and shared latent variables $p(\bz_j)=N(\bzero,\bI_{d_j}),~ j=0,\ldots,J,~p(\bz_g)=N(\bzero,\bI_{d_g})$. These priors act as regularizers, encouraging compact and well-structured latent representations while preventing the encoders from memorizing the input data. The joint variational distribution can be written as
\begin{equation}
q_{\bPhi}(\bz_0,\ldots,\bz_J,\bz_g\mid \by)
=
q_{\bphi_g}(\bz_g\mid \bz)
\prod_{j=0}^{J}
q_{\bphi_j}(\bz_j\mid \by).
\label{eq:joint_variational}
\end{equation}
This factorization reflects the two-level latent structure of the model. At the lower level, each coefficient-specific encoder $q_{\bphi_j}$ maps the observed response $\by$ directly to its own local latent code $\bz_j$. At the upper level, the shared encoder $q_{\bphi_g}$ takes the collection of local codes $\bz$ as input and produces the global latent code $\bz_g$, which captures variation that is common across all coefficient functions. Conditional on the local codes, the global code is independent of the raw data $\by$, which is consistent with the generative structure of the model.

All model components are estimated jointly by maximizing a variational lower bound on the log of the likelihood in Equation~\eqref{eq:hier_observation}. Equivalently, we minimize the following negative weighted evidence lower bound (ELBO)
\begin{align}
&\mathcal{L}(\bPhi,\bTheta,\bphi_g,\tau^2)
=\;
\mathbb{E}_{q_{\bPhi}}
\mathbb{E}_{\prod_{j=0}^{J}
p_{\btheta_j}(\balpha_j\mid \bz_j,\bz_g)}
\left[
\frac{1}{2\tau^2}
\left\|
\by-\sum_{j=0}^{J}\bD^{(j)} \balpha_j
\right\|_2^2
\right]
+
\lambda_{\mathrm{local}}
\sum_{j=0}^{J}
\operatorname{KL}
\left\{
q_{\bphi_j}(\bz_j\mid \by)
\,
\|\,N(\bzero,\bI_{d_j})
\right\}\nonumber\\
&\qquad\qquad+
\lambda_{\mathrm{global}}
\mathbb{E}_{
\prod_{j=0}^{J}q_{\bphi_j}(\bz_j\mid \by)
}
\left[
\operatorname{KL}
\left\{
q_{\bphi_g}(\bz_g\mid \bz)
\,
\|\,N(\bzero,\bI_{d_g})
\right\}
\right],
\label{eq:hierarchical_elbo}
\end{align}
up to terms that do not depend on the model parameters. The first term in (\ref{eq:hierarchical_elbo}) is the response \emph{reconstruction loss}. It requires the coefficient-specific basis coefficients generated from the hierarchical latent representation to jointly reconstruct the observed response through the original SVC model. Thus, although separate encoders and decoders are used for different coefficient functions, the branches are not estimated independently; they are trained jointly through the common response likelihood.

The second term, referred to as the \emph{local KL regularization}, penalizes each coefficient-specific posterior $q_{\bphi_j}(\bz_j|\by)$ for deviating from its standard Gaussian prior. This encourages each local latent space to be smooth and regularly structured, preventing individual encoders from learning degenerate or overly complex representations. The weight $\lambda_{\mathrm{local}}$ controls the strength of this regularization. The third term applies the same type of regularization to the shared latent variable $\bz_g$. Because $\bz_g$ is itself a function of the stochastic local latent variables $\bz$, its KL divergence must be averaged over the distribution of those local codes, which explains the expectation appearing in this term. The weight $\lambda_{\mathrm{global}}$ controls how strongly the shared representation is regularized relative to the local ones.


For a diagonal Gaussian distribution $q=N\{\bmu,\operatorname{diag}(\bsigma^2)\}$, with $\bmu=(\mu_1,..,\mu_R)^\top$ and $\bsigma=(\sigma_1,...,\sigma_R)^\top$, the KL divergence from the standard Gaussian distribution has the closed-form expression
\begin{equation}
\operatorname{KL}
\left\{
q\,\|\,N(\bzero,\bI)
\right\}
=
\frac{1}{2}
\sum_{r}
\left\{
\mu_r^2+\sigma_r^2-\log(\sigma_r^2)-1
\right\}.
\label{eq:hier_kl}
\end{equation}
Using this closed-form expression avoids the need for Monte Carlo estimation of the KL terms, which reduces variance in the gradient estimates and stabilizes training.


\subsubsection{Estimation of the Spatially Varying Coefficients}

After training, uncertainty in the estimated coefficient surfaces is obtained by Monte Carlo sampling through both levels of the latent representation and the coefficient-specific decoders. For $m=1,\ldots,M$, we first independently generate $\bz_j^{(m)} \sim q_{\phi_j}(\bz_j\mid \by),~ j=0,\ldots,J.$ We then construct $\bz =
( \bz_0^{(m)\top},
\ldots, \bz_J^{(m)\top})^\top$ and generate the shared representation $\bz_g^{(m)} \sim q_{\phi_g}(\bz_g\mid \bz^{(m)})$

Conditional on the coefficient-specific and shared latent draws, we generate $\balpha_j^{(m)} \sim p_{\theta_j}
(\balpha_j\mid \bz_j^{(m)},\bz_g^{(m)}),~ j=0,\ldots,J.$ The corresponding Monte Carlo realization of the $j$th coefficient surface is $\bbeta_j^{(m)}(\bu) = \bB_j(\bu)^\top \balpha_j^{(m)}.$ The estimated coefficient surface is summarized by
\begin{equation}
\widehat{\beta}_j(\bu)
=
\frac{1}{M}
\sum_{m=1}^{M}
\bB_j(\bu)^\top \balpha_j^{(m)},
\qquad
j=0,\ldots,J.
\label{eq:beta_mean_hier}
\end{equation}






\subsubsection{Prediction and Conformal Calibration at New Spatial Locations}
\label{sec:prediction}

Let $\{\bu_\ell^*\}_{\ell=1}^{n^*}$ denote $n^*$ locations where prediction is sought. For the $j$th coefficient, define a prediction design matrix $\bD^{(j)*} \in \mathbb{R}^{n^* \times K_j}$ with rows
$\left[\bD^{(j)*}\right]_{\ell\cdot} = x_j(\bu_\ell^*)\,\bB_j(\bu_\ell^*)^\top, \: \ell = 1, \ldots, n^*.$
Given the $m$th Monte Carlo samples $\{\balpha_j^{(m)}\}$, for each $j$,
$\beta_j^{(m)}(\bu_\ell^*) = \bB_j(\bu_\ell^*)^\top \balpha_j^{(m)},$ 
$\bmu_*^{(m)} = \sum_{j=0}^{J} \bD^{(j)*}\balpha_j^{(m)}.$
Sample the predictive response:
\begin{equation}
    \by_*^{(m)} = \bmu_*^{(m)} + \bepsilon_*^{(m)}, \qquad
    \bepsilon_*^{(m)} \sim \mathcal{N}_{n^*}(\bzero, \tau^2\bI_{n^*}).
    \label{eq:predictive_draw_hier}
\end{equation}
The posterior predictive mean at new locations is
$\widehat{\mathbb{E}}\left(\by_* \mid \by \right) = \frac{1}{M} \sum_{m=1}^{M} \bmu_*^{(m)}.$ For a nominal coverage level $1-\gamma$, the uncalibrated posterior predictive interval at $\bu_l^*$ is obtained as the empirical $\gamma/2$ and $1-\gamma/2$ quantiles of predictive samples $\{y_{*, \ell}^{(m)}\}_{m=1}^M$, denoted $L_l^*$ and $U_l^*$ respectively.

\noindent\underline{\textbf{Split conformal calibration.}} While the Monte Carlo predictive intervals described above capture model uncertainty propagated through the latent representation and decoders, they are not guaranteed to achieve exact frequentist coverage in finite samples. To address this, we apply a \emph{split conformal prediction} procedure~\citep{vovk2005algorithmic, lei2018distribution} to calibrate the response-level predictive intervals.

The procedure works as follows. We set aside a held-out calibration set $\mathcal{I}_{\mathrm{cal}}$ of size $n_{\mathrm{cal}}$ from the training data. The GeoVAE model is estimated on the remaining observations, without using the calibration responses. For each calibration observation $i\in\mathcal{I}_{\mathrm{cal}}$, we compute the uncalibrated posterior predictive limits $L_i^{\mathrm{cal}}$ and $U_i^{\mathrm{cal}}$ using the Monte Carlo procedure described above. We then compute a \emph{nonconformity score} for each calibration observation, measuring the degree to which the uncalibrated interval fails to cover the true response:
\begin{align*}
R_i =
\max \left\{L_i^{\mathrm{cal}} - y_i,\ y_i - U_i^{\mathrm{cal}},\ 0\right\},
\quad i \in \mathcal{I}_{\mathrm{cal}}.
\end{align*}
A score of $R_i=0$ indicates that the uncalibrated interval already covers the true response $y_i$; a positive score measures the extent of the violation. The conformal correction threshold is then computed as the $k$th order statistic of the calibration scores, where
\[
k = \min \left\{ \left\lceil(n_{\mathrm{cal}}+1)(1-\gamma) \right\rceil,\, n_{\mathrm{cal}} \right\}, \qquad \widehat{q}_{1-\gamma} = R_{(k)}.
\]
Finally, the conformalized predictive interval at each new location $\bu_l^*$ is obtained by symmetrically expanding the uncalibrated interval by the conformal correction $\widehat{q}_{1-\gamma}$
\begin{equation}
    \widehat{\mathcal{C}}_{1-\gamma}(\bu_\ell^*) = \left[ L_\ell^* - \widehat{q}_{1-\gamma},\, U_\ell^* + \widehat{q}_{1-\gamma} \right].
    \label{eq:conformal_predictive_interval}
\end{equation}
The standard split conformal coverage guarantee assumes exchangeability of the calibration and test observations, which is not satisfied for spatially dependent data. However, if the model is well-specified, the errors are approximately spatially uncorrelated after conditioning on the estimated coefficient surfaces. This can be assessed via Moran's I or a variogram of the residuals. In such cases, the nonconformality scores $R_i$, which are functions of these residuals, are approximately exchangeable, and the conformal coverage guarantee holds approximately.


\section{Theoretical Study}\label{sec:theoretical_study}
We establish consistency of the GeoVAE estimator for the spatially varying coefficient surfaces and for out-of-sample prediction. For mathematical tractability, we analyze the framework without the shared encoder, retaining
only the coefficient-specific encoders and decoders, so that
$\bPhi = \{\bphi_0, \ldots, \bphi_J\}$, $\bTheta = \{\btheta_0, \ldots, \btheta_J\}$ and $\bgamma_j = \bz_j$. Furthermore, we assume that $\tau^2$ is fixed and known. Since the shared encoder hierarchically synthesizes information from the
coefficient-specific encoders, its inclusion can only improve empirical performance; the consistency results established here therefore remain valid in the presence of the shared encoder.
\subsection{Notations}
Recall from Section~2 that the true data generating process is
\begin{equation*}
y(\bu_i)= \sum_{j=0}^J x_j(\bu_i)\,\beta_j^*(\bu_i)+ \epsilon(\bu_i),
\qquad
\epsilon(\bu_i) \overset{\mathrm{iid}}{\sim} \mathcal{N}(0,\tau^2),
\quad i = 1,\ldots,N,
\end{equation*}
where $\beta_j^*(\cdot)$ denotes the true $j$th spatially varying coefficient function. In vector form, $\by = \boldf^* + \bepsilon$, where $\boldf^* \in \mathbb{R}^N$ has $i$th entry
$f_i^* = \sum_{j=0}^J x_j(\bu_i)\,\beta_j^*(\bu_i)$.
The GeoVAE basis representation approximates each $\beta_j^*(\cdot)$ via representation through the basis functions $\bB_j(\bu)$,
with best-approximation coefficient vector
\begin{equation*}
\balpha_j^{(K_j)}
=
\operatorname*{arg\,min}_{\balpha_j \in \mathbb{R}^{K_j}}
\sup_{\bu \in \mathcal{D}}
\left|\beta_j^*(\bu) - \bB_j(\bu)^\top \balpha_j\right|.
\end{equation*}
The stacked best-approximation vector is
$\balpha^{(K)}
=\bigl(
  \balpha_0^{(K_0)\top},
  \ldots,
  \balpha_J^{(K_J)\top}
\bigr)^\top,$
and the \emph{misspecification bias vector} is
\begin{equation}\label{eq:misclass}
\bg^* = \boldf^* - \bD\balpha^{(K)} \in \mathbb{R}^N,
\qquad
g_i^* = \sum_{j=0}^J x_j(\bu_i)
\Bigl(
  \beta_j^*(\bu_i) - \bB_j(\bu_i)^\top\balpha_j^{(K_j)}
\Bigr).
\end{equation}
The GeoVAE estimated coefficient surface is
\begin{equation*}
\hat{\beta}_j(\bu)
=
\frac{1}{M}\sum_{m=1}^M \bB_j(\bu)^\top \balpha_j^{(m)}
=
\bB_j(\bu)^\top \bar{\balpha}_j,
\end{equation*}
where
$\bar{\balpha}_j
=
M^{-1}\sum_{m=1}^M \balpha_j^{(m)},$ and
$\balpha_j^{(m)} \sim p_{\btheta_j}\bigl(\balpha_j \mid \bgamma_j^{(m)}\bigr)$,
are draws from the $j$th coefficient-specific decoder. We define
$\tilde{\balpha}_j
=
\mathbb{E}_{q_{\hat\bPhi},\, p_{\hat\btheta_j}}
\!\left[\balpha_j^{(1)}\right]$ as
the posterior mean of the decoded basis coefficients under the trained parameters
$(\hat\bPhi,\hat\bTheta)$. The global design matrix $\bD \in \mathbb{R}^{N \times K}$ and the per-coefficient design
matrices $\bD^{(j)} \in \mathbb{R}^{N \times K_j}$ are defined as in Section~\ref{sec:spatial_basis_expansion}.

\subsection{Assumptions}

\noindent\textbf{Assumption 1 (Spatial domain and observation locations).}
The domain $\mathcal{D} \subset \mathbb{R}^d$ is compact. The observation locations $\{\bu_i\}_{i=1}^N$ are drawn i.i.d.\ from a probability measure $\nu$ on $\mathcal{D}$ that
is absolutely continuous with respect to Lebesgue measure, with density satisfying
$\inf_{\bu \in \mathcal{D}} f_\nu(\bu) \geq c_\nu > 0$.

\medskip
\noindent\textbf{Assumption 2 (Sobolev smoothness and basis approximation quality).}
For each $j = 0,1,\ldots,J$, the true coefficient function $\beta_j^*(\cdot) \in \mathcal{H}_j$,
where $\mathcal{H}_j$ is a Sobolev space of smoothness $s_j > d/2$ over $\mathcal{D}$.
The basis system $\{B_{jk}(\cdot)\}_{k=1}^{K_j}$ satisfies the approximation property:
there exist $\balpha_j^{(K_j)} \in \mathbb{R}^{K_j}$ and a constant $C_j > 0$ such that
\[
\sup_{u \in \mathcal{D}}
\bigl|\beta_j^*(\bu) - \bB_j(\bu)^\top \balpha_j^{(K_j)}\bigr|
\;\leq\;
C_j K_j^{-s_j/d}.
\]

\medskip
\noindent\textbf{Assumption 3 (Uniform boundedness of basis functions).}
There exists $C_B < \infty$ such that
$\sup_{\bu \in \mathcal{D}} |B_{jk}(\bu)| \leq C_B,$
for all $j$ and $k$.

\medskip
\noindent\textbf{Assumption 4 (Design matrix eigenvalue condition).}
Let $\bSigma_N = N^{-1} \bD^\top \bD$. There exist constants
$0 < \tilde\lambda_1 \leq \tilde\lambda_2 < \infty$, not depending on $N$, such that
almost surely
\[
\tilde\lambda_1 \leq \lambda_{\min}(\bSigma_N)
\;\leq\;
\lambda_{\max}(\bSigma_N) \leq \tilde\lambda_2.
\]

\medskip
\noindent\textbf{Assumption 5 (Bounded parameters).}
All encoder and decoder parameters are contained in a compact set:
$\|\bphi_j\|_2 \leq R_\phi
\quad\text{and}\quad
\|\btheta_j\|_2 \leq R_\theta,$
for all $j=0,...,J$ and all $N$.

\medskip
\noindent\textbf{Assumption 6 (Reference configuration).}
There exists a parameter configuration $(\bPhi^0, \bTheta^0)$ with
$\|\bphi_j^0\|_2 \leq R_\phi$ and $\|\btheta_j^0\|_2 \leq R_\theta$ such that the following
three conditions hold:

\smallskip
\noindent\emph{\textbf{(A6.1)}} For each $j = 0,\ldots,J$, the reference decoder mean and variance satisfy
$\bmu_{\btheta_j^0}(\bgamma_j)= \balpha_j^{(K_j)},$
$\sup_{\bgamma_j \in \mathcal{C}}
\bigl\|\bsigma_{\theta_j^0}(\bgamma_j)\bigr\|_2
\;\leq\; \delta_N,$
where $\mathcal{C} \subset \mathbb{R}^{d_j}$ is a compact set containing the support of
$\bgamma_j$ under $q_{\bPhi^0}$, and $\delta_N \to 0$ as $N \to \infty$. 

\smallskip
\noindent\emph{\textbf{(A6.2)}} The reference encoder $\bPhi^0$ satisfies
$\mathrm{KL}\!\left\{
  q_{\bphi_j^0}(\bz_j \mid \by)\,\Big\|\,\mathcal{N}(\bzero, \bI_{d_j})
\right\}
\;\leq\; C_{\mathrm{KL}} < \infty,\:\:j=0,...,J,$
uniformly over the training data, for some constant $C_{\mathrm{KL}} > 0$.

\smallskip
\noindent\emph{\textbf{(A6.3)}} The trained parameters $(\hat\bPhi, \hat\bTheta)$ obtained by
minimizing the GeoVAE ELBO objective $\mathcal{L}(\bPhi,\bTheta)$ satisfy $\frac{1}{N}\mathcal{L}(\hat\bPhi, \hat\bTheta)
\;\leq\;
\frac{1}{N}\mathcal{L}(\bPhi^0, \bTheta^0) + \eta_N,$
where $\eta_N \geq 0$ satisfies $\eta_N \to 0$ as $N \to \infty$.

\medskip
\noindent\textbf{Assumption 7 (Bounded covariates).} The covariates are bounded over the domain, i.e., $\sup_{\bu\in\mathcal{D}}|x_j(\bu)|<C_x<\infty$, for $j=0,...,J.$

\noindent The seven assumptions underlying the consistency theory of GeoVAE are standard regularity conditions that are broadly satisfied in spatial statistics applications and are individually mild in nature. Compactness of the domain $\mathcal{D}$ in \textbf{Assumption 1} is a standard requirement in nonparametric function estimation over spatial regions and is satisfied by any bounded geographic study area, while the positivity condition on the sampling density ensures that no subregion of the domain is systematically unobserved, preventing the estimator from extrapolating into regions with no data support. \textbf{Assumption 2} imposes that each coefficient function is at least continuous, which is a minimal requirement for spatially varying effects to be interpretable \citep{van2011information}, and the approximation rate is the classical rate achieved by B-splines, radial basis functions, and thin-plate splines for functions in Sobolev spaces \citep{shen2015adaptive}, making this assumption verifiable through the choice of basis. \textbf{Assumption 3} is immediately satisfied by all standard basis systems including B-splines and compactly supported radial basis functions, since their evaluations are bounded by construction over a compact domain $\mathcal{D}$. \textbf{Assumption 4} imposes a bounded eigenvalue condition on the empirical design matrix which ensures that the design is neither rank-deficient nor ill-conditioned asymptotically. \textbf{Assumption 5} restricts the encoder and decoder parameters to a compact set by bounding their Euclidean norms. This ensures that the neural network outputs remain bounded, which is necessary for the law of large numbers arguments used in the consistency proofs. \textbf{Assumption 6} is the key reference configuration assumption, which postulates the existence of a parameter configuration under which the decoder mean recovers the best-approximation coefficient vector, the decoder variance shrinks to zero, the encoder KL divergences remain bounded, and the trained parameters achieve a loss no larger than that of the reference configuration up to a vanishing slack $\eta_N$. The first two conditions in Assumption 6 are guaranteed by the universal approximation capacity \citep{hanin2019universal} of sufficiently wide and deep neural networks, which can approximate any continuous target function to arbitrary precision; the bounded KL condition reflects the fact that a well-initialized encoder does not need to deviate far from the prior to encode relevant information. Since $(\hat\bTheta, \hat\bPhi)$ is the minimizer of the ELBO over the compact parameter space, by definition
$\frac{1}{N}\mathcal{L}(\hat\bTheta, \hat\bPhi)\leq \frac{1}{N}\mathcal{L}(\bTheta^0, \bPhi^0)$,
which means \textbf{(A6.3)} holds with $\eta_N=0$ if exact minimization is achieved. In practice, optimization via stochastic gradient descent finds an approximate minimizer, so a small slack $\eta_N\geq 0$ is permitted, and the condition $\eta_N\rightarrow 0$ simply requires that this optimization error vanishes asymptotically, which is a standard assumption in the sieve estimation and neural network consistency literature \citep{barron1994approximation}. Collectively, these assumptions impose no more structure than is standard in the nonparametric regression and neural network consistency literatures, and together they provide a complete and coherent set of conditions under which the GeoVAE estimator is consistent for the true spatially varying coefficient surfaces.

\subsection{Main Consistency Results}
This section discusses the consistency results. The first results discuss consistency of coefficient surface estimation. The second result is on prediction accuracy by the proposed model. The proofs of both results can be found in the Appendix.
\begin{theorem}\label{lem:param_consistency}
Let $\widehat{\beta}_j(\bu)=\bB_j(\bu)^\top\bar{\balpha}_j$, where $\bar{\balpha}_j=\frac{1}{M}\sum_{m=1}^M \balpha_j^{(m)}$, with $\balpha_j^{(m)}$ drawn from the decoder 
$p_{\hat{\btheta}_j}(\balpha_j|\bgamma_j^{(m)})$.
Under Assumptions 1--7, as $N\rightarrow\infty$, with $K_j=K_{j,N}=O(N^{d/(2s_j+d)})$ and $M=M_N\geq N^{2d/(2s_j+d)}$, we have 
\begin{align*}
 E_{\nu}[(\widehat{\beta}_j(\bu)-\beta_j^*(\bu))^2]\stackrel{P}{\rightarrow} 0,\:\:\mbox{for}\:\:j=1,...,J. 
\end{align*}
\end{theorem}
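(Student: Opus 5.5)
The plan is to split the error into three pieces and bound each: a basis-approximation bias, a Monte Carlo error, and an estimation error of the decoded mean. Writing $\widehat\beta_j(\bu)-\beta_j^*(\bu) = \bB_j(\bu)^\top(\bar\balpha_j-\tilde\balpha_j) + \bB_j(\bu)^\top(\tilde\balpha_j-\balpha_j^{(K_j)}) + \{\bB_j(\bu)^\top\balpha_j^{(K_j)}-\beta_j^*(\bu)\}$, we use $(a+b+c)^2\le 3(a^2+b^2+c^2)$ and integrate against $\nu$.

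The third term is at most $C_j^2K_j^{-2s_j/d}\to 0$ by Assumption 2. For the first term, Assumption 3 gives $\|\bB_j(\bu)\|_2^2\le C_B^2K_j$. Conditionally on the trained parameters, the draws $\balpha_j^{(m)}$ are i.i.d.\ with mean $\tilde\balpha_j$, so $E\|\bar\balpha_j-\tilde\balpha_j\|_2^2 = M^{-1}\operatorname{tr}\operatorname{Cov}(\balpha_j^{(1)})$. By Assumption 5 and the fact that Gaussian latent draws with bounded-parameter networks have bounded second moments, this trace is $O(K_j)$. The first term is therefore $O_P(K_j^2/M)$, which vanishes because $M\ge N^{2d/(2s_j+d)}\gtrsim K_j^2$; we may need a slightly larger $M$ or a log factor, handled by Markov's inequality.

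The main term is the second. Since $E_\nu[(\bB_j^\top\bv)^2]=\bv^\top E_\nu[\bB_j\bB_j^\top]\bv$, it suffices to show $\|\tilde\balpha-\balpha^{(K)}\|_2\to 0$ in probability, with the population Gram matrix eigenvalues controlled. We relate the population Gram matrix to $\bSigma_N$ via Assumption 4, together with Assumption 7, Assumption 1, and a matrix concentration argument; alternatively we bound directly by $C_B^2K_j\|\cdot\|^2$ if rates allow. To control $\tilde\balpha$ we use the ELBO. By Jensen and a bias–variance decomposition, the reconstruction term satisfies
\[\frac{1}{2\tau^2 N}E\|\by-\bD\balpha\|^2\ge \frac{1}{2\tau^2N}\|\by-\bD\tilde\balpha\|^2.\]
For the reference configuration, (A6.1) and (A6.2) give
\[\frac1N\mathcal L(\bPhi^0,\bTheta^0)\le \frac{1}{2\tau^2N}\bigl(\|\by-\bD\balpha^{(K)}\|^2+O(N\delta_N^2)\bigr)+\frac{(J+1)\lambda_{\rm local}C_{\rm KL}}{N},\]
where the decoder variance contribution uses $\lambda_{\max}(\bSigma_N)\le\tilde\lambda_2$. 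The KL terms at $\hat\bPhi$ are nonnegative and can be dropped. Combining with (A6.3) yields
\[\frac1N\|\by-\bD\tilde\balpha\|^2\le\frac1N\|\by-\bD\balpha^{(K)}\|^2+o(1).\]
Expanding with $\by=\bD\balpha^{(K)}+\bg^*+\bepsilon$ gives
\[\frac1N\|\bD(\tilde\balpha-\balpha^{(K)})\|^2\le \frac2N(\bg^*+\bepsilon)^\top\bD(\tilde\balpha-\balpha^{(K)})+o(1).\]
By Assumption 4 the left side is at least $\tilde\lambda_1\|\tilde\balpha-\balpha^{(K)}\|^2$.

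The hardest step is the cross term, because $\tilde\balpha$ depends on $\bepsilon$. I would bound it by Cauchy–Schwarz using the projection: $|\bepsilon^\top\bD\bv|\le\|P_{\bD}\bepsilon\|\,\|\bD\bv\|$ with $E\|P_{\bD}\bepsilon\|^2=\tau^2K$. This gives $\frac1N|\bepsilon^\top\bD\bv|\le \sqrt{K/N}\,O_P(1)\,\|\bv\|\sqrt{\tilde\lambda_2}$. For the bias part, $\|\bg^*\|^2/N\le (J+1)^2C_x^2\max_jC_j^2K_j^{-2s_j/d}\to0$ by Assumptions 2 and 7. Solving the resulting quadratic inequality in $\|\bv\|$ gives
\[\|\tilde\balpha-\balpha^{(K)}\|=O_P\bigl(\sqrt{K/N}+K^{-s/d}+\sqrt{\eta_N+\delta_N^2+1/N}\bigr).\]
Each piece tends to zero under $K_j=O(N^{d/(2s_j+d)})$. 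If the $\sqrt{K_j}$ loss from crude bounding of the population Gram matrix is needed, the rates still close, since $K/N\cdot K\to0$ when $s_j>d/2$. Combining the three terms proves the claim.
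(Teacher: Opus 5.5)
Your decomposition and ELBO argument match the paper's proof: the same three terms, the same comparison with the reference configuration of Assumption 6, Jensen's inequality, and the lower eigenvalue bound from Assumption 4. Several of your steps are tighter than the paper's:
\begin{itemize}
\item keeping $\|\by-\bD\balpha^{(K)}\|^2$ exact makes $\|\bg^*+\bepsilon\|^2$ cancel;
\item your projection bound on $\bepsilon^\top\bD\bv$ is valid even though $\tilde\balpha$ depends on $\bepsilon$;
\item your KL contribution is correctly $O(1/N)$.
\end{itemize}
So your bound $\|\tilde\balpha-\balpha^{(K)}\|^2=O_P(K/N+\sum_l K_l^{-2s_l/d}+\delta_N^2+\eta_N+1/N)$ is sound.

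\textbf{The gap: from this bound to $E_\nu[T_2(\bu)^2]\to0$.} Your primary route fails for $j\ge1$. The $j$th diagonal block of $\bSigma_N$ is $N^{-1}\sum_i x_j(\bu_i)^2\bB_j(\bu_i)\bB_j(\bu_i)^\top$. Assumption 4 plus matrix concentration therefore controls $E_\nu[x_j^2\bB_j\bB_j^\top]$, not $E_\nu[\bB_j\bB_j^\top]$. Comparing the two requires $x_j^2$ bounded below. Assumption 7 bounds $x_j$ above and Assumption 1 bounds the density below; both point the wrong way. (The route would work for the intercept, which the theorem excludes.)

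Your fallback, $E_\nu[T_2^2]\le C_B^2K_j\|\tilde\balpha_j-\balpha_j^{(K_j)}\|^2$, is exactly the paper's route. It yields $O_P\bigl(K_j(K/N+\sum_lK_l^{-2s_l/d}+\delta_N^2+\eta_N)\bigr)$, and you only checked the $K_jK/N$ piece.
\begin{itemize}
\item The bias piece vanishes only if every $K_l\asymp N^{d/(2s_l+d)}$, not merely $O(\cdot)$.
\item $K_j(\delta_N^2+\eta_N)\to0$ does not follow from (A6.1) and (A6.3). These give $\delta_N,\eta_N\to0$ at no rate, while $K_j\to\infty$.
\end{itemize}
The paper's proof asserts this limit without checking it; its version even carries $\Lambda_N=(J+1)\lambda_{\mathrm{local}}C_{\mathrm{KL}}$ without the factor $1/N$. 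To close the argument you need an extra hypothesis. Examples are $K_{j,N}(\delta_N^2+\eta_N)\to0$ together with $K_{l,N}\asymp N^{d/(2s_l+d)}$, or a uniform bound on $\lambda_{\max}(E_\nu[\bB_j\bB_j^\top])$ (e.g.\ via $x_j^2\ge c>0$).

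\textbf{A smaller point about $T_1$.} With $K_j=O(N^{d/(2s_j+d)})$ and $M_N\ge N^{2d/(2s_j+d)}$, your $O_P(K_j^2/M)$ is only $O_P(1)$, not $o_P(1)$; the paper makes the same claim. You do not need a larger $M$. Write $E_{\mathrm{MC}}[T_1(\bu)^2]=M^{-1}\bB_j(\bu)^\top\operatorname{Cov}(\balpha_j^{(1)})\bB_j(\bu)\le M^{-1}C_B^2K_j\,\lambda_{\max}\{\operatorname{Cov}(\balpha_j^{(1)})\}$. Then use the operator norm of the covariance instead of its trace, since it is $O(1)$ rather than $O(K_j)$:
\begin{itemize}
\item the mean part satisfies $E\|\bmu_{\hat\btheta_j}\|^2=O(1)$, because Assumption 5 bounds the Frobenius norm of the entire output head;
\item the variance part is $\max_r E[\sigma_{\hat\btheta_j,r}^2]$, which is bounded under the same moment bound you already invoke.
\end{itemize}
This gives $E_\nu[T_1^2]=O_P(K_j/M)\to0$ with $M_N$ as stated.
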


\begin{theorem}\label{lemma: prediction}
Let $\bu^*$ be a new spatial location drawn independently from $\nu$. Under Assumptions~1--7, with $K_{j,N} = O\!\bigl(N^{d/(2s_j+d)}\bigr)$ and
$M_N \geq N^{2d/(2s_j+d)}$, we have as $N\rightarrow\infty$
\[
E_\nu\!\left[\bigl(\hat{y}(\bu^*) - \mu^*(\bu^*)\bigr)^2\right]
\stackrel{P}{\rightarrow} 0,
\]
where
$\mu^*(\bu^*) = \sum_{j=0}^J x_j(\bu^*)\,\beta_j^*(\bu^*)$
is the true conditional mean and $\hat{y}(\bu^*) = \sum_{j=0}^J x_j(\bu^*)\,\hat{\beta}_j(\bu^*)$ is the estimated conditional mean.
\end{theorem}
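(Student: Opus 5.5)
The statement to prove is Theorem~\ref{lemma: prediction}. I will reduce it to the coefficient-level consistency of Theorem~\ref{lem:param_consistency}, together with a matching statement for the intercept $j=0$. The starting point is the decomposition
\[
\hat y(\bu^*)-\mu^*(\bu^*)=\sum_{j=0}^J x_j(\bu^*)\bigl(\hat\beta_j(\bu^*)-\beta_j^*(\bu^*)\bigr).
\]
By Cauchy--Schwarz and Assumption~7,
\[
E_\nu\bigl[(\hat y(\bu^*)-\mu^*(\bu^*))^2\bigr]\le (J+1)\,C_x^2\sum_{j=0}^J E_\nu\bigl[(\hat\beta_j(\bu)-\beta_j^*(\bu))^2\bigr].
\]
Here the expectation is over $\bu^*\sim\nu$, conditional on the training data and the Monte Carlo draws. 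Since $\bu^*$ is independent of the training sample, the conditional integral is exactly the $L^2(\nu)$ quantity controlled in Theorem~\ref{lem:param_consistency}. A finite sum of $J+1$ terms, each converging to $0$ in probability, converges to $0$ in probability, so the claim follows once every summand does.

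Theorem~\ref{lem:param_consistency} is stated only for $j=1,\dots,J$, so the main step is covering $j=0$. I would rerun its argument with $x_0\equiv1$, which satisfies Assumption~7 trivially; nothing in that argument should use $j\ge1$. The structure I expect is as follows.
\begin{enumerate}
\item Split each coefficient error as
\[
\hat\beta_j-\beta_j^*=\bB_j^\top(\bar\balpha_j-\tilde\balpha_j)+\bB_j^\top(\tilde\balpha_j-\balpha_j^{(K_j)})+\bigl(\bB_j^\top\balpha_j^{(K_j)}-\beta_j^*\bigr).
\]
\item Bound the third term by $C_j^2K_j^{-2s_j/d}\to0$ using Assumption~2.
\item Bound the first term, the Monte Carlo error, by $C_B^2K_j\,\|\bar\balpha_j-\tilde\balpha_j\|_2^2$. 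This is $O_P(K_j/M)\to0$ by bounded decoder variances (Assumption~5) and the choice $M_N\ge N^{2d/(2s_j+d)}\gg K_j$.
\item For the middle term, use the ELBO comparison (A6.3) against the reference configuration (A6.1)--(A6.2). This gives
\[
N^{-1}\bigl\|\bD(\tilde\balpha-\balpha^{(K)})\bigr\|^2\lesssim N^{-1}\|\bg^*\|^2+\delta_N^2+\frac{(J+1)C_{\mathrm{KL}}}{N}+\eta_N+\text{noise cross-term}.
\]
The noise cross-term is $O_P(K/N)$. Assumption~4 then converts this to a bound on $\|\tilde\balpha-\balpha^{(K)}\|^2$, and hence on each block $\tilde\balpha_0-\balpha_0^{(K_0)}$, including the intercept block.
\end{enumerate}

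A cheaper alternative avoids blockwise arguments altogether. Apply the same three-term split directly to $\hat y-\mu^*=\bD^*(\bar\balpha-\balpha^{(K)})-g^*(\bu^*)$, where $\bD^*$ is the design row at $\bu^*$. Use Assumption~4, together with the fact that $E_\nu[\bD^{*\top}\bD^*]$ is approximated by $\bSigma_N$, to pass from the empirical to the population norm.

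I expect the main obstacle to be exactly this empirical-to-population transfer. It needs $\|\bSigma_N-E_\nu\bSigma_N\|_{op}\to0$ with $K_N$ growing. I would try matrix Bernstein, since rows are bounded by $C_xC_B\sqrt K$ under Assumptions~3 and~7 and the locations are i.i.d.\ under Assumption~1. This gives $O_P(\sqrt{K\log K/N})$, which vanishes because $K=O(N^{d/(2s+d)})$ with $s>d/2$. If Theorem~\ref{lem:param_consistency}'s proof already establishes this step, I simply cite it for all $j=0,\dots,J$ and the result follows from the decomposition above.
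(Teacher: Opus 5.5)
Your argument is the paper's proof: expand $\hat y(\bu^*)-\mu^*(\bu^*)=\sum_{j=0}^J x_j(\bu^*)\{\hat\beta_j(\bu^*)-\beta_j^*(\bu^*)\}$, apply Cauchy--Schwarz with Assumption~7, and invoke Theorem~\ref{lem:param_consistency} for every $j=0,\dots,J$. The paper does this for $j=0$ without comment, and your remark that the proof of that theorem never uses $j\ge 1$ is the right justification; the matrix-Bernstein transfer is unnecessary, since that proof controls $T_1,T_2$ through the crude bound $\|\bB_j(\bu)\|^2\le C_B^2K_j$.

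One slip in your rerun sketch: $C_B^2K_j\|\bar\balpha_j-\tilde\balpha_j\|_2^2$ is $O_P(K_j^2/M)$, not $O_P(K_j/M)$, so at the boundary rates $K_{j,N}\asymp N^{d/(2s_j+d)}$, $M_N\asymp N^{2d/(2s_j+d)}$ this bound only gives $O_P(1)$. The paper's own proof of Theorem~\ref{lem:param_consistency} has the same looseness, and it disappears if $M_N$ grows strictly faster than $K_{j,N}^2$.
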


\section{Simulation Study}\label{sec:simulation}

We designed a simulation study to systematically assess the performance of GeoVAE in recovering multiple correlated spatially varying coefficient surfaces and predictive inference across a range of spatial smoothness regimes for true varying coefficient surfaces. GeoVAE is implemented in Python using \texttt{PyTorch}. Each coefficient surface is represented by a tensor-product cubic B-spline basis with eight interior knots per coordinate, yielding 144 basis functions. Coefficient-specific convolutional encoders produce 8-dimensional local latent vectors, which are combined by a global encoder into a 16-dimensional shared representation. Separate decoders output Gaussian means and log-variances for the basis coefficients. The model is trained using full-batch Adam with a learning rate of $10^{-4}$ for up to 3000 epochs, with early stopping based on a held-out validation set comprising $15\%$ of the training data (patience: 250 epochs). Posterior summaries use 500 Monte Carlo draws, and $95\%$ predictive intervals are calibrated using split conformal prediction.


\noindent\underline{\textbf{Competitor.}} As competitors, we consider five methods spanning nonspatial deep generative models, spatial neural networks, scalable Gaussian process approximations, and Bayesian adaptive spline models. First, we include a nonspatial conditional variational autoencoder (CVAE), following the conditional generative framework of \citet{sohn2015learning}, but excluding spatial coordinates and spatial basis representations. This comparison isolates the contribution of explicitly incorporating spatial structure into the proposed GeoVAE. DeepKriging \citep{chen2024deepkriging} represents a spatial deep learning approach that embeds spatial coordinates through multiresolution basis functions and uses a deep neural network for nonlinear spatial prediction. We also consider two scalable Gaussian process competitors. The local approximate Gaussian process (LaGP)  constructs a separate local GP approximation around each prediction location using adaptively selected neighborhoods \citep{gramacy2015local}, whereas the Vecchia GP approximates the joint Gaussian likelihood through a product of low-dimensional conditional distributions \citep{vecchia1988estimation,katzfuss2021general}. Finally, Bayesian adaptive spline surfaces (BASS) provide a fully Bayesian nonparametric regression benchmark based on adaptively selected spline basis functions \citep{francom2020bass}.

\subsection{Simulation Design}
\label{sec:simulation_design}

We consider the spatial domain $\mathcal{D} = [0,1]^2$ and the sample size $N=3000$, and let 
$\{\bu_i = (u_{i1}, u_{i2})^\top : i = 1,\ldots, N\}$ 
denote a regular grid on $\mathcal{D}$. Data are generated from the varying coefficient model \eqref{eq:VCM_basic}, where $\bx(\bu_i) = (x_{i1}, x_{i2}, x_{i3})^\top$ and $\bbeta(\bu_i) = (\beta_1(\bu_i), \beta_2(\bu_i), \beta_3(\bu_i))^\top$. We fix $J=3$ predictor-specific coefficient surfaces. The predictors are generated independently as $x_{ij} \overset{\mathrm{iid}}{\sim} N(0,1)$ for $j=1,2,3$, and the measurement errors are generated as 
$\epsilon_i \overset{\mathrm{iid}}{\sim} N(0,\tau^2)$, independently of the covariates and coefficient processes, with $\tau = 0.25$.

\noindent\underline{\textbf{Simulating the spatially varying intercept $\beta_0(\bu)$.}}
The spatially varying intercept $\beta_0(\bu)$ is generated independently of the slope processes from a 
univariate Gaussian process with a Matern covariance kernel
$\beta_0(\bu) \sim \operatorname{GP}\big\{\mu_0, \sigma_0^2 \mathcal{M}(\|\bu - \bu'\|; \nu_0, \kappa_0)\big\},$
where $\mu_0 = 1$, $\sigma_0 = 0.5$, and $\nu_0 = 1.5$, ensuring that the intercept surface is once mean-square differentiable. The Matérn correlation function is given by
\begin{equation}
\mathcal{M}(h; \nu, \kappa)
= \frac{2^{1-\nu}}{\Gamma(\nu)} (\kappa h)^\nu K_\nu(\kappa h), 
\qquad h = \|\bu - \bu'\|,
\label{eq:matern_correlation}
\end{equation}
where $K_\nu(\cdot)$ denotes the modified Bessel function of the second kind, $\nu > 0$ controls local smoothness, and $\kappa > 0$ governs the rate of spatial correlation decay. We set $\kappa_0 = 11.11$, so that the spatial correlation of the intercept process decreases to 0.10 at a distance of $0.35$. A single realization of 
$\beta_0(\bu)$ is generated and then held fixed across all simulation cases, so that differences in model performance can be attributed primarily to the properties of the predictor-specific coefficient surfaces.

\noindent\underline{\textbf{Simulating the trivariate slope process $\bbeta(\bu)$.}}
The three predictor-specific coefficient surfaces are generated as
\begin{equation}
\bbeta(\bu) 
= \bmu_{\beta} + \bw(\bu),
\qquad
\bmu_{\beta} = (0.60, -0.40, 0.30)^\top,
\label{eq:sim_slope_process}
\end{equation}
where $\bw(\bu) = (w_1(\bu), w_2(\bu), w_3(\bu))^\top$ is a mean-zero multivariate Gaussian process, and the 
marginal standard deviations are fixed at $(\sigma_1, \sigma_2, \sigma_3) = (0.70, 0.55, 0.60)$. We construct 
five simulation cases, each involving the same observation model but different covariance structures for $\bw(\bu)$, in order to assess the impact of spatial smoothness and cross-coefficient dependence on model performance. A brief summary of the simulation scenarios is given in Table~\ref{tab:smoothness_scenarios}.

\noindent\underline{\emph{\textbf{Cases 1--3}: Common smoothness, increasing regularity.}}
In Cases 1--3, the three slope processes are generated together from a GP with a multivariate Matérn covariance kernel given by
\begin{equation}
\operatorname{Cov}\{w_j(\bu), w_k(\bu')\}
= \sigma_j \sigma_k \rho_{jk} \, \mathcal{M}(h; \nu_{jk}, \kappa),
\qquad
\nu_{jk} = \frac{\nu_j + \nu_k}{2},
\qquad j,k \in \{1,2,3\},
\label{eq:parsimonious_multivariate_matern}
\end{equation}
with common inverse-range parameter $\kappa$. They share the same smoothness parameter $\nu=\nu_1=\nu_2=\nu_3$, and the pairwise collocated correlation between any two slope processes is fixed at $\rho_{jk}=0.65$, inducing nonzero dependence among the three predictor-specific coefficient surfaces. To isolate the effect of smoothness from spatial range, $\kappa$ is 
chosen in each case to satisfy
\begin{equation}
\mathcal{M}(r_0; \nu, \kappa) = 0.10,
\qquad r_0 = 0.35,
\label{eq:matched_practical_range}
\end{equation}
so that the spatial correlation of each slope process decays to $0.10$ at distance $r_0$. Thus, the marginal variances, the collocated correlations between processes, and the practical correlation range are fixed across Cases 1--3, and only the smoothness parameter $\nu$ varies, yielding rough (exponential), moderately smooth (once mean-square differentiable), and more smooth (twice mean-square differentiable) coefficient surfaces respectively.

\noindent\underline{\emph{\textbf{Case 4:} Heterogeneous smoothness across coefficient surfaces.}}
Case 4 evaluates the ability of the coefficient-specific branches to recover three correlated predictor effects with 
different smoothness levels. Here, the three slope processes are modeled via the parsimonious multivariate Matérn 
construction,
with a common inverse-range parameter $\kappa$. We set 
$(\nu_1, \nu_2, \nu_3) = (0.5, 1.5, 2.5)$, so that the three coefficient surfaces exhibit distinctly different 
smoothness levels. Define
$c_d(\nu) = \frac{\Gamma(\nu + d/2)}{\pi^{d/2} \Gamma(\nu)}.$
Using a positive definite base correlation matrix $\bR$ with unit diagonal and off-diagonal 
entries $0.65$, the collocated cross-correlations are constructed as
\begin{equation}
\rho_{jk}
= R_{jk}
\frac{\{c_d(\nu_j) c_d(\nu_k)\}^{1/2}}{c_d(\nu_{jk})},
\qquad j,k \in \{1,2,3\}.
\label{eq:valid_cross_correlation}
\end{equation}
With the chosen smoothness parameters, 
the resulting collocated correlations are approximately $\rho_{12} = 0.563$, $\rho_{13} = 0.484$, and 
$\rho_{23} = 0.629$, ensuring that all three coefficient surfaces remain correlated while exhibiting distinct 
degrees of spatial smoothness.

\noindent\underline{\emph{\textbf{Case 5:} Nonstationary dependence.}} To evaluate 
performance under nonstationarity, Case~5 allows the dependence structure among the 
three predictor-specific coefficient surfaces to vary across space. Following the 
spatially varying coregionalization framework of \citet{gelfand2004nonstationary}, 
slope deviations are generated as $\bw(\bu) = \bA(\bu)\bv(\bu)$, where 
$\bv(\bu) = \{v_1(\bu), v_2(\bu), v_3(\bu)\}^\top$ comprises independent, mean-zero, 
unit-variance Mat\'{e}rn Gaussian processes, and the spatially varying loading matrix 
$\bA(\bu)$ governs how these latent processes contribute to each coefficient surface, 
inducing location-dependent correlations among coefficients. Each latent process shares 
the smoothness and range specifications of Case~2, with $\nu = 1.5$ and inverse-range 
parameter $\kappa_{1.5}$ satisfying $\mathcal{M}(0.35; 1.5, \kappa_{1.5}) = 0.10$. 
To construct $A(\bu)$, let $\bL\bL^\top = \bR$, where $\bR$ has unit diagonal and 
constant off-diagonal entries of $0.65$, and define 
$\bB(\bu) = \bL + 0.5\,\bG(\bu)$, where the entries of $\bG(\bu)$ are independent, 
mean-zero, unit-variance Mat\'{e}rn Gaussian processes with smoothness $2.5$ and 
practical correlation range $0.35$, generated independently of $\bv(\bu)$. To ensure 
that each coefficient surface retains its prescribed marginal variance $\sigma_j^2$ 
at every location, the rows of $\bB(\bu)$ are normalized as
\begin{equation}
    a_{jk}(\bu) \;=\; \sigma_j\,
    \frac{b_{jk}(\bu)}
    {\left\{\displaystyle\sum_{\ell=1}^3 b_{j\ell}(\bu)^2\right\}^{1/2}},
    \qquad j,k = 1,2,3,
    \label{eq:normalization}
\end{equation}
which guarantees $\operatorname{Var}\{w_j(\bu) \mid \bA(\cdot)\} = \sigma_j^2$ 
pointwise while permitting the inter-coefficient correlations to vary spatially. 
For each simulation replicate, $\bA(\cdot)$ is drawn once and held fixed when 
sampling $\bw(\bu)$. Conditional on these loading surfaces, $\bw(\bu)$ is a 
nonstationary multivariate Gaussian process whose covariance structure depends 
on the specific pair of locations rather than solely on their separation distance. 
The five simulation scenarios are summarized in Table~\ref{tab:smoothness_scenarios}.

\begin{table}[t]
  \centering
  \caption{Simulation scenarios for the three correlated
  predictor-specific coefficient surfaces, including common and
  coefficient-specific smoothness and spatially varying dependence.}
  \label{tab:smoothness_scenarios}
  \small
  \begin{tabular}{clcccc}
    \toprule
    Case & Description
    & $\nu_1$ & $\nu_2$ & $\nu_3$ & Construction \\
    \midrule
    1 & Rough, exponential
      & $0.5$ & $0.5$ & $0.5$ & Separable Mat\'ern \\
    2 & Moderately smooth
      & $1.5$ & $1.5$ & $1.5$ & Separable Mat\'ern \\
    3 & Smooth
      & $2.5$ & $2.5$ & $2.5$ & Separable Mat\'ern \\
    4 & Coefficient-specific smoothness
      & $0.5$ & $1.5$ & $2.5$ & Parsimonious Mat\'ern \\
    5 & Spatially varying dependence
      & $1.5^{\ast}$ & $1.5^{\ast}$ & $1.5^{\ast}$
      & Spatially varying loadings \\
    \bottomrule
  \end{tabular}
\end{table}

\subsection{Simulation Results}\label{sec:sim_res}
Figure~\ref{fig:simulation-true-coefficients} displays the true spatially-varying coefficient surfaces, and Figure~\ref{fig:sim-svc-rep01} illustrates how well the proposed GeoVAE framework recovers them across simulation scenarios. Under Case $C_1$, where the true surfaces are non-differentiable with minute local variations, GeoVAE achieves root mean squared errors (RMSE) of $0.373, 0.379, 0.369$, and $0.368$ for $\beta_0(\bu), \beta_1(\bu), \beta_2(\bu)$ and $\beta_3(\bu)$, respectively. As smoothness increases across Cases $C_2$ and $C_3$, RMSE decreases steadily, confirming that smoothness is the primary determinant of estimation accuracy. Case $C_4$, which assigns heterogeneous smoothness levels across coefficients, yields the best recovery for the smoothest surface $\beta_3(\bu)$ while exhibiting progressively inferior recovery for $\beta_2(\bu)$ and $\beta_1(\bu)$ as their smoothness decreases, a pattern entirely consistent with theoretical expectations. Case $C_5$ presents the most challenging setting, introducing nonstationary cross-correlations among the varying coefficients. GeoVAE handles this complexity remarkably well, owing to its flexible, assumption-lean architecture that imposes neither stationarity nor spatially invariant correlation structures, assumptions commonly adopted in existing methods purely for computational convenience. Since Case $C_5$ also features smooth coefficient surfaces, it achieves strong recovery despite the added modeling complexity, further emphasizing that smoothness drives performance even under nonstationarity. Compared to BASS, GeoVAE achieves approximately a two-fold reduction in RMSE for smooth coefficient settings and a $1.5$-fold reduction for the rough surfaces of Case $C_1$, as reported in Table~\ref{tab:simulation_results}, demonstrating consistent and substantial gains in estimation accuracy across all simulation scenarios.

\begin{table*}[t]
\centering
\caption{
Simulation results for spatially varying coefficient recovery,
response prediction, uncertainty quantification, and computation time.
}
\label{tab:simulation_results}
\resizebox{\textwidth}{!}{
\begin{tabular}{llcccccccc}
\toprule
& &
\multicolumn{4}{c}{SVC recovery: RMSE}
&
\multicolumn{3}{c}{Response prediction and UQ}
&
\\
\cmidrule(lr){3-6}
\cmidrule(lr){7-9}
Case
& Method
& $\beta_0(\bs)$
& $\beta_1(\bs)$
& $\beta_2(\bs)$
& $\beta_3(\bs)$
& $y$ RMSE
& Coverage
& Interval width
& Time (min)
\\
\midrule

\multirow{6}{*}{$C_1$}
& GeoVAE
& 0.342 (0.024) & 0.361 (0.018)
& 0.364 (0.028) & 0.348 (0.027)
& \textbf{0.874} (0.057) & 0.953 (0.017)
& \textbf{3.502} (0.207) & 0.120 (0.030) \\
& Nonspatial CVAE
& -- & -- & -- & --
& 1.288 (0.051) & 0.955 (0.013)
& 4.826 (0.150) & 0.040 (0.010) \\
& DeepKriging
& -- & -- & -- & --
& 1.190 (0.054) & 0.951 (0.013)
& 5.048 (0.413) & 0.240 (0.040) \\
& LaGP
& -- & -- & -- & --
& 1.235 (0.070) & 0.954 (0.016) & 4.584 (0.176) & 0.240 (0.010) \\
& Vecchia GP
& -- & -- & -- & --
& 1.201 (0.035) & 0.951 (0.012)
& 5.110 (0.341) & 0.110 (0.030) \\
& BASS
& 0.450 (0.038) & 0.561 (0.052)
& 0.570 (0.035) & 0.570 (0.052)
& 1.088 (0.048) & 0.956 (0.011)
& 4.632 (0.300) & 0.080 (0.010) \\
\midrule

\multirow{6}{*}{$C_2$}
& GeoVAE
& 0.199 (0.022) & 0.222 (0.017)
& 0.211 (0.016) & 0.210 (0.018)
& \textbf{0.665} (0.031) & 0.941 (0.014)
& \textbf{2.519} (0.156) & 0.150 (0.040) \\
& Nonspatial CVAE
& -- & -- & -- & --
& 1.302 (0.045) & 0.952 (0.008)
& 4.848 (0.184) & 0.040 (0.010) \\
& DeepKriging
& -- & -- & -- & --
& 1.087 (0.051) & 0.951 (0.023)
& 4.758 (0.524) & 0.270 (0.040) \\
& LaGP
& -- & -- & -- & --
& 1.198 (0.050) & 0.949 (0.009) & 4.389 (0.231) & -- \\
& Vecchia GP
& -- & -- & -- & --
& 1.176 (0.053) & 0.952 (0.015)
& 4.937 (0.353) & 0.100 (0.020) \\
& BASS
& 0.360 (0.055) & 0.477 (0.067)
& 0.507 (0.089) & 0.503 (0.062)
& 0.914 (0.081) & 0.956 (0.014)
& 3.860 (0.398) & 0.090 (0.010) \\
\midrule

\multirow{6}{*}{$C_3$}
& GeoVAE
& 0.166 (0.011) & 0.190 (0.013)
& 0.179 (0.012) & 0.178 (0.007)
& \textbf{0.610} (0.025) & 0.949 (0.022)
& \textbf{2.423} (0.140) & 0.150 (0.030) \\
& Nonspatial CVAE
& -- & -- & -- & --
& 1.287 (0.044) & 0.953 (0.010)
& 4.887 (0.275) & 0.040 (0.010) \\
& DeepKriging
& -- & -- & -- & --
& 1.095 (0.057) & 0.954 (0.013)
& 4.789 (0.206) & 0.240 (0.040) \\
& LaGP
& -- & -- & -- & --
& 1.194 (0.073) & 0.953 (0.009) & 4.380 (0.291) & 0.250 (0.010) \\
& Vecchia GP
& -- & -- & -- & --
& 1.161 (0.048) & 0.955 (0.011)
& 5.064 (0.352) & 0.100 (0.020) \\
& BASS
& 0.316 (0.046) & 0.486 (0.075)
& 0.521 (0.076) & 0.460 (0.021)
& 0.891 (0.074) & 0.950 (0.018)
& 3.596 (0.248) & 0.090 (0.010) \\
\midrule

\multirow{6}{*}{$C_4$}
& GeoVAE
& 0.210 (0.017) & 0.408 (0.023)
& 0.238 (0.017) & 0.188 (0.012)
& \textbf{0.738} (0.029) & 0.960 (0.008)
& \textbf{3.040} (0.165) & 0.140 (0.040) \\
& Nonspatial CVAE
& -- & -- & -- & --
& 1.259 (0.060) & 0.958 (0.014)
& 4.926 (0.245) & 0.040 (0.010) \\
& DeepKriging
& -- & -- & -- & --
& 1.103 (0.055) & 0.958 (0.012)
& 4.883 (0.314) & 0.240 (0.040) \\
& LaGP
& -- & -- & -- & --
& 1.189 (0.078) & 0.961 (0.013) & 4.574 (0.276) & 0.250 (0.010) \\
& Vecchia GP
& -- & -- & -- & --
& 1.136 (0.053) & 0.960 (0.014)
& 5.072 (0.286) & 0.140 (0.100) \\
& BASS
& 0.346 (0.040) & 0.592 (0.049)
& 0.539 (0.071) & 0.461 (0.078)
& 0.974 (0.059) & 0.958 (0.014)
& 4.142 (0.375) & 0.080 (0.010) \\
\midrule

\multirow{6}{*}{$C_5$}
& GeoVAE
& 0.136 (0.005) & 0.200 (0.017)
& 0.180 (0.014) & 0.178 (0.013)
& \textbf{0.434} (0.020) & 0.966 (0.016)
& \textbf{1.902} (0.119) & 0.220 (0.060) \\
& Nonspatial CVAE
& -- & -- & -- & --
& 1.152 (0.069) & 0.958 (0.011)
& 4.384 (0.257) & 0.040 (0.010) \\
& DeepKriging
& -- & -- & -- & --
& 1.003 (0.075) & 0.947 (0.018)
& 4.101 (0.338) & 0.270 (0.040) \\
& LaGP
& -- & -- & -- & --
& 1.091 (0.066) & 0.949 (0.016) & 4.010 (0.303) & 0.240 (0.020) \\
& Vecchia GP
& -- & -- & -- & --
& 1.060 (0.074) & 0.949 (0.015)
& 4.333 (0.342) & 0.130 (0.090) \\
& BASS
& 0.279 (0.022) & 0.569 (0.091)
& 0.495 (0.140) & 0.495 (0.050)
& 0.836 (0.036) & 0.950 (0.014)
& 3.422 (0.219) & 0.080 (0.010) \\
\bottomrule
\end{tabular}
}
\end{table*}
Table~\ref{tab:simulation_results} presents predictive inference for GeoVAE in comparison with the competing approaches across all simulation scenarios. GeoVAE achieves superior root mean squared prediction error over all competing methods in every scenario considered. While the performance gap in point prediction is relatively modest for Case $C_1$, which features rough spatially varying coefficient surfaces, the advantage of GeoVAE becomes increasingly pronounced for more smoothly varying coefficient surfaces, irrespective of complex nonstationarity (Cases $C_2$, $C_3$ and $C_5$). This pattern can plausibly be attributed to the hierarchical encoder-decoder architecture of GeoVAE, which constructs a dedicated latent representation for each coefficient function and integrates cross-coefficient information through a shared synthesis layer.

The relative underperformance of the competing methods can be understood in terms of their respective structural assumptions. Nonspatial VAE, while computationally efficient and expressive as a deep generative model, does not explicitly account for the spatial dependence structure of the varying coefficients, limiting its ability to borrow spatial strength across locations and recover spatially coherent coefficient surfaces. DeepKriging is a powerful and scalable approach that leverages basis function representations and deep networks for spatial prediction; however, it is designed primarily for spatial interpolation of a single response surface rather than joint estimation of multiple spatially varying coefficient functions with distinct smoothness levels, which places it at a structural disadvantage in the VCM setting. Vecchia GP approximations represent a highly principled and computationally efficient class of sparse GP methods that have demonstrated impressive scalability for large spatial datasets; in the VCM context, however, the Vecchia approximation is applied independently to each coefficient surface without explicitly modeling cross-coefficient dependencies. This places Vecchia-GP in a little disadvantageous position, which is further exacerbated by the fact that its MCMC-based inference engine, while rigorous, does not benefit from the joint end-to-end optimization that drives GeoVAE's coefficient surfaces toward mutual consistency and predictive fidelity. Regarding uncertainty quantification, all competing methods achieve nominal or near-nominal empirical coverage of the 95\% predictive intervals across simulation cases, reflecting well-calibrated uncertainty estimates, a commendable property shared across the competing approaches. Nevertheless, GeoVAE produces strictly narrower 95\% predictive intervals in all simulation cases while maintaining nominal coverage, indicating that its hierarchical joint estimation strategy extracts more information from the data and translates this into sharper, better-resolved posterior predictive distributions. Finally, since the sample sizes $N$ considered in the simulation scenarios is not moderate ($N=3000$), all competing methods deliver predictive inference in under a minute, confirming their practical computational efficiency at moderate scales. Figure~\ref{fig:sim-y-rep01} displays the true and GeoVAE-estimated response surfaces alongside the corresponding 2.5\% and 97.5\% posterior predictive interval limits. The estimated surface faithfully recovers the local spatial features of the true surface, while the predictive intervals remain narrow and closely centered around the truth, reflecting well-calibrated and sharp uncertainty quantification.

\begin{figure*}[htbp]
\centering

\begin{subfigure}[t]{0.95\textwidth}
    \centering
    \includegraphics[width=\linewidth]
    {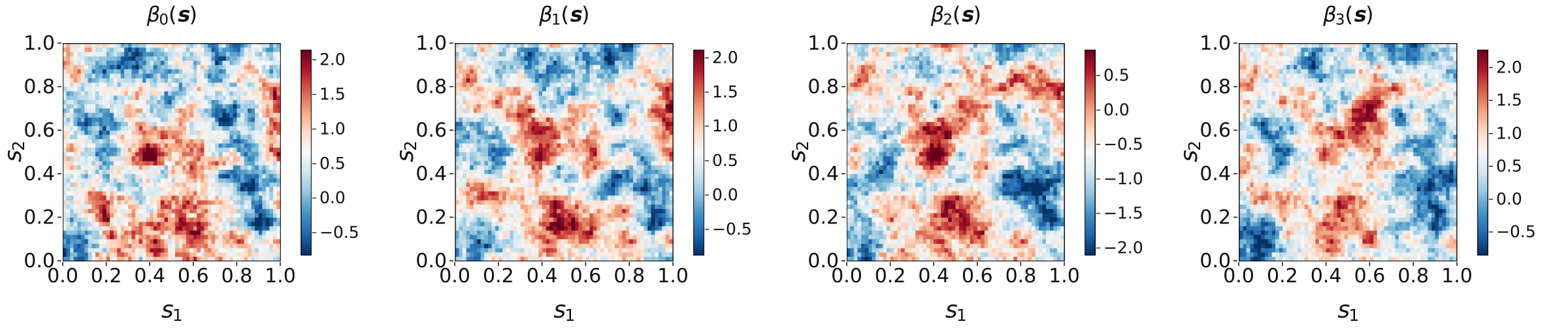}
    \caption{Case 1: Rough.}
    \label{fig:sim-true-s1}
\end{subfigure}

\smallskip

\begin{subfigure}[t]{0.95\textwidth}
    \centering
    \includegraphics[width=\linewidth]
    {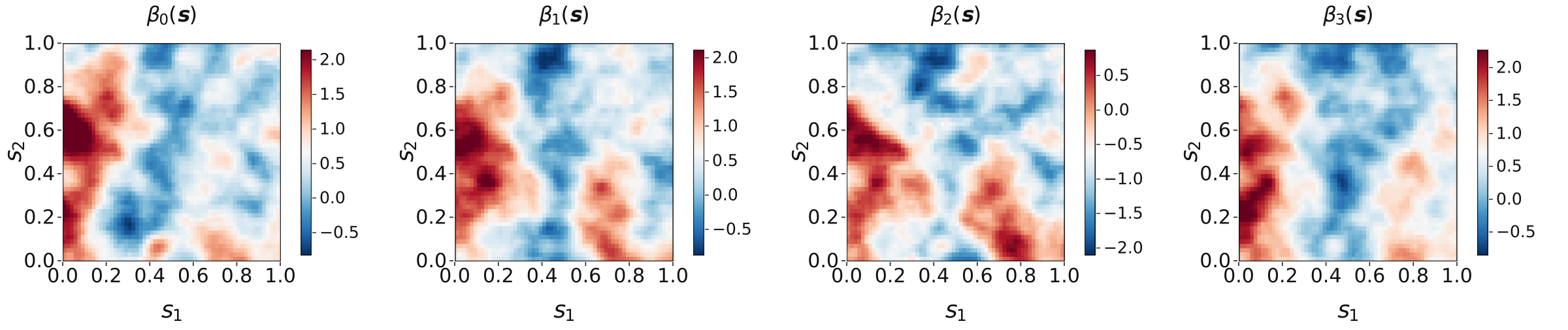}
    \caption{Case 2: Moderate.}
    \label{fig:sim-true-s2}
\end{subfigure}

\smallskip

\begin{subfigure}[t]{0.95\textwidth}
    \centering
    \includegraphics[width=\linewidth]
    {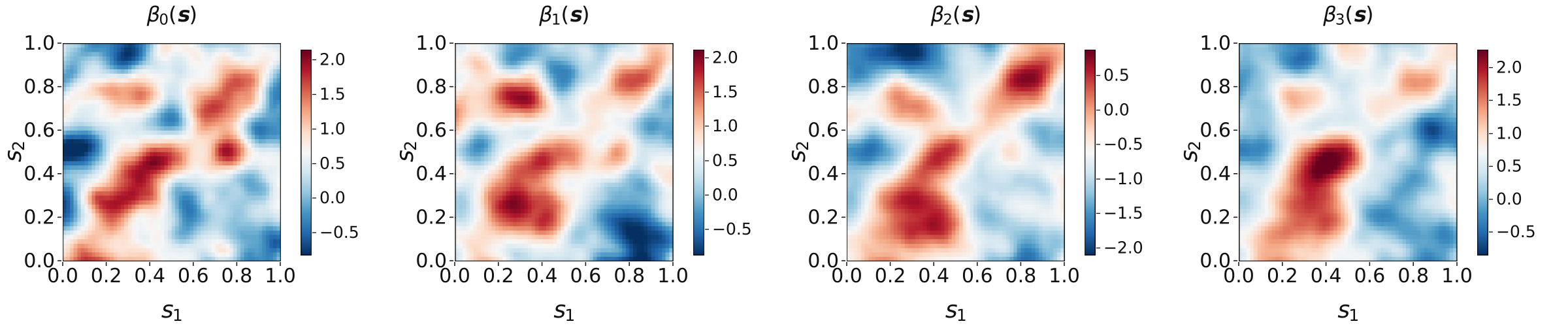}
    \caption{Case 3: Smooth.}
    \label{fig:sim-true-s3}
\end{subfigure}

\smallskip

\begin{subfigure}[t]{0.95\textwidth}
    \centering
    \includegraphics[width=\linewidth]
    {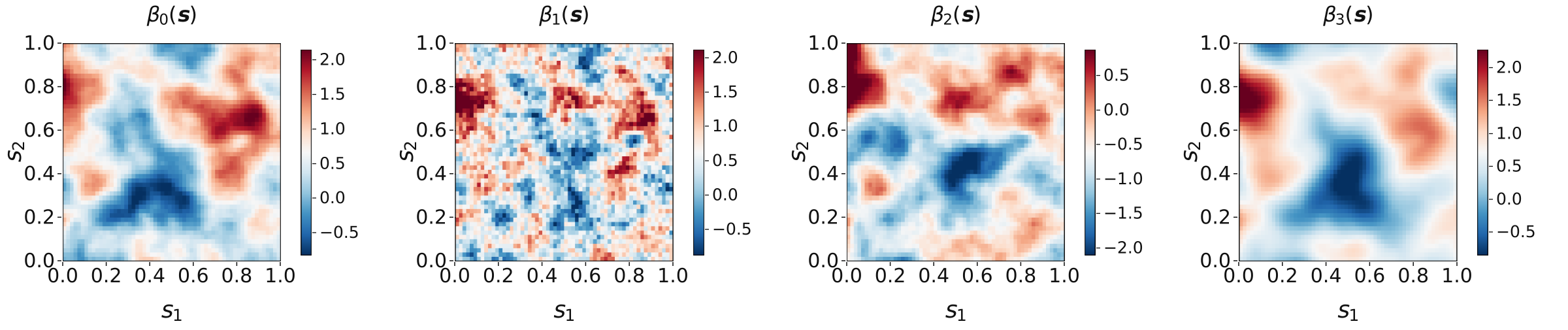}
    \caption{Case 4: Heterogeneous.}
    \label{fig:sim-true-s4}
\end{subfigure}

\smallskip

\begin{subfigure}[t]{0.95\textwidth}
    \centering
    \includegraphics[width=\linewidth]
    {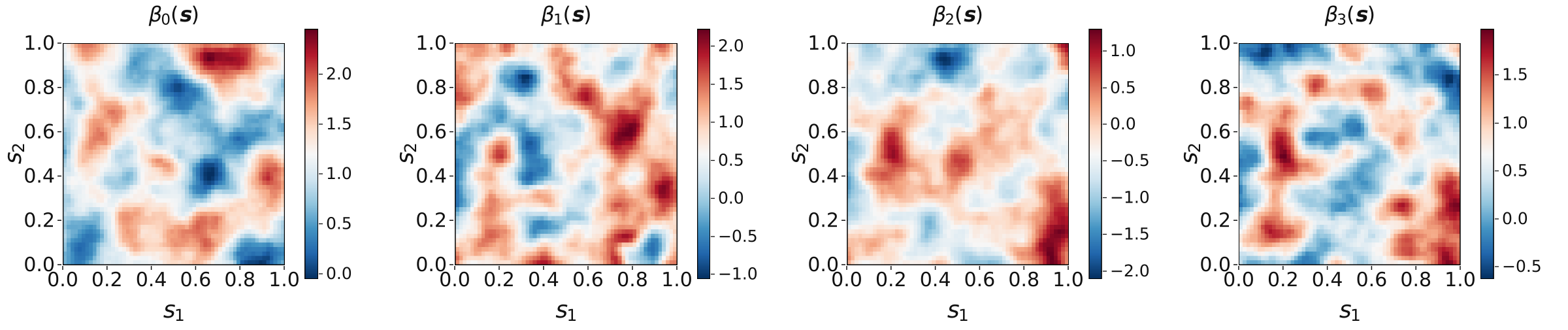}
    \caption{Case 5: Nonstationary.}
    \label{fig:sim-true-s5}
\end{subfigure}

\caption{
True spatially varying coefficient surfaces used to generate
the simulated data under the five scenarios.
}
\label{fig:simulation-true-coefficients}
\end{figure*}

\begin{figure}[htbp]
\centering
\captionsetup{font=small,skip=3pt}
\captionsetup[subfigure]{font=small,skip=1pt}
\begin{subfigure}[t]{\linewidth}
  \centering
  \includegraphics[width=\linewidth]{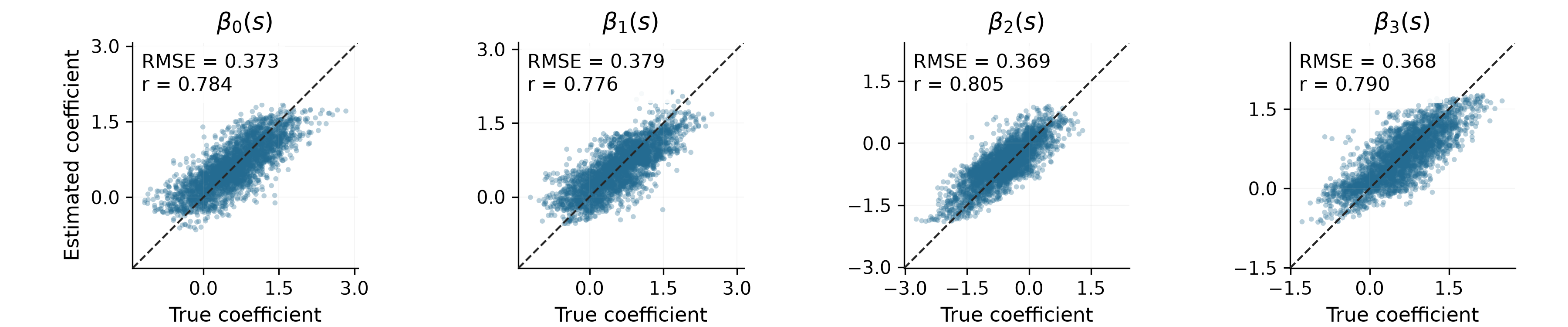}
  \caption{Case $C_1$: rough.}
  \label{fig:sim-svc-c1-rep01}
\end{subfigure}
\par\vspace{1mm}
\begin{subfigure}[t]{\linewidth}
  \centering
  \includegraphics[width=\linewidth]{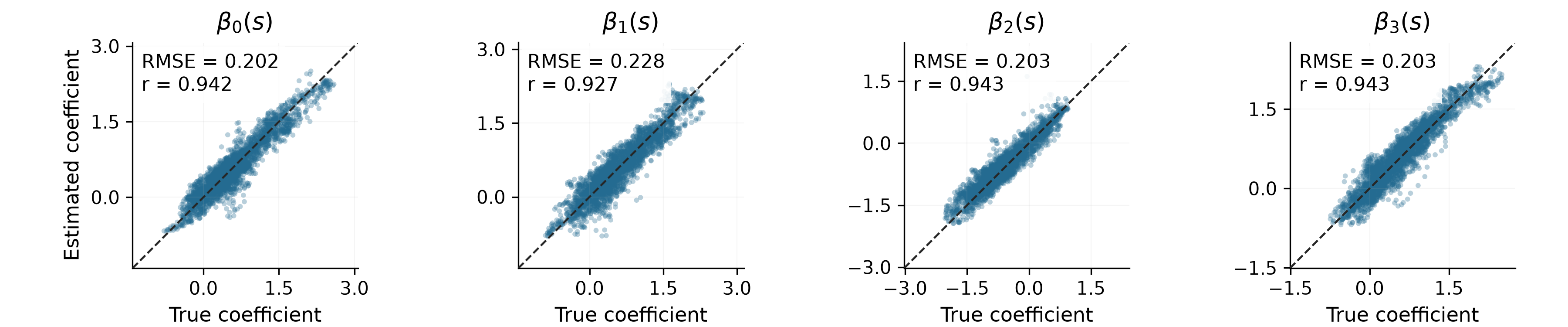}
  \caption{Case $C_2$: moderate.}
  \label{fig:sim-svc-c2-rep01}
\end{subfigure}
\par\vspace{1mm}
\begin{subfigure}[t]{\linewidth}
  \centering
  \includegraphics[width=\linewidth]{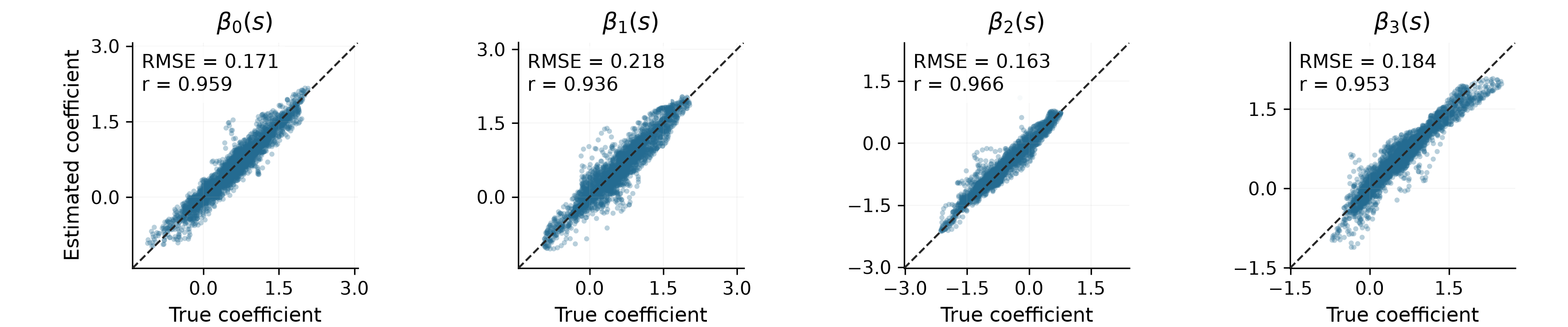}
  \caption{Case $C_3$: smooth.}
  \label{fig:sim-svc-c3-rep01}
\end{subfigure}
\par\vspace{1mm}
\begin{subfigure}[t]{\linewidth}
  \centering
  \includegraphics[width=\linewidth]{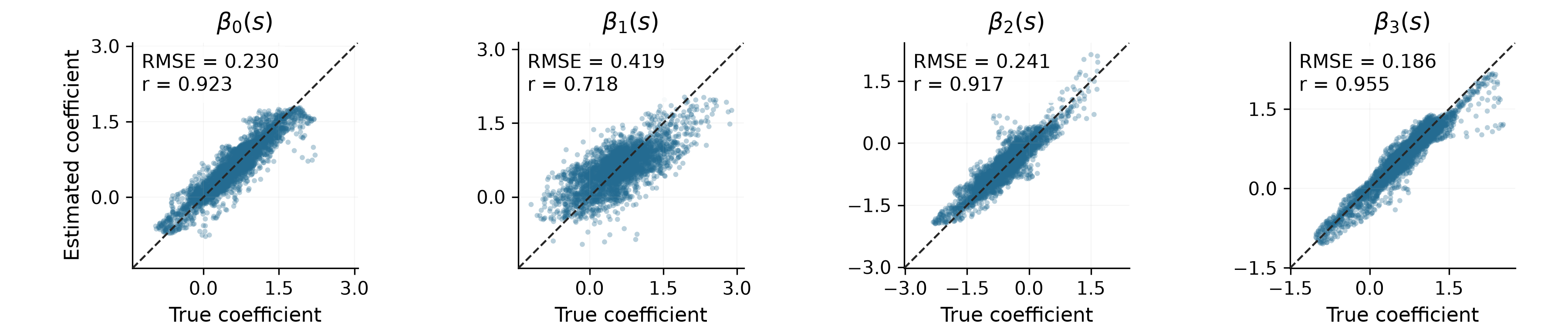}
  \caption{Case $C_4$: heterogeneous.}
  \label{fig:sim-svc-c4-rep01}
\end{subfigure}
\par\vspace{1mm}
\begin{subfigure}[t]{\linewidth}
  \centering
  \includegraphics[width=\linewidth]{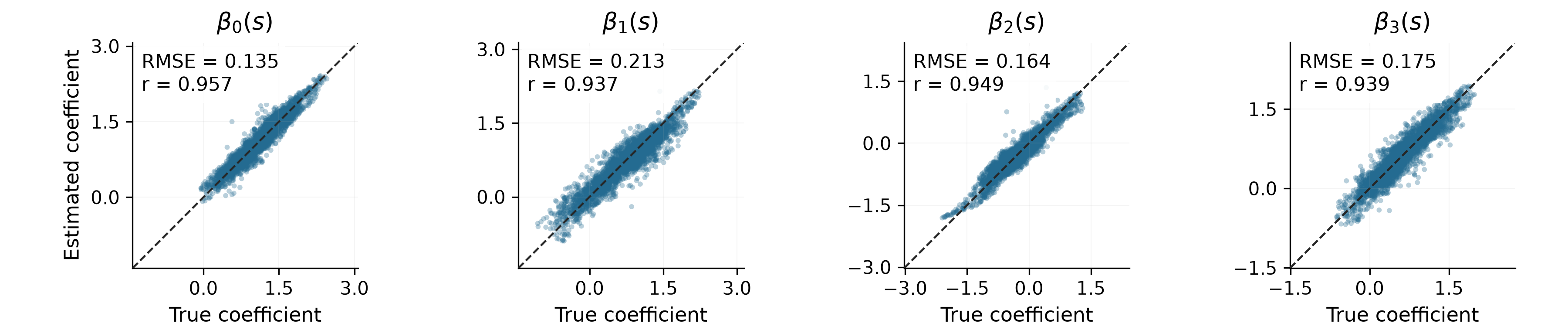}
  \caption{Case $C_5$: nonstationary.}
  \label{fig:sim-svc-c5-rep01}
\end{subfigure}
\caption{Recovery of spatially varying coefficients in five simulation scenarios. Columns correspond to $\beta_0(s)$ through $\beta_3(s)$. Dashed lines indicate equality between true and estimated values. RMSE and Pearson correlation ($r$) use the original saved values.}
\label{fig:sim-svc-rep01}
\end{figure}

\begin{figure}[htbp]
\centering
\captionsetup{font=small,skip=3pt}
\captionsetup[subfigure]{font=small,skip=1pt}
\begin{subfigure}[t]{\linewidth}
  \centering
  \includegraphics[width=\linewidth]{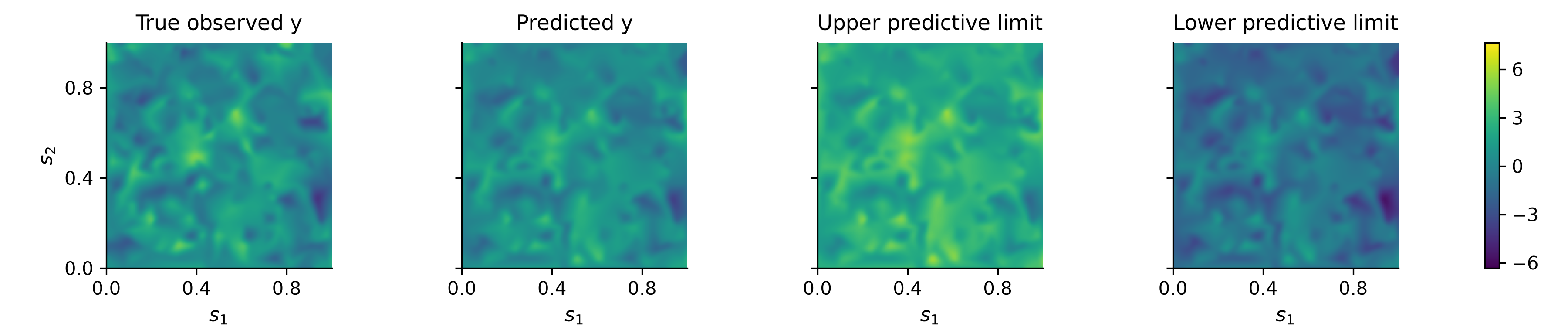}
  \caption{Case $C_1$: rough.}
  \label{fig:sim-y-c1-rep01}
\end{subfigure}
\par\vspace{1mm}
\begin{subfigure}[t]{\linewidth}
  \centering
  \includegraphics[width=\linewidth]{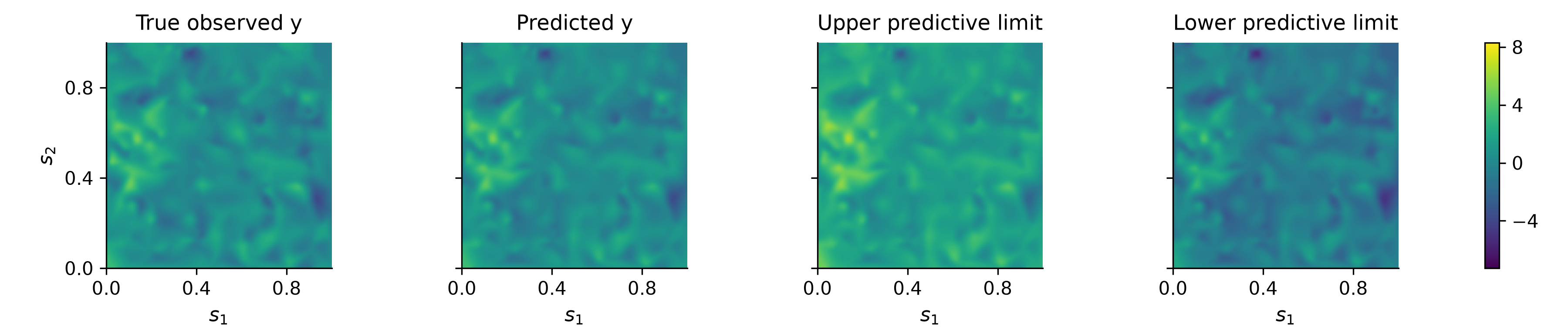}
  \caption{Case $C_2$: moderate.}
  \label{fig:sim-y-c2-rep01}
\end{subfigure}
\par\vspace{1mm}
\begin{subfigure}[t]{\linewidth}
  \centering
  \includegraphics[width=\linewidth]{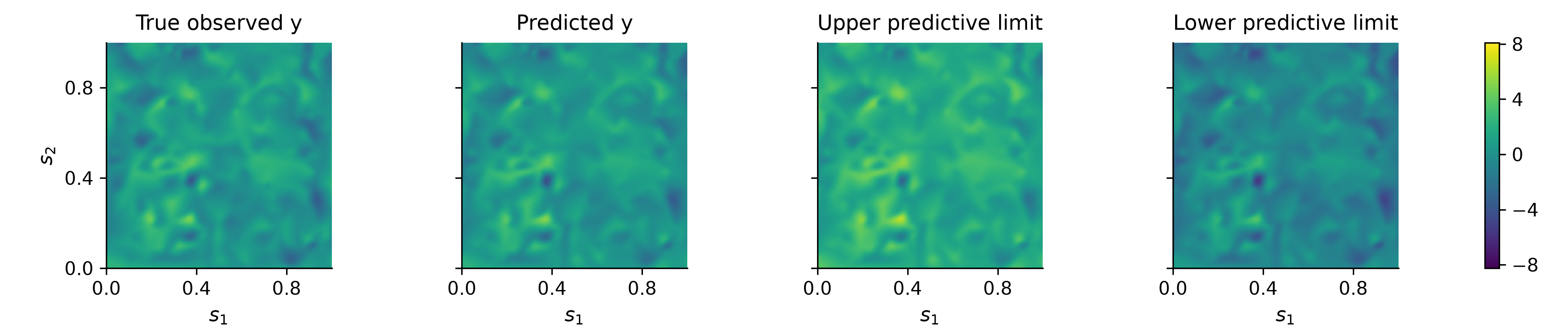}
  \caption{Case $C_3$: smooth.}
  \label{fig:sim-y-c3-rep01}
\end{subfigure}
\par\vspace{1mm}
\begin{subfigure}[t]{\linewidth}
  \centering
  \includegraphics[width=\linewidth]{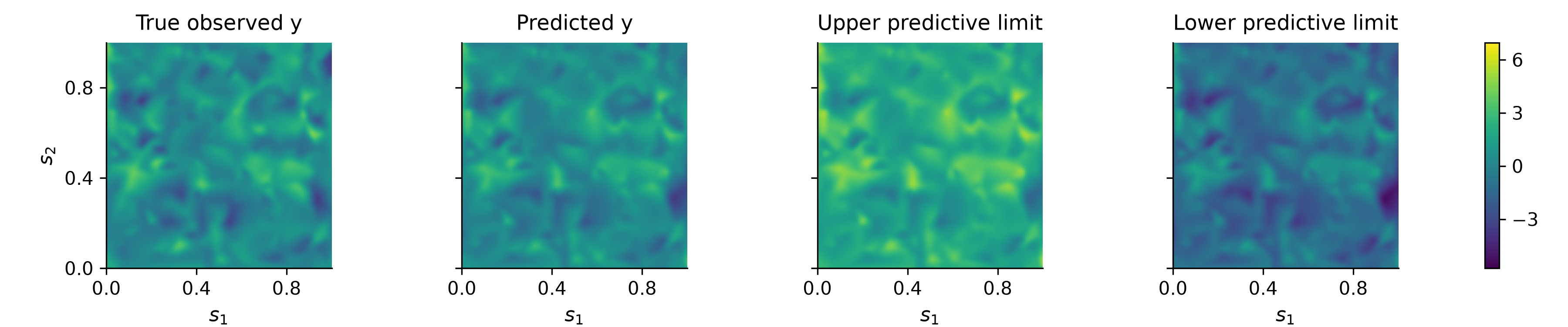}
  \caption{Case $C_4$: heterogeneous.}
  \label{fig:sim-y-c4-rep01}
\end{subfigure}
\par\vspace{1mm}
\begin{subfigure}[t]{\linewidth}
  \centering
  \includegraphics[width=\linewidth]{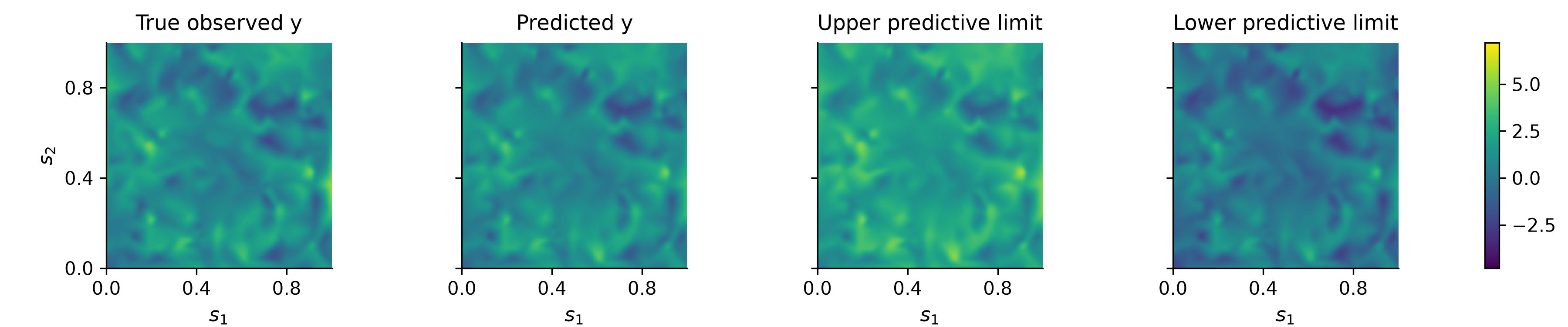}
  \caption{Case $C_5$: nonstationary.}
  \label{fig:sim-y-c5-rep01}
\end{subfigure}
\caption{Test-response prediction in five simulation scenarios. From left to right: observed test responses, posterior mean predictions, upper predictive limits, and lower predictive limits. All four panels share a common color scale within each scenario.}
\label{fig:sim-y-rep01}
\end{figure}


\section{Remote Sensing Vegetation Data Analysis}\label{sec:realdata}

This section illustrates our approach with a remote sensing vegetation data analysis. The western United States is undergoing substantial ecological changes associated with increasing drought stress and wildfire activity~\citep{abatzoglou2016impact,williams2019observed,westerling2006warming}. Understanding spatial variation in vegetation condition is therefore important for monitoring ecosystem responses and informing land management~\citep{running2004continuous,pettorelli2005using}. The region encompasses diverse landscapes, ranging from coastal forests to arid interior environments, where the factors limiting vegetation productivity vary considerably. This environmental heterogeneity motivates an analysis that allows both baseline vegetation conditions and their association with urban land cover to vary across space.

Satellite remote sensing provides extensive spatial coverage for examining these patterns. We use the normalized difference vegetation index (NDVI), a widely used indicator of vegetation greenness and productivity \citep{pettorelli2005using}, to investigate how vegetation conditions differ between urban and nonurban areas across the study region. Our analysis utilizes MODIS Enhanced Vegetation Index data from the western United States, focusing on sinusoidal grid tile h08v05, which spans approximately 30°N–40°N latitude and 104°W–130°W longitude, covering around $N=1,020,000$ observed locations to demonstrate the method. We divide the data into training (60\%), validation (20\%), and testing (20\%) sets to evaluate out-of-sample predictive performance.  We consider the transformed response $y(\bu)=\log{\operatorname{NDVI}(\bu)+1}$ and a binary covariate $x(\bu)$ indicating urban land cover, with $x(\bu)=1$ for urban locations and $x(\bu)=0$ for nonurban locations.

Figure~\ref{fig:beta} presents the estimated spatially varying intercept and urban coefficient surfaces, together with their posterior uncertainty. The estimated intercept surface $\beta_0(\bu)$ describes spatial variation in expected $\log{\operatorname{NDVI}(\bu)+1}$ for nonurban land cover. Posterior means reach approximately $0.4$--$0.5$ in the upper band of the mapped domain, compared with approximately $0.1$--$0.2$ across much of the interior. Pointwise credible interval widths are generally below $0.03$, indicating that posterior uncertainty at individual locations is small relative to these differences in the estimated baseline surface. The posterior mean of $\beta_1(\bu)$ reaches approximately $-0.2$ to $-0.3$ in several localized patches, indicating lower expected $\log{\operatorname{NDVI}(\bs)+1}$ for urban relative to nonurban land cover. In parts of these patches, the upper pointwise credible limit is also below zero, providing posterior support for a negative urban association. Positive posterior means are generally smaller in magnitude, and their credible intervals include zero in many locations. Credible interval widths reach approximately $0.35$ in some areas, compared with a maximum of approximately $0.06$ for the intercept surface. Thus, the urban--nonurban contrast is estimated less precisely than the nonurban baseline, with clearer evidence for negative associations in specific locations than for a uniformly positive or negative association across the domain. Figure~\ref{fig:prediction} contrasts the observed test responses with the GeoVAE posterior mean predictions and the associated 95\% prediction intervals obtained via split conformal calibration. The predicted surface closely matches the log-transformed NDVI field, and the resulting intervals are narrow and closely envelop the observed values.



\begin{figure}[htbp]
\centering
\begin{subfigure}[t]{0.245\textwidth}
  \centering
  \includegraphics[width=\linewidth]{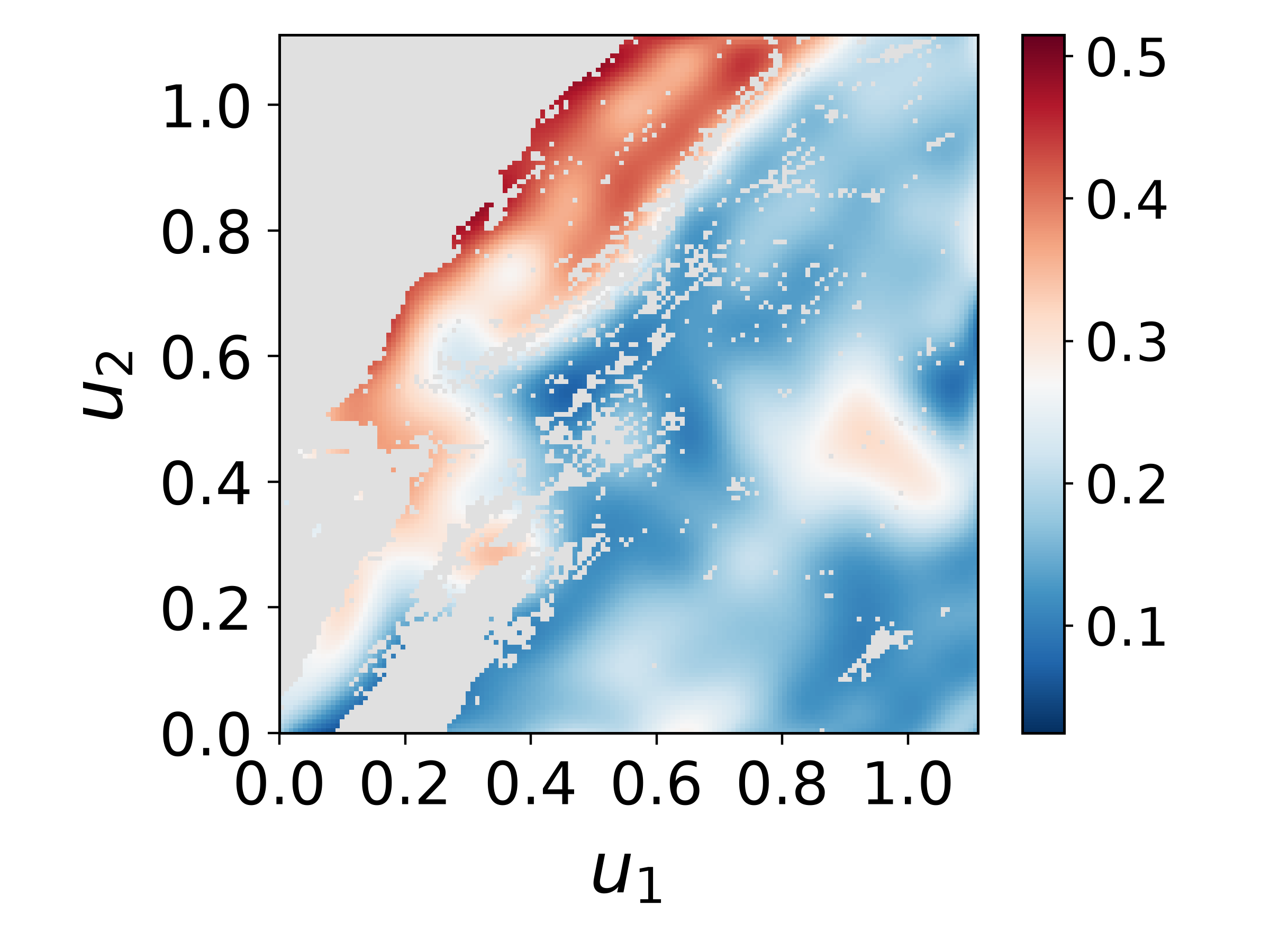}
  \caption{$\beta_0$: posterior mean.}
  \label{fig:beta0-mean}
\end{subfigure}\hfill
\begin{subfigure}[t]{0.245\textwidth}
  \centering
  \includegraphics[width=\linewidth]{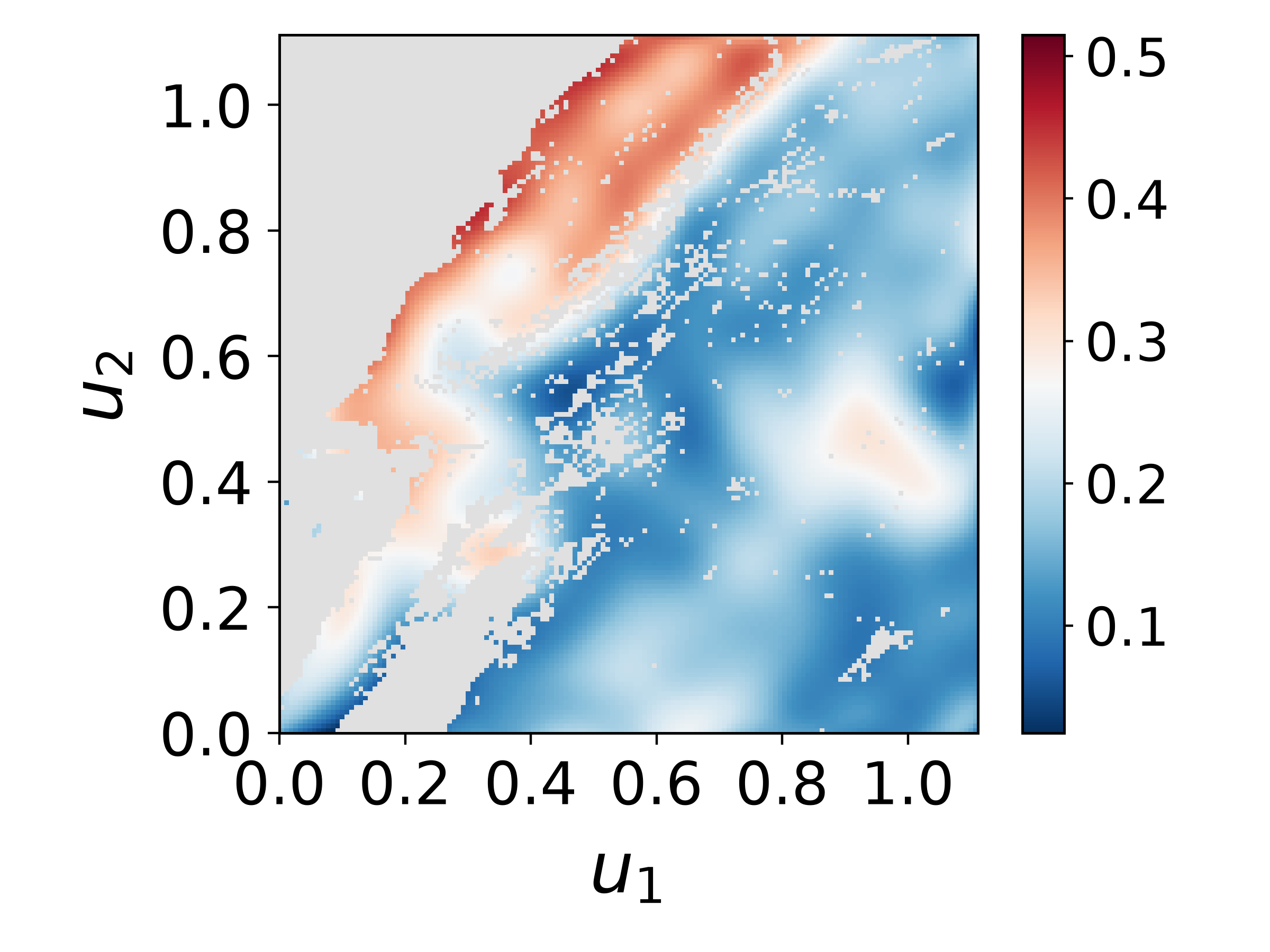}
  \caption{$\beta_0$: lower credible limit.}
  \label{fig:beta0-lower}
\end{subfigure}\hfill
\begin{subfigure}[t]{0.245\textwidth}
  \centering
  \includegraphics[width=\linewidth]{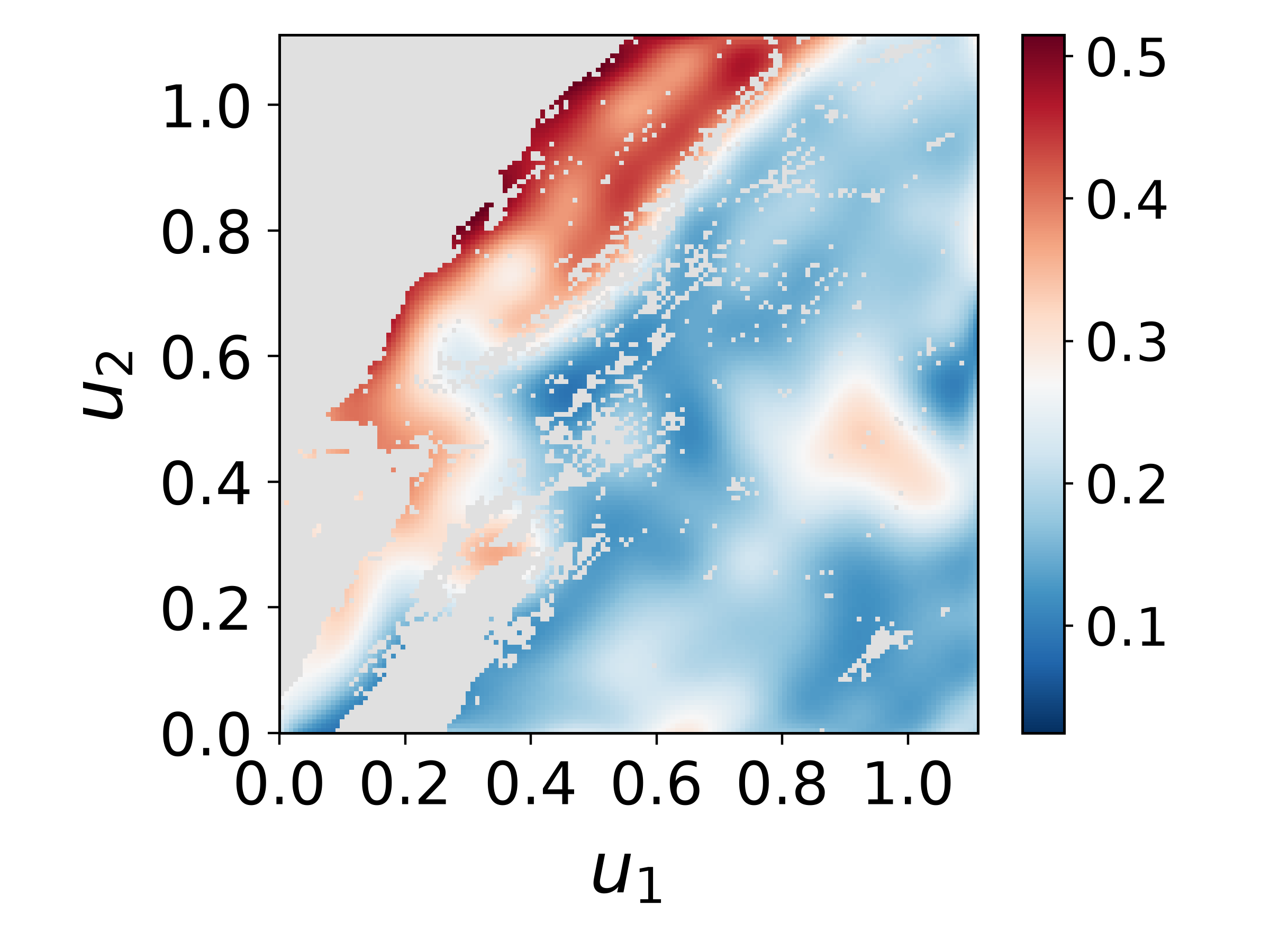}
  \caption{$\beta_0$: upper credible limit.}
  \label{fig:beta0-upper}
\end{subfigure}\hfill
\begin{subfigure}[t]{0.245\textwidth}
  \centering
  \includegraphics[width=\linewidth]{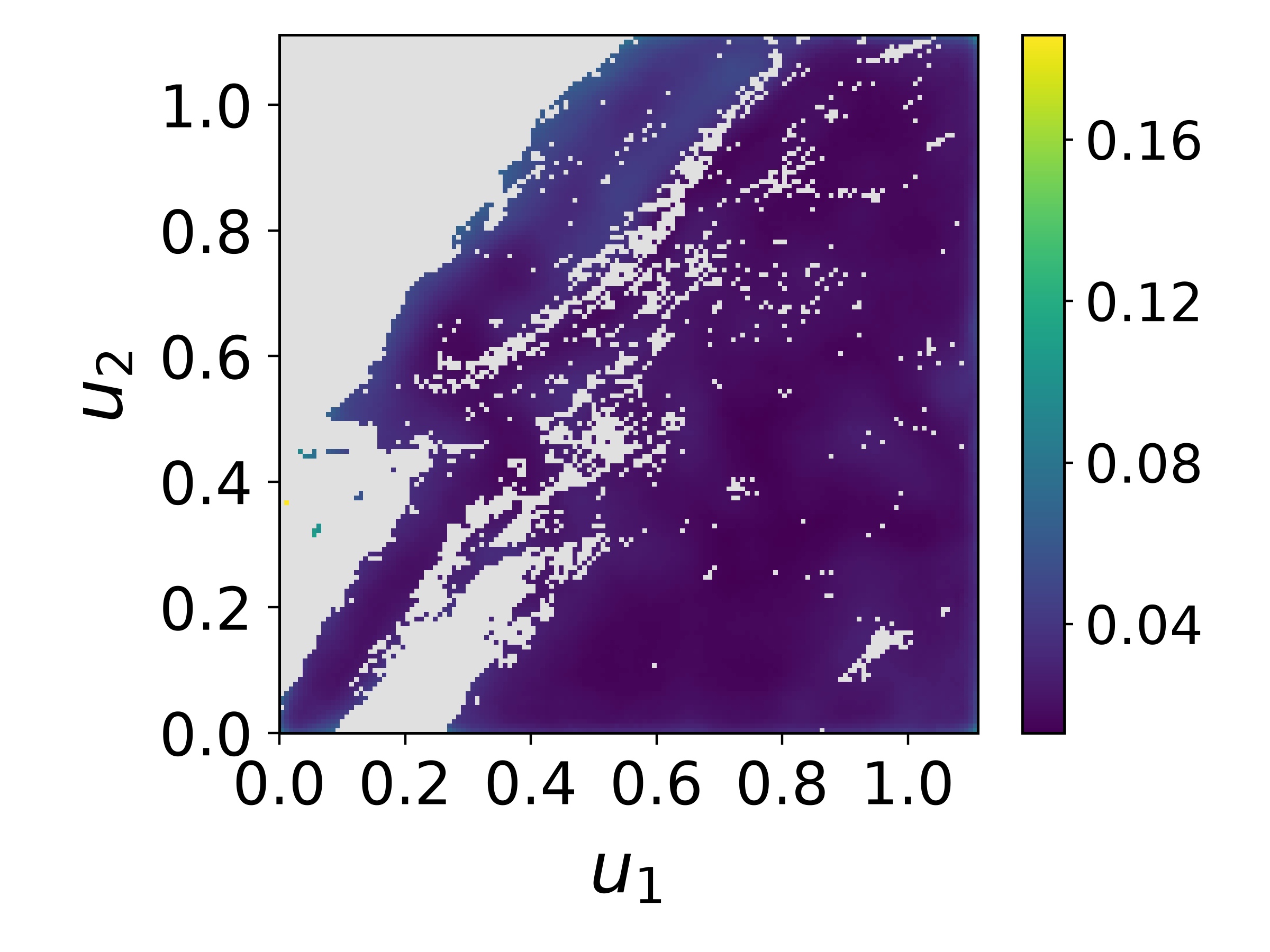}
  \caption{$\beta_0$: credible interval width.}
  \label{fig:beta0-width}
\end{subfigure}

\medskip
\begin{subfigure}[t]{0.245\textwidth}
  \centering
  \includegraphics[width=\linewidth]{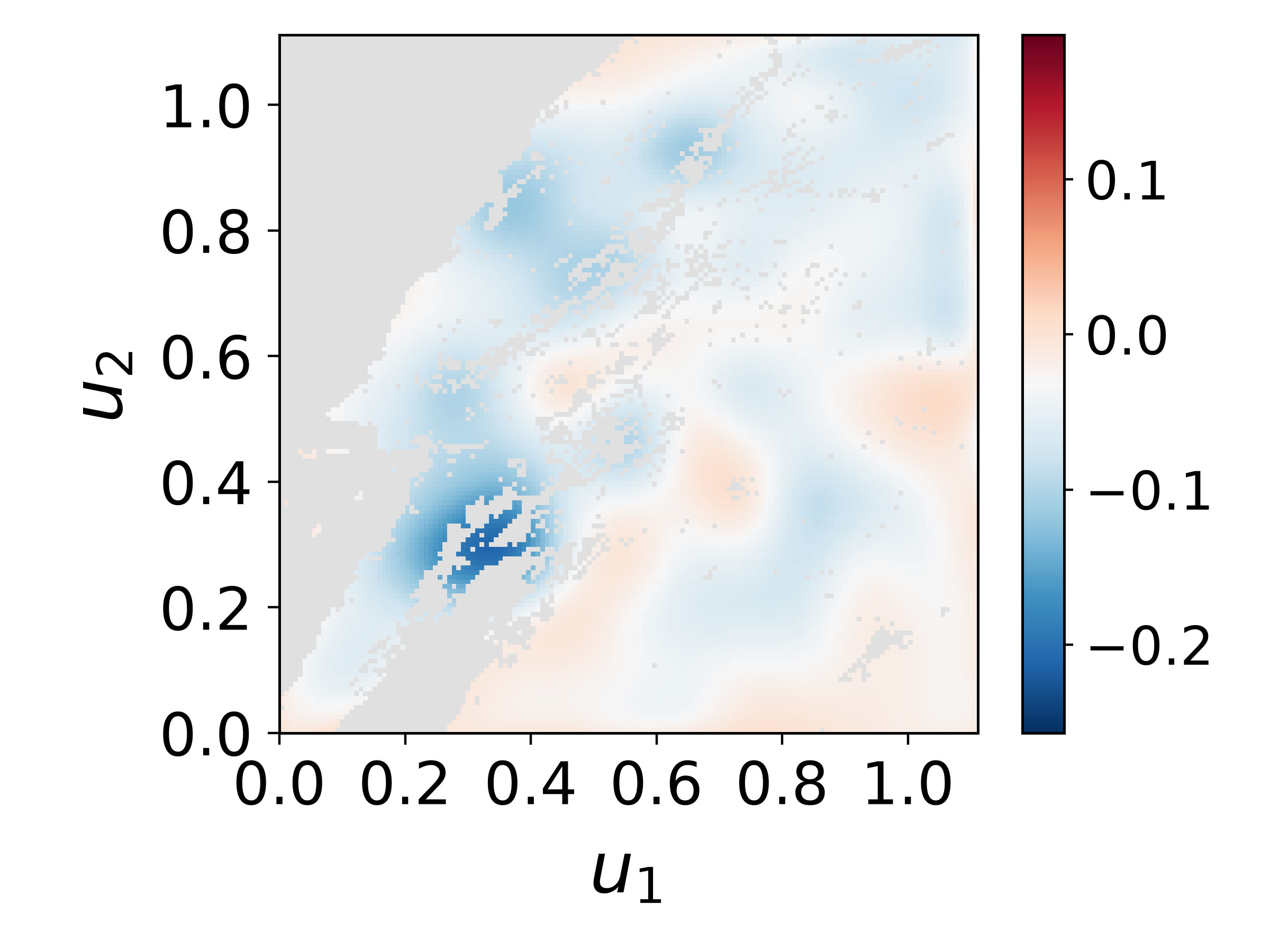}
  \caption{$\beta_1$: posterior mean.}
  \label{fig:beta1-mean}
\end{subfigure}\hfill
\begin{subfigure}[t]{0.245\textwidth}
  \centering
  \includegraphics[width=\linewidth]{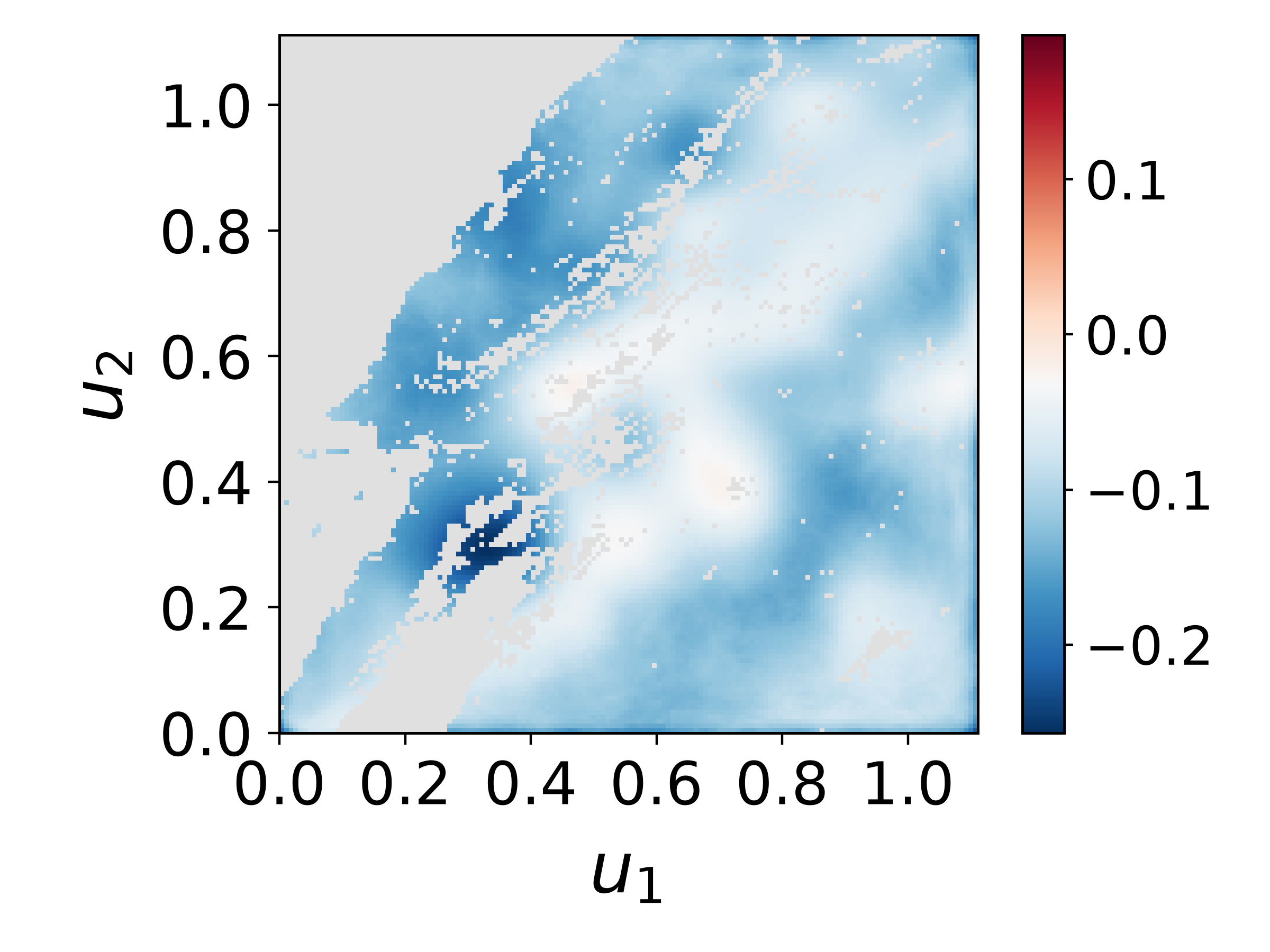}
  \caption{$\beta_1$: lower credible limit.}
  \label{fig:beta1-lower}
\end{subfigure}\hfill
\begin{subfigure}[t]{0.245\textwidth}
  \centering
  \includegraphics[width=\linewidth]{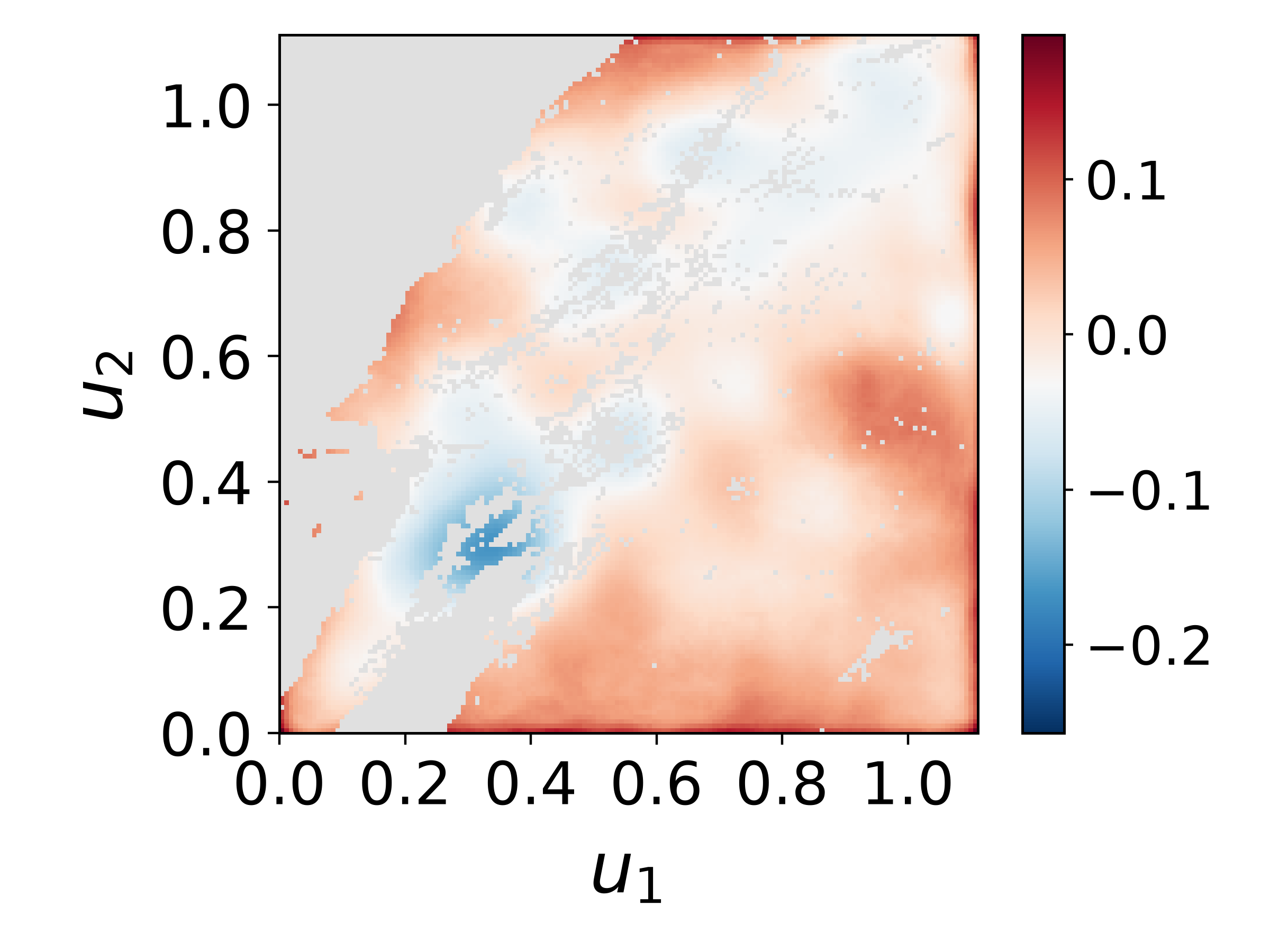}
  \caption{$\beta_1$: upper credible limit.}
  \label{fig:beta1-upper}
\end{subfigure}\hfill
\begin{subfigure}[t]{0.245\textwidth}
  \centering
  \includegraphics[width=\linewidth]{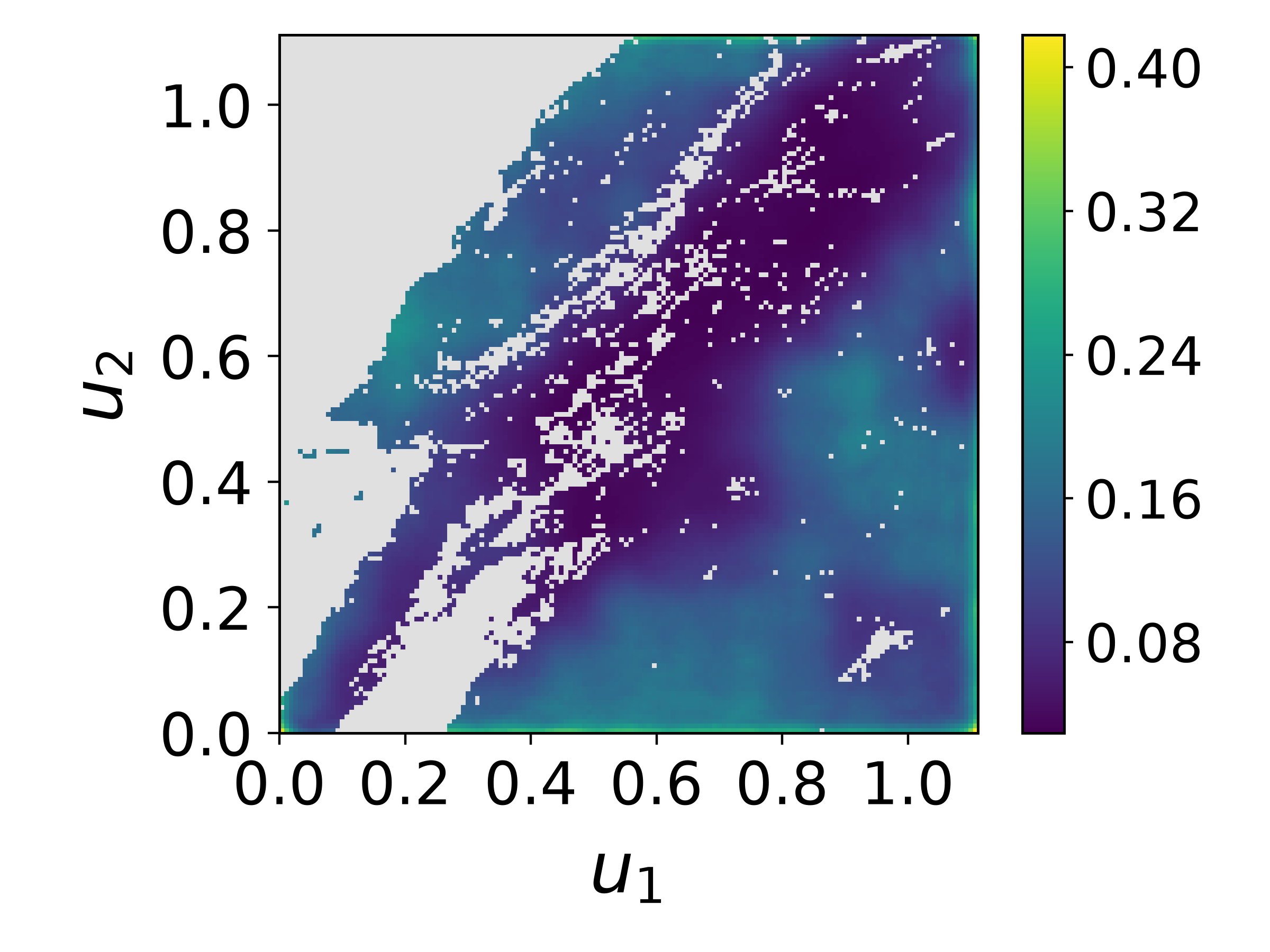}
  \caption{$\beta_1$: credible interval width.}
  \label{fig:beta1-width}
\end{subfigure}
\caption{GeoVAE estimates of the spatially varying intercept
$\beta_0(\mathbf{u})$ (top row) and urban coefficient
$\beta_1(\mathbf{u})$ (bottom row) for
$y(\mathbf{u})=\log\{\operatorname{NDVI}(\mathbf{u})+1\}$.
Columns show the posterior mean, lower and upper pointwise credible limits,
and credible interval width. Within each row, the first three panels share
a common color scale; the interval-width panel uses a separate scale.
Gray areas indicate grid locations excluded by the data-support mask.}
\label{fig:beta}
\end{figure}

\begin{figure}[htbp]
\centering
\begin{subfigure}[t]{0.245\textwidth}
  \centering
  \includegraphics[width=\linewidth]{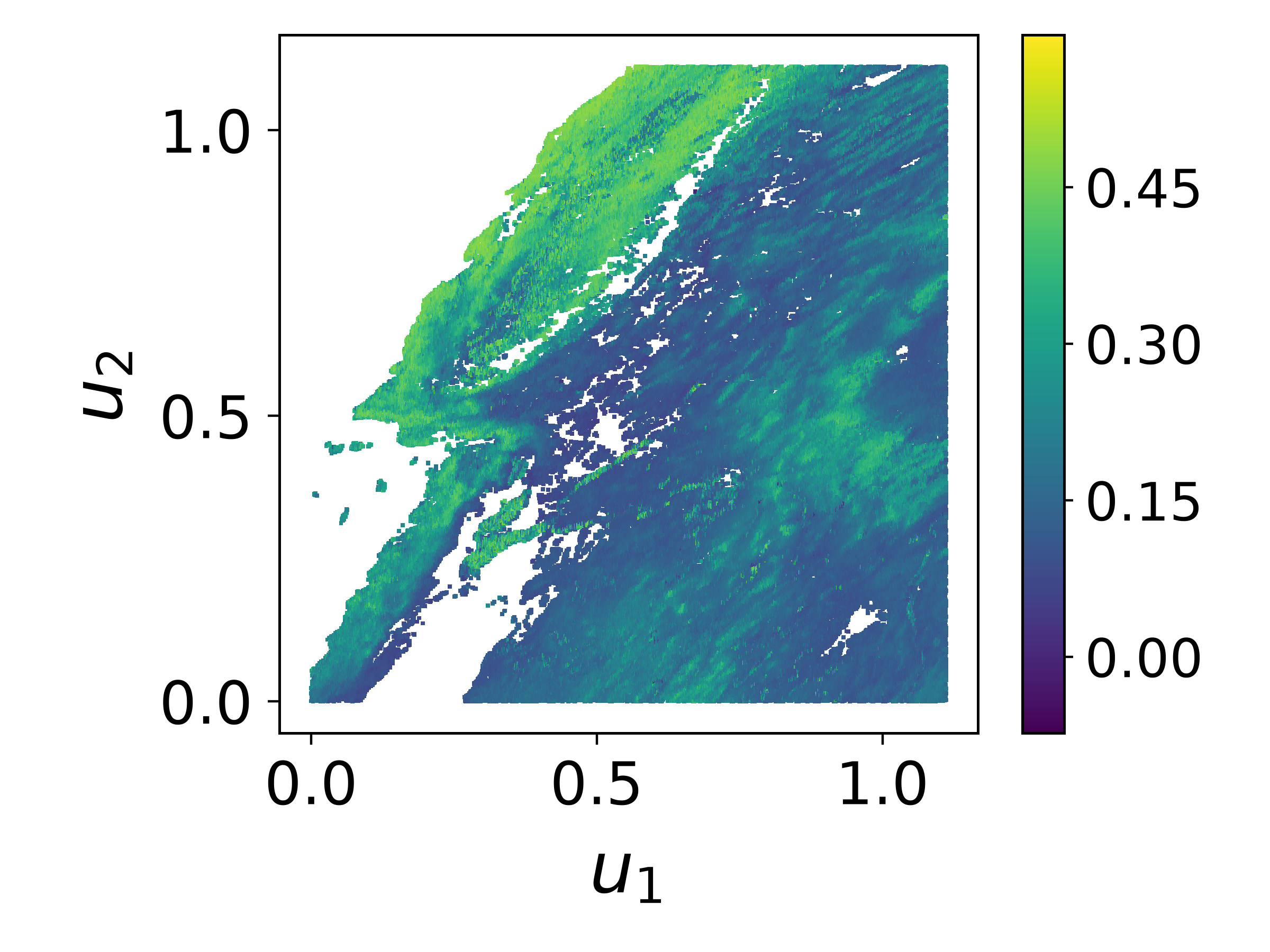}
  \caption{Observed response.}
  \label{fig:prediction-observed}
\end{subfigure}\hfill
\begin{subfigure}[t]{0.245\textwidth}
  \centering
  \includegraphics[width=\linewidth]{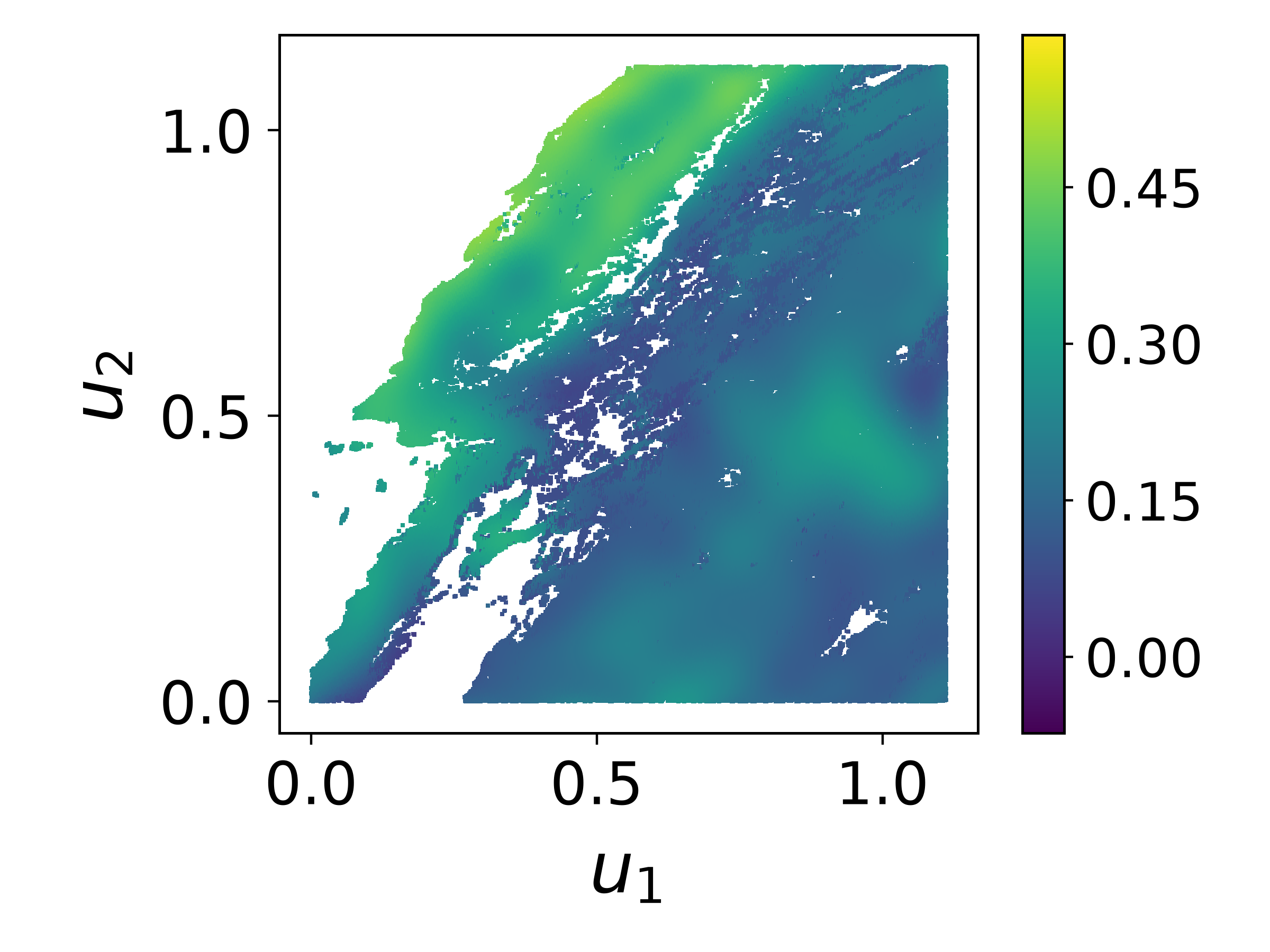}
  \caption{Posterior mean prediction.}
  \label{fig:prediction-mean}
\end{subfigure}\hfill
\begin{subfigure}[t]{0.245\textwidth}
  \centering
  \includegraphics[width=\linewidth]{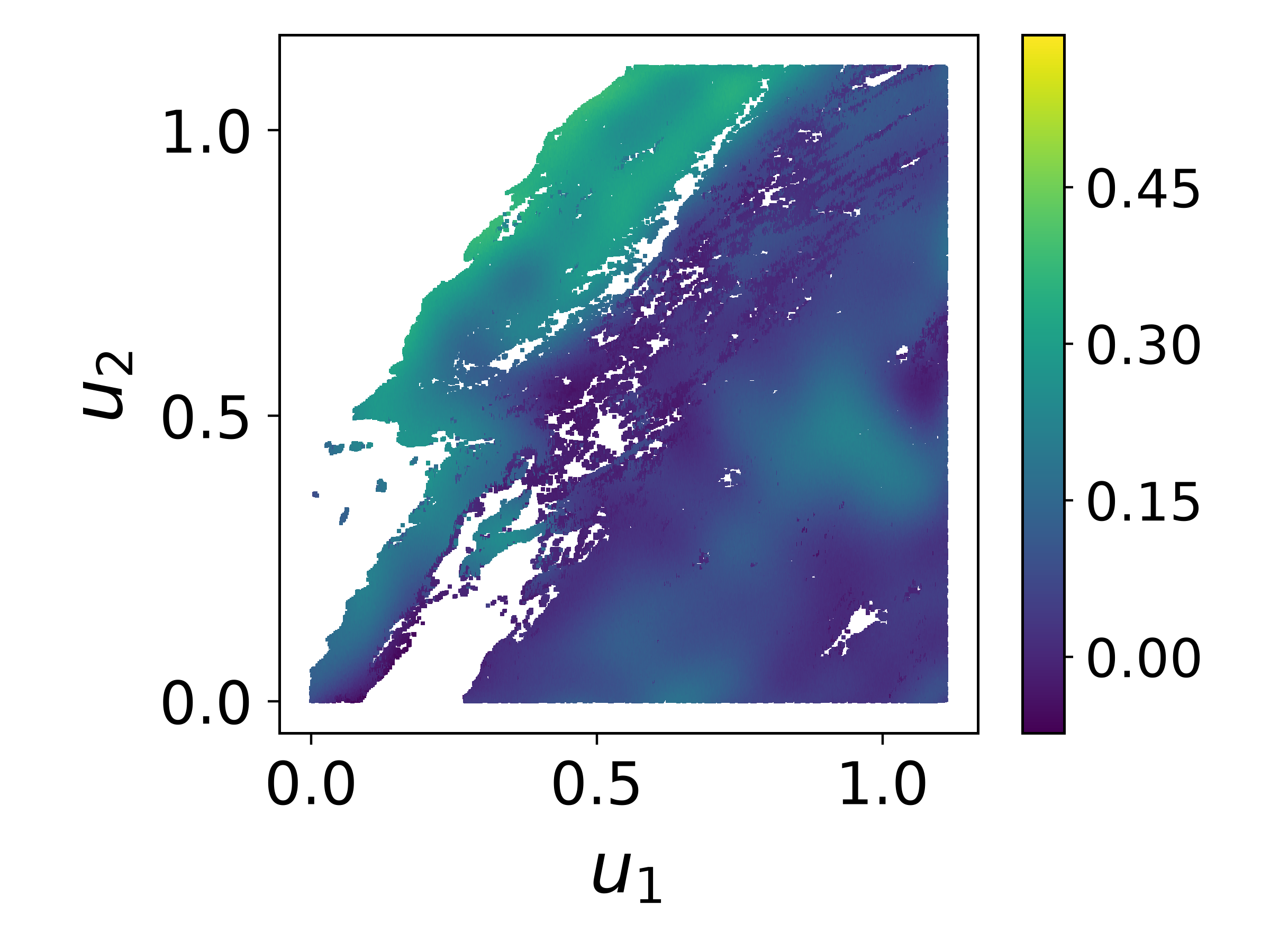}
  \caption{Lower conformal prediction limit.}
  \label{fig:prediction-lower}
\end{subfigure}\hfill
\begin{subfigure}[t]{0.245\textwidth}
  \centering
  \includegraphics[width=\linewidth]{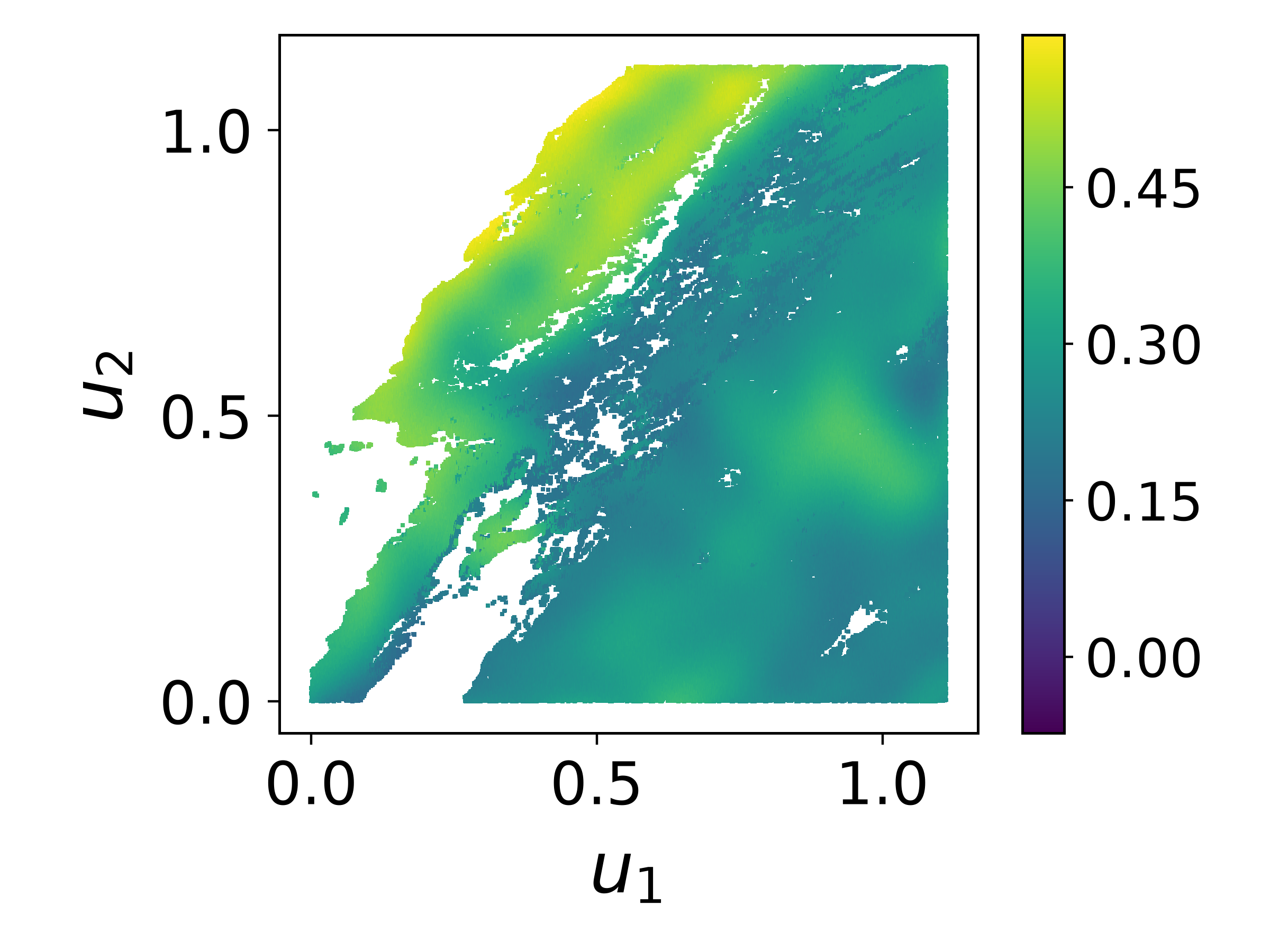}
  \caption{Upper conformal prediction limit.}
  \label{fig:prediction-upper}
\end{subfigure}
\caption{GeoVAE predictions and predictive uncertainty at held-out test
locations. Panels show the observed response
$\log\{\operatorname{NDVI}(\mathbf{u})+1\}$, posterior mean predictions,
and lower and upper conformal prediction limits.
All panels use a common color scale on the transformed response scale.}
\label{fig:prediction}
\end{figure}

Table~\ref{tab:model_comparison} reports predictive performance and computation time for GeoVAE and all competing methods. GeoVAE attains a test MSE of $0.0016$ and empirical coverage of $94.92$\%, very close to the nominal $95$\% level, with an average prediction interval width of $0.1674$. Relative to the nonspatial CVAE, this corresponds to an approximately $83$\% reduction in MSE and a $49$\% reduction in average interval width, while using only about one-seventh of the computation time. The laGP and Vecchia GP approaches yield slightly lower prediction errors and somewhat narrower intervals than GeoVAE. However, laGP requires roughly $1,357$ minutes of computation, compared with $3.35$ minutes for GeoVAE, and Vecchia GP shows a mild underestimation of uncertainty, and it is still about seven times slower than GeoVAE. Overall, the results highlight a clear trade-off between predictive accuracy and computational cost: the MCMC-free GeoVAE delivers the shortest runtime, scales favorably to very large datasets, and achieves coverage close to the nominal level upon calibration through split conformal strategy, whereas the MCMC-based GP approximation methods offer marginal gains in point prediction accuracy at substantially higher computational expense. BASS and DeepKriging are omitted from the comparison because their implementations could not be fitted at this data scale within the available computational resources.

\begin{table}[ht]
\centering
\caption{Out-of-sample predictive performance and computation time for the MODIS vegetation data. MSE denotes mean squared prediction error, Coverage is the empirical coverage of nominal 95\% prediction intervals, and Interval Width is their average width. Prediction errors and interval widths are evaluated on the $\log(\operatorname{NDVI}+1)$ scale. Computation time is reported in minutes.}
\label{tab:model_comparison}
\begin{tabular}{lcccc}
\toprule
Model & MSE & Test Coverage & Interval Width & Time (min) \\
\midrule
GeoVAE            & 0.0016 & 0.9492 & 0.1674 & 3.3460 \\
Nonspatial CVAE    & 0.0093       & 0.9550    & 0.3254     & 22.6210     \\
laGP               & 0.0008       & 0.9506    & 0.0887     & 1356.8178     \\
Vecchia GP     & 0.0005       & 0.9371    & 0.0813     &  22.8091    \\
\bottomrule
\end{tabular}
\end{table}
\section{Conclusion and Future Work}\label{sec:conclusion}
Remote sensing technologies have led to an explosion of massive geocoded datasets, pushing traditional Gaussian process–based spatial models beyond their computational limits. Even carefully crafted GP approximation methods, while capable of capturing complex spatial variability, remain constrained by their reliance on MCMC-based inference. In this article, we introduce a fundamentally different, MCMC-free approach: a hierarchical variational auto-encoder tailored to spatially varying coefficient models for big spatial data. The proposed GeoVAE framework assigns a dedicated encoder to the basis coefficients associated with each predictor-specific coefficient surface, aggregates the coefficient-specific latent representations from each encoder through a shared encoder, and then reconstructs the coefficient surfaces via corresponding decoders. An end-to-end training objective that directly reconstructs the observed responses ensures that all predictor-specific encoders and decoders, along with the shared encoder, are learned jointly in a coherent manner. Under mild regularity conditions, our theoretical results establish consistency of the estimated coefficient surfaces and of out-of-sample predictions, providing formal accuracy guarantees for the proposed estimator. Empirically, GeoVAE delivers predictive performance comparable to state-of-the-art deep learning methods and MCMC-based GP approximations, while achieving substantially lower computation times and thus offering a practical route to scalable spatial inference on large-scale geospatial datasets.

As a direction for future work, we plan to extend GeoVAE to multi-scale spatial models, in which encoders and decoders operating at different spatial resolutions collaboratively approximate fine- and coarse-scale spatial structure. Such a multi-resolution GeoVAE would further enhance flexibility in capturing spatial features across scales, while retaining the computational advantages of variational inference.

\section*{Appendix}
\subsection{Auxiliary Results}
We state an auxiliary lemma which will be used in the proofs of main consistency results.
\begin{lemma}\label{lemma: bound1}
Let $\mu(\bu)$ be parameterized as an $L$-layer feedforward neural network with ReLU activation $\rho(\cdot)$ and parameter vector ${\boldsymbol \upsilon}$ collecting all vectorized weight matrices and bias vectors. Let $||\cdot||$ and $||\cdot||_F$ represent the Euclidean and Frobenius norms, respectively.\\
(i) If $||{\boldsymbol \upsilon}||\leq R_{\upsilon}<\infty$, then
$||\bmu(\bu)||\leq R_{\upsilon}^L(||\bu||+1)^L$.\\
(ii) If additionally
$E[||\bu||^r]\leq C_u<\infty$, for all $r>1$, for some constant $C_u>0$, then there exists a constant $C_\mu>0$ such that $E[|\bmu(\bu)||^2]\leq C_\mu<\infty$.
\end{lemma}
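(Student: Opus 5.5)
We prove part (i) by induction on the layers, tracking how the network output grows with the input. Write the network as $\bh^{(0)}=\bu$, $\bh^{(\ell)}=\rho(\bW^{(\ell)}\bh^{(\ell-1)}+\bb^{(\ell)})$ for $\ell=1,\ldots,L-1$, and $\bmu(\bu)=\bW^{(L)}\bh^{(L-1)}+\bb^{(L)}$ (the linear output head).

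Two elementary facts drive the induction. First, ReLU is $1$-Lipschitz with $\rho(0)=0$, so $\|\rho(\bv)\|\le\|\bv\|$ componentwise and hence in Euclidean norm. Second, every block of ${\boldsymbol\upsilon}$ is bounded by the whole vector, so $\|\bW^{(\ell)}\|_{op}\le\|\bW^{(\ell)}\|_F\le\|{\boldsymbol\upsilon}\|\le R_\upsilon$ and $\|\bb^{(\ell)}\|\le R_\upsilon$. These give the one-step bound
$$\|\bh^{(\ell)}\|\le \|\bW^{(\ell)}\|_F\|\bh^{(\ell-1)}\|+\|\bb^{(\ell)}\|\le R_\upsilon(\|\bh^{(\ell-1)}\|+1).$$

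The induction hypothesis I would carry is $\|\bh^{(\ell)}\|+1\le \max(R_\upsilon,1)^{\ell}\,2^{\ell}(\|\bu\|+1)^{\ell}$. At $\ell=L$ this yields $\|\bmu(\bu)\|\le C_L R_\upsilon^L(\|\bu\|+1)^L$ when $R_\upsilon\ge1$.

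The stated constant-free form is the main obstacle. The additive $+1$ at each layer means the bound $R_\upsilon^L(\|\bu\|+1)^L$ does not follow verbatim unless the constants are tracked carefully. If $R_\upsilon\ge 1$, one can use $R_\upsilon(a+1)\le R_\upsilon(a+1)$ and, inductively, $\|\bh^{(\ell)}\|\le R_\upsilon^\ell(\|\bu\|+1)^\ell-1$. To make the step close, I would verify $R_\upsilon(R_\upsilon^{\ell-1}(\|\bu\|+1)^{\ell-1}) \le R_\upsilon^\ell(\|\bu\|+1)^\ell-1$. This holds when $R_\upsilon(\|\bu\|+1)\ge 1$ and $(\|\bu\|+1)\ge1$, using $x^{\ell-1}\le x^\ell-1/R_\upsilon^{\ell}\cdots$ with care. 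Should that fail, I would state (i) with a harmless constant $C_L$ (e.g. $2^L$) and assume WLOG $R_\upsilon\ge1$ by enlarging it, since only finiteness matters downstream.

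For part (ii), square the bound from (i) and take expectations:
$$E\|\bmu(\bu)\|^2\le R_\upsilon^{2L}E[(\|\bu\|+1)^{2L}].$$
Expand $(\|\bu\|+1)^{2L}$ binomially, or use $(a+1)^{2L}\le 2^{2L-1}(a^{2L}+1)$. The right-hand side is then at most $R_\upsilon^{2L}2^{2L-1}(C_u+1)$ by the moment assumption with $r=2L$. This gives $C_\mu$, and part (ii) is routine once (i) is in place.
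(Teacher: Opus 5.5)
Your route is the paper's route: the one-step bound $\|\bW^{(\ell)}\bh+\bb^{(\ell)}\|\le R_\upsilon(\|\bh\|+1)$, non-expansiveness of ReLU, iteration over the layers, and then squaring and taking moments for (ii). The paper simply asserts that iterating yields (i), and your worry about the additive $+1$ is justified. The iteration only gives $\|\bmu(\bu)\|\le R_\upsilon^L\|\bu\|+R_\upsilon^L+R_\upsilon^{L-1}+\cdots+R_\upsilon$, which exceeds $R_\upsilon^L(\|\bu\|+1)^L$ at $\bu=\bzero$. In fact (i) as stated is false when $R_\upsilon<1$ and $L\ge 2$. Set every weight and bias to zero except the output bias, with nonnegative entries and $\|\bb^{(L)}\|=R_\upsilon$. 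Then $\|{\boldsymbol\upsilon}\|=R_\upsilon$ but $\|\bmu(\bzero)\|=R_\upsilon>R_\upsilon^L$. So a constant, or replacing $R_\upsilon$ by $\max(R_\upsilon,1)$, is necessary, not merely convenient.

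Your attempted constant-free induction should be dropped. The invariant $\|\bh^{(\ell)}\|\le R_\upsilon^\ell(\|\bu\|+1)^\ell-1$ already fails at $\ell=1$, $\bu=\bzero$: take $\bW^{(1)}=\bzero$ and $\bb^{(1)}\ge 0$ entrywise with $\|\bb^{(1)}\|=R_\upsilon$. The step you would need, $R_\upsilon^\ell(a+1)^{\ell-1}\le R_\upsilon^\ell(a+1)^\ell-1$, also fails at $a=0$.

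Your fallback is sound. The bound $\|\bh^{(\ell)}\|+1\le 2\max(R_\upsilon,1)(\|\bh^{(\ell-1)}\|+1)$ gives $\|\bmu(\bu)\|\le\{2\max(R_\upsilon,1)\}^L(\|\bu\|+1)$. That is all that (ii) and the proof of Theorem~\ref{lem:param_consistency} use.

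If you want the exact stated form for $R_\upsilon\ge 1$, you must exploit that $\|{\boldsymbol\upsilon}\|\le R_\upsilon$ is a joint constraint rather than a per-block one. Set $r_\ell^2=\|\bW^{(\ell)}\|_F^2+\|\bb^{(\ell)}\|^2$, so $\sum_\ell r_\ell^2\le R_\upsilon^2$. Cauchy--Schwarz gives $\|\bh^{(\ell)}\|^2\le r_\ell^2(\|\bh^{(\ell-1)}\|^2+1)$, hence $\|\bmu(\bu)\|^2\le r_L^2\prod_{\ell<L}(1+r_\ell^2)\,(\|\bu\|^2+1)$. By AM--GM this is at most $\{(R_\upsilon^2+L-1)/L\}^L(\|\bu\|^2+1)$, and since $R_\upsilon\ge 1$ that is at most $R_\upsilon^{2L}(\|\bu\|+1)^{2L}$.

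Your part (ii) is fine. Using $(a+1)^{2L}\le 2^{2L-1}(a^{2L}+1)$ needs only the $r=2L$ moment. That is slightly cleaner than the paper's binomial expansion, whose $l=1$ term is not literally covered by the hypothesis ``$r>1$''.
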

\begin{proof}
(i) Let $\bW_1$ and $\bb_1$ represent the weight matrix and bias vector for the first layer, respectively. At the first layer, $||\bW_1\bu+\bb_1||\leq ||\bW_1||_F||\bu||+||\bb_1||\leq R_{\upsilon}(||\bu||+1).$ Since $|\rho(x)|=|max(x,0)\leq |x|$ for ReLU, applying the activation function $\rho(\cdot)$ does not increase the norm. Iterating over all L layers yields part (i).\\
(ii) $E[||\bmu(\bu)||^2]\leq R_{\upsilon}^{2L} \sum_{l=0}^{2L}{2L\choose l}E[||\bu||^{l}]\leq C_\mu$ for some constant $C_\mu>0$, where the last inequality uses the assumed finiteness of all moments of $||\bu||$.
\end{proof}

\subsection{Proofs of Consistency Results}
\underline{\textbf{Proof of Theorem~\ref{lem:param_consistency}}}
\begin{proof}
Decompose the error as
\begin{align*}
\hat\beta_j(\bu) - \beta_j^*(\bu)
&=
\left\{\bB_j(\bu)^\top\bigl(\bar\balpha_j - \tilde\balpha_j\bigr)\right\}
+
\left\{\bB_j(\bu)^\top\bigl(\tilde\balpha_j - \balpha_j^{(K_j)}\bigr)\right\}
+
\left\{\bB_j(\bu)^\top\balpha_j^{(K_j)} - \beta_j^*(\bu)\right\}\\
&=T_1(\bu)+T_2(\bu)+T_3(\bu)
\end{align*}
so that by the Cauchy inequality,
\begin{equation}\label{eq:original}
E_\nu\!\left[\bigl(\hat\beta_j(\bu) - \beta_j^*(\bu)\bigr)^2\right]
\leq 3\left\{
E_\nu\bigl[T_1(\bu)^2\bigr]
+
E_\nu\bigl[T_2(\bu)^2\bigr]
+
E_\nu\bigl[T_3(\bu)^2\bigr]
\right\}.
\end{equation}

\paragraph{\underline{Bound on $E_\nu[T_3(\bu)^2]$}}
By Assumption~2,
\begin{equation*}
|T_3(\bu)| = \bigl|\beta_j^*(\bu) - \bB_j(\bu)^\top\balpha_j^{(K_j)}\bigr|
\leq C_j K_{j,N}^{-s_j/d},
\end{equation*}
uniformly in $\bu \in \mathcal{D}$. Therefore,
$E_\nu\bigl[T_3(\bu)^2\bigr]
\leq C_j^2 K_{j,N}^{-2s_j/d}.$
With $K_{j,N} = O\bigl(N^{d/(2s_j+d)}\bigr)$,
\begin{equation}\label{eq:third_term_conv}
E_\nu\bigl[T_3(\bu)^2\bigr]
= O\!\left(N^{-2s_j/(2s_j+d)}\right) \to 0.
\end{equation}

\paragraph{\underline{Bound on $E_\nu[T_1(\bu)^2]$}}

From equation (\ref{eq:local_decoder}), we have the decomposition
\begin{equation*}
\balpha_j^{(m)}
=
\bmu_{\hat{\btheta}_j}\bigl(\bgamma_j^{(m)}\bigr)
+
\bsigma_{\hat{\btheta}_j}\bigl(\bgamma_j^{(m)}\bigr)
\odot
\bxi_{\balpha_j}^{(m)},
\end{equation*}
so that
\begin{equation*}
\mathrm{Var}\bigl(\balpha_j^{(m)}\bigr)
=
\mathrm{Var}_{q_{\hat\bPhi}}\!\left(\bmu_{\hat{\btheta}_j}\bigl(\bgamma_j^{(m)}\bigr)\right)
+
E_{q_{\hat\bPhi}}\!\left[
  \mathrm{diag}\left\{\bsigma_{\hat{\btheta}_j}^2\bigl(\bgamma_j^{(m)}\bigr)\right\}
\right].
\end{equation*}
Since the GeoVAE ELBO enforces
$\sum_{j=0}^J
\mathrm{KL}\!\left\{
  q_{\hat{\bphi}_j}(\bz_j \mid \by)
  \,\Big\|\,
  \mathcal{N}(\bzero, \bI_{d_j})
\right\}
< \infty,$
using the closed-form KL expression from equation (\ref{eq:hier_kl}), this implies
\begin{equation*}
\|\bmu_{\hat{\bphi}_j}(\by)\| < \infty
\quad\text{and}\quad
\|\bsigma_{\hat{\bphi}_j}(\by)\| < \infty.
\end{equation*}
For the $r$th component
$\gamma_{j,r}
= \mu_{\hat{\bphi}_j,r}(\by) + \sigma_{\hat{\bphi}_j,r}(\by)\,\xi_{j,r}$,
with $\xi_{j,r} \sim \mathcal{N}(0,1)$, applying the inequality
$(a+b)^R \leq 2^{R-1}(a^R + b^R)$ for any $a,b \ge 0$ and $R>1$ gives
\begin{equation*}
|\gamma_{j,r}|^R
\leq
2^{R-1}\!\left(
  |\mu_{\hat{\bphi}_j,r}(\by)|^R
  +
  \sigma_{\hat{\bphi}_j,r}(\by)^R\,|\xi_{j,r}|^R
\right).
\end{equation*}
Taking expectations and using the fact that $\xi_{j,r} \sim \mathcal{N}(0,1)$ has all finite moments,
\begin{equation*}
E_{q_{\hat{\bPhi}}}\!\left[|\gamma_{j,r}|^R\right]
\leq
2^{R-1}\!\left(
  |\mu_{\hat{\bphi}_j,r}(\by)|^R
  +
  \sigma_{\hat{\bphi}_j,r}(\by)^R\,\mathbb{E}[|\xi_{j,r}|^R]
\right)
\leq
C_\gamma
< \infty,
\end{equation*}
for some constant $C_\gamma > 0$. Hence all moments of $\|\bgamma_j\|$ are finite. Invoking
Assumption~5 and Lemma~1.1(ii), we obtain
\begin{equation*}
E_{q_{\hat{\bPhi}}}\!\left[
  \bigl\|
    \bmu_{\hat{\btheta}_j}\bigl(\bgamma_j^{(m)}\bigr)
  \bigr\|^2
\right]
\leq
C_\mu
< \infty,
\end{equation*}
for some constant $C_\mu > 0$.

By the same argument applied to the decoder variance network, and using Assumption~5,
\begin{equation*}
E_{q_{\hat{\bPhi}}}\!\left[
  \sigma_{\hat{\btheta}_j,r}\bigl(\bgamma_j^{(m)}\bigr)
\right]
\leq
\sigma_{\max}
< \infty,
\end{equation*}
where $\sigma_{\hat{\btheta}_j,r}(\bgamma_j^{(m)})$ is the $r$th element of $\sigma_{\hat{\btheta}_j}(\bgamma_j^{(m)})$,
and $\sigma_{\max} > 0$ depends only on the bounded parameter space from Assumption~5.
Combining both bounds, for the $r$th element $\alpha_{j,r}^{(m)}$ of $\balpha_j^{(m)}$,
\begin{equation*}
\mathrm{Var}\bigl(\alpha_{j,r}^{(m)}\bigr)
\leq
C_\mu^2 + \sigma_{\max}^2
=: V_j
< \infty,
\quad
r = 1,\ldots,K_j.
\end{equation*}
Summing over $r = 1,\ldots,K_j$,
\begin{equation*}
E\!\left[
  \bigl\|
    \balpha_j^{(m)} - \tilde\balpha_j
  \bigr\|_2^2
\right]
=
\sum_{r=1}^{K_j}
\mathrm{Var}\bigl(\alpha_{j,r}^{(m)}\bigr)
\leq
K_j V_j.
\end{equation*}
By the i.i.d.\ structure of
$\{\balpha_j^{(m)}\}_{m=1}^M$ given the observed data and the independence of Monte Carlo draws
across samples,
\begin{equation}\label{eqinq0}
E\!\left[
  \bigl\|
    \bar\balpha_j - \tilde\balpha_j
  \bigr\|_2^2
\right]
=
\frac{1}{M}
E\!\left[
  \bigl\|
    \balpha_j^{(1)} - \tilde\balpha_j
  \bigr\|_2^2
\right]
\leq
\frac{K_j V_j}{M}.
\end{equation}
Finally, applying the Cauchy–Schwarz inequality and Assumption~3
($\|\bB_j(\bu)\| \leq C_B\sqrt{K_j}$), we obtain
\begin{align*}
E_\nu[T_1(\bu)^2]
&=
E_\nu\!\left[
  \bigl(
    \bB_j(\bu)^\top(\bar\balpha_j - \tilde\balpha_j)
  \bigr)^2
\right]
\leq
E_\nu\!\left[\|\bB_j(\bu)\|^2\right]
\cdot
E\!\left[
  \bigl\|
    \bar\balpha_j - \tilde\balpha_j
  \bigr\|_2^2
\right] \\
&\leq
C_B^2 K_j \cdot \frac{K_j V_j}{M}
=
O_P\!\left(\frac{K_j^2}{M}\right),
\end{align*}
since $V_j$ does not depend on $N$. With the choice $K_{j,N} = O\bigl(N^{d/(2s_j+d)}\bigr)$ and
$M = M_N \geq N^{2d/(2s_j+d)}$,
\begin{equation}\label{eq:first_term_conv}
E_\nu[T_1(\bu)^2]
=
O_P\!\left(\frac{K_j^2}{M}\right)
=
O_P\!\left(\frac{N^{2d/(2s_j+d)}}{M_N}\right)
\stackrel{P}{\to} 0
\quad\text{as } N \to \infty.
\end{equation}

\paragraph{\underline{Bound on $\mathbb{E}_\nu[T_2(\bu)^2]$}}
By the Cauchy--Schwarz inequality and Assumption~3 (i.e.,
$\sup_{\bu \in \mathcal{D}} |B_{jk}(\bu)| \leq C_B$, so that
$\|\bB_j(\bu)\|^2 \leq C_B^2 K_j$), we have
\begin{align}\label{eq:mother}
E_\nu[T_2(\bu)^2]
=
E_\nu\!\left[
  \bigl( \bB_j(\bu)^\top (\tilde{\balpha}_j - \balpha_j^{(K_j)}) \bigr)^2
\right]
\leq
E_\nu\!\left[\|\bB_j(\bu)\|^2\right]
\cdot
\|\tilde{\balpha}_j - \balpha_j^{(K_j)}\|^2
\leq
C_B^2 K_j \,\|\tilde{\balpha}_j - \balpha_j^{(K_j)}\|^2. 
\end{align}
It therefore suffices to establish an upper bound on
$\|\tilde{\balpha}_j - \balpha_j^{(K_j)}\|^2$.

Under the reference configuration $(\bPhi^0, \bTheta^0)$ from Assumption~6, the
GeoVAE ELBO decomposes as
\begin{align*}
\mathcal{L}(\bPhi^0, \bTheta^0)
=
\underbrace{
  \frac{1}{2\tau^2}
  \mathbb{E}_{q_{\bPhi^0}, q_{\bTheta^0}}
  \Bigl\|
    \by - \sum_{j=0}^J \bD^{(j)}\balpha_j
  \Bigr\|_2^2
}_{\text{reconstruction term}}
+
\underbrace{
  \lambda_{\mathrm{local}}
  \sum_{j=1}^J
  \mathrm{KL}\!\left(
    q_{\bPhi^0}(\bz_j \mid \by)
    \,\big\|\,
    \mathcal{N}(\bzero, \bI_{d_j})
  \right)
}_{\text{regularization term}}.
\end{align*}
By Assumption~(A6.1), $\bmu_{\btheta_j^0}(\bgamma_j) = \balpha_j^{(K_j)}$, so
$\balpha_j = \bmu_{\btheta_j^0} + \bsigma_{\btheta_j^0} \odot \bxi_{\balpha_j}$.
Using the inequality $(a+b)^2 \leq 2(a^2 + b^2)$, we obtain
\begin{align*}
\frac{1}{2\tau^2}
E_{q_{\bPhi^0}, q_{\bTheta^0}}
\Bigl\|
  \by - \sum_{j=0}^J \bD^{(j)}\balpha_j
\Bigr\|^2
\leq
\frac{1}{\tau^2}
E
\Bigl\|
  \by - \sum_{j=0}^J \bD^{(j)}\balpha_j^{(K_j)}
\Bigr\|^2
+
\frac{1}{\tau^2}
E
\Bigl\|
  \sum_{j=0}^J \bD^{(j)}\bsigma_{\btheta_j^0} \odot \bxi_{\balpha_j}
\Bigr\|^2.
\end{align*}
\underline{\textit{Bounding the first part.}}
Under the true data-generating process
$\by = \sum_{j=0}^J \bD^{(j)}\balpha_j^{(K_j)} + \bg^* + \bepsilon$, where $\bg^*$ is the
misspecification bias vector defined in (\ref{eq:misclass}), we have
\begin{align}\label{eqinq1}
\frac{1}{\tau^2}
E
\Bigl\|
  \by - \sum_{j=0}^J \bD^{(j)}\balpha_j^{(K_j)}
\Bigr\|^2
=
\frac{1}{\tau^2}\|\bg^*\|^2
+
\frac{1}{\tau^2}\mathrm{Var}(\bepsilon)
=
\frac{1}{\tau^2}\|\bg^*\|^2 + N.
\end{align}
 $\|\bg^*\|^2=\sum_{i=1}^N g_i^{*2}\leq (J+1)\sum_{i=1}^N\sum_{j=0}^J |x_j(\bu_i)|^2|\beta_j^*(\bu_i)-\bB_j(\bu_i)^\top\balpha_j^{(K_j)}|^2\leq C_x^2N\sum_{j=0}^JC_j^2K_j^{-2s_j/d}$, By Assumptions~2 and 7.

\noindent\underline{\textit{Bounding the second part.}}
By Assumption~4 (bounded eigenvalues of $\bSigma_N$) and Assumption~A6.1,
\begin{equation}\label{eqinq2}
\frac{1}{\tau^2}
E
\Bigl\|
  \sum_{j=0}^J \bD^{(j)}\bsigma_{\btheta_j^0} \odot \bxi_{\balpha_j}
\Bigr\|^2
\leq
\frac{(J+1)}{\tau^2}
\sum_{j=0}^J
E
\Bigl[
  \bigl\|
    \mathrm{diag}(\bsigma_{\btheta_j^0})
    \bD^{(j)\top} \bD^{(j)}
  \bigr\|_F
\Bigr]
=
O_P(N\delta_N^2).
\end{equation}
\underline{\textit{Bounding the regularization term.}}
By Assumption~A6.2,
\begin{equation}\label{eqinq3}
\lambda_{\mathrm{local}}
\sum_{j=1}^J
\mathrm{KL}\!\left(
  q_{\bPhi^0}(\bz_j \mid \by)
  \,\big\|\,
  \mathcal{N}(\bzero, \bI_{d_j})
\right)
\leq
(J+1)\lambda_{\mathrm{local}} C_{\mathrm{KL}}
=
O_P(\Lambda_N),
\end{equation}
where $\Lambda_N = (J+1)\lambda_{\mathrm{local}} C_{\mathrm{KL}}$.

Combining (\ref{eqinq1}), (\ref{eqinq2}), (\ref{eqinq3}) and dividing by $N$ gives (also using $K_{j,N}=O(N^{-d/(2s_j+d)})$)
\begin{equation}\label{eqinq4}
\frac{1}{N}\mathcal{L}(\bPhi^0,\bTheta^0)
\leq
O_P\!\bigl(\sum_{j=0}^J K_{j,N}^{-2s_j/d}\bigr)
+ O_P(\delta_N^2)
+ O_P(\Lambda_N)\leq O\!\bigl(\sum_{j=0}^J N^{-2s_j/(2s_j+d)}\bigr)
+ O_P(\delta_N^2)
+ O_P(\Lambda_N).
\end{equation}
By Assumption~(A6.3), the trained parameters satisfy
$\frac{1}{N}\mathcal{L}(\widehat{\bPhi},\widehat{\bTheta})
\leq
\frac{1}{N}\mathcal{L}(\bPhi^0,\bTheta^0) + \eta_N.$
Since the KL terms in $\mathcal{L}(\widehat{\bPhi},\widehat{\bTheta})$ are nonnegative, we have
\begin{equation}\label{eqinq5}
\frac{1}{2N\tau^2}
E_{q_{\hat{\bPhi}},q_{\hat{\bTheta}}}
\Bigl\|
  \by - \sum_{j=0}^J \bD^{(j)}\balpha_j
\Bigr\|_2^2
\leq
\frac{1}{N}\mathcal{L}(\bPhi^0,\bTheta^0)
\leq
O_P\!\bigl(\sum_{j=0}^J N^{-2s_j/(2s_j+d)}\bigr)
+ O_P(\delta_N^2)
+ O_P(\Lambda_N)
+ O_P(\eta_N).
\end{equation}
Since $\tilde{\balpha}_j = E_{q_{\hat{\bPhi}},p_{\hat{\btheta}_j}}[\balpha_j]$ and
$\|\cdot\|^2$ is convex, Jensen’s inequality yields
\[
\Bigl\|
  \by - \sum_{j=0}^J \bD^{(j)}\tilde{\balpha}_j
\Bigr\|^2
=
\Bigl\|
  E\bigl[\by - \sum_j \bD^{(j)}\balpha_j\bigr]
\Bigr\|^2
\leq
E
\Bigl\|
  \by - \sum_j \bD^{(j)}\balpha_j
\Bigr\|^2.
\]
Combining with (\ref{eqinq5}),
\begin{equation}\label{eqinq6}
\frac{1}{2N\tau^2}
\Bigl\|
  \by - \sum_{j=0}^J \bD^{(j)}\tilde{\balpha}_j
\Bigr\|_2^2
\leq
O_P\!\bigl(\sum_{j=0}^JN^{-2s_j/(2s_j+d)}\bigr)
+ O_P(\delta_N^2)
+ O_P(\Lambda_N)
+ O_P(\eta_N).
\end{equation}
Substituting $\by = \sum_{j=0}^J \bD^{(j)}\balpha_j^{(K_j)} + \bg^* + \bepsilon$ into (\ref{eqinq6}) gives
\[
\frac{1}{2N\tau^2}
E
\Bigl\|
  \sum_{j=0}^J \bD^{(j)}\bigl(\balpha_j^{(K_j)} - \tilde{\balpha}_j\bigr)
  + \bg^* + \bepsilon
\Bigr\|_2^2
\leq
O_P\!\bigl(N^{-2s_j/(2s_j+d)}\bigr)
+ O_P(\delta_N^2)
+ O_P(\Lambda_N)
+ O_P(\eta_N).
\]
Expanding the squared norm and using $E[\|\bepsilon\|^2] = N\tau^2$ and
$E[\bepsilon] = 0$,
\begin{align}\label{eqinq7}
\frac{1}{2N\tau^2}
\|\bD(\balpha^{(K)} - \tilde{\balpha})\|_2^2
&\leq
\frac{1}{N}
\bigl|
  \bg^{* \top}\bD(\balpha^{(K)} - \tilde{\balpha})
\bigr|
+
\frac{1}{N}
\bigl|
  \bepsilon^\top \bD(\balpha^{(K)} - \tilde{\balpha})
\bigr|\nonumber\\
&+
O_P\!\bigl(\sum_{j=0}^JN^{-2s_j/(2s_j+d)}\bigr)
+ O_P(\delta_N^2)
+ O_P(\Lambda_N)
+ O_P(\eta_N).
\end{align}
\textbf{Bounding the Two Cross Terms.}

\emph{Noise cross term.}
Since $\bepsilon \sim \mathcal{N}(\bzero,\tau^2 \bI_N)$ is independent of $\tilde{\balpha}$ and $\bD$,
\[
E\!\left[
  \frac{1}{N^2}\|\bD^\top\bepsilon\|_2^2
\right]
=
\frac{\tau^2}{N^2}\mathrm{tr}(\bD^\top \bD)
=
\frac{\tau^2}{N}\mathrm{tr}(\bSigma_N)
\leq
\frac{\tau^2 K \tilde{\lambda}_2}{N}
=
O_P\!\left(\frac{K}{N}\right).
\]
Applying Young’s inequality with parameter $\kappa = \tilde{\lambda}_1/(4\tau^2)$,
\begin{equation}\label{eqinq8}
\frac{1}{N}
\bigl|
  \bepsilon^\top \bD(\balpha^{(K)} - \tilde{\balpha})
\bigr|
\leq
\frac{\tilde{\lambda}_1}{4\tau^2 N}
\|\bD(\balpha^{(K)} - \tilde{\balpha})\|_2^2
+
O_P\!\left(\frac{K}{N}\right).
\end{equation}
\emph{Misspecification cross term.}
By the Cauchy--Schwarz inequality and Assumption~2 (i.e.,
$\|\bg^*\|^2/N = O(\sum_{j=0}^J K_j^{-2s_j/d})$), we obtain
\begin{equation}\label{eqinq9}
\frac{1}{N}\bigl|\bg^{*\top}\bD(\balpha^{(K)} - \tilde{\balpha})\bigr|
\leq
\frac{\tilde{\lambda}_1}{4\tau^2 N}\,\|\bD(\balpha^{(K)} - \tilde{\balpha})\|_2^2
+
O\!\left(\sum_{j=0}^JK_j^{-2s_j/d}\right).
\end{equation}
\textbf{ Collecting All Bounds.}
Substituting \eqref{eqinq8} and \eqref{eqinq9} into \eqref{eqinq7}, and absorbing the two
$\frac{\tilde{\lambda}_1}{4\tau^2 N}\|\bD(\balpha^{(K)} - \tilde{\balpha})\|_2^2$ terms into the left-hand side, we obtain
\begin{equation}
\left(\frac{1}{2\tau^2} - \frac{\tilde{\lambda}_1}{2\tau^2}\right)
\frac{\|\bD(\balpha^{(K)} - \tilde{\balpha})\|_2^2}{N}
=
O_P\!\left(
  \frac{K}{N}
  +\sum_{j=0}^J K_j^{-2s_j/d}
  + \delta_N^2
  + \Lambda_N
  + \eta_N
\right).
\end{equation}
Applying the lower eigenvalue bound from Assumption~4,
$\tilde{\lambda}_1\|\balpha^{(K)} - \tilde{\balpha}\|_2^2 \leq \frac{1}{N}\|\bD(\balpha^{(K)} - \tilde{\balpha})\|_2^2$,
yields
\begin{equation}\label{eqinq11}
\|\tilde{\balpha}_j - \balpha_j^{(K_j)}\|_2^2
=
O_P\!\left(
  \frac{K}{N}
  + \sum_{j=0}^J K_j^{-2s_j/d}
  + \delta_N^2
  + \Lambda_N
  + \eta_N
\right).
\end{equation}
Substituting \eqref{eqinq11} back into the \eqref{eq:mother}, we conclude
\begin{equation}\label{eq:second_term_conv}
\mathbb{E}_\nu[T_2(\bu)^2]
\leq
C_B^2 K_j \cdot
O_P\!\left(
  \frac{K}{N}
  + \sum_{j=0}^J K_j^{-2s_j/d}
  + \delta_N^2
  + \Lambda_N
  + \eta_N
\right)\stackrel{P}{\rightarrow} 0.
\end{equation}
This concludes the proof by referring to \eqref{eq:original}.
\end{proof}
\noindent\underline{\textbf{Proof of Theorem~\ref{lemma: prediction}}}
\begin{proof}
 From the GeoVAE prediction procedure, the posterior predictive mean at $\bu^*$ is $\hat{y}(\bu^*) = \sum_{j=0}^J x_j(\bu^*)\,\hat{\beta}_j(\bu^*).$
The prediction error satisfies
\[
\hat{y}(\bu^*) - \mu^*(\bu^*)
=
\sum_{j=0}^J x_j(\bu^*)
\bigl(\hat{\beta}_j(\bu^*) - \beta_j^*(\bu^*)\bigr).
\]
Squaring and applying the Cauchy--Schwarz inequality, with covariate boundedness
$|x_j(\bu)| \leq C_x < \infty$ (Assumption 7), we obtain
\[
\bigl(\hat{y}(\bu^*) - \mu^*(\bu^*)\bigr)^2
\leq
(J+1)C_x^2
\sum_{j=0}^J
\bigl(\hat{\beta}_j(\bu^*) - \beta_j^*(\bu^*)\bigr)^2.
\]
Taking expectations over $\bu^* \sim \nu$ and applying Lemma~\ref{lem:param_consistency} to each term,
\[
E_\nu\!\left[
  \bigl(\hat{y}(\bu^*) - \mu^*(\bu^*)\bigr)^2
\right]
\leq
(J+1)C_x^2\sum_{j=0}^J
E_\nu\bigl(\hat{\beta}_j(\bu^*) - \beta_j^*(\bu^*)\bigr)^2\stackrel{P}{\rightarrow} 0.
\]
\end{proof}

\bibliographystyle{plainnat}
\bibliography{References}

@article{gelfand2003spatial,
  title={Spatial modeling with spatially varying coefficient processes},
  author={Gelfand, Alan E and Kim, Hyon-Jung and Sirmans, CF and Banerjee, Sudipto},
  journal={Journal of the American Statistical Association},
  volume={98},
  number={462},
  pages={387--396},
  year={2003},
  publisher={Taylor \& Francis}
}

@article{shen2015adaptive,
  title={Adaptive Bayesian procedures using random series priors},
  author={Shen, Weining and Ghosal, Subhashis},
  journal={Scandinavian Journal of Statistics},
  volume={42},
  number={4},
  pages={1194--1213},
  year={2015},
  publisher={Wiley Online Library}
}

@article{hanin2019universal,
  title={Universal function approximation by deep neural nets with bounded width and relu activations},
  author={Hanin, Boris},
  journal={Mathematics},
  volume={7},
  number={10},
  pages={992},
  year={2019},
  publisher={MDPI}
}

@article{barron1994approximation,
  title={Approximation and estimation bounds for artificial neural networks},
  author={Barron, Andrew R},
  journal={Machine learning},
  volume={14},
  number={1},
  pages={115--133},
  year={1994},
  publisher={Springer}
}

@article{van2011information,
  title={Information Rates of Nonparametric Gaussian Process Methods.},
  author={Van Der Vaart, Aad and Van Zanten, Harry},
  journal={Journal of Machine Learning Research},
  volume={12},
  number={6},
  year={2011}
}

@book{vovk2005algorithmic,
  title={Algorithmic learning in a random world},
  author={Vovk, Vladimir and Gammerman, Alexander and Shafer, Glenn},
  year={2005},
  publisher={Springer}
}

@article{lei2018distribution,
  title={Distribution-free predictive inference for regression},
  author={Lei, Jing and G’Sell, Max and Rinaldo, Alessandro and Tibshirani, Ryan J and Wasserman, Larry},
  journal={Journal of the American Statistical Association},
  volume={113},
  number={523},
  pages={1094--1111},
  year={2018},
  publisher={Taylor \& Francis}
}

@article{buhmann2000radial,
  title={Radial basis functions},
  author={Buhmann, Martin Dietrich},
  journal={Acta numerica},
  volume={9},
  pages={1--38},
  year={2000},
  publisher={Cambridge university press}
}

@article{wood2003thin,
  title={Thin plate regression splines},
  author={Wood, Simon N},
  journal={Journal of the Royal Statistical Society Series B: Statistical Methodology},
  volume={65},
  number={1},
  pages={95--114},
  year={2003},
  publisher={Oxford University Press}
}

@article{guhaniyogi2023distributed,
  title={Distributed Bayesian inference in massive spatial data},
  author={Guhaniyogi, Rajarshi and Li, Cheng and Savitsky, Terrance and Srivastava, Sanvesh},
  journal={Statistical science},
  volume={38},
  number={2},
  pages={262--284},
  year={2023},
  publisher={Institute of Mathematical Statistics}
}

@article{sauer2023vecchia,
  title={Vecchia-approximated deep Gaussian processes for computer experiments},
  author={Sauer, Annie and Cooper, Andrew and Gramacy, Robert B},
  journal={Journal of Computational and Graphical Statistics},
  volume={32},
  number={3},
  pages={824--837},
  year={2023},
  publisher={Taylor \& Francis}
}

@article{andros2024robust,
  title={Robust distributed learning of functional data from simulators through data sketching},
  author={Andros, R Jacob and Guhaniyogi, Rajarshi and Francom, Devin and Pasqualini, Donatella},
  journal={arXiv preprint arXiv:2406.18751},
  year={2024}
}

@article{nychka2015multiresolution,
  title={A multiresolution Gaussian process model for the analysis of large spatial datasets},
  author={Nychka, Douglas and Bandyopadhyay, Soutir and Hammerling, Dorit and Lindgren, Finn and Sain, Stephan},
  journal={Journal of computational and graphical statistics},
  volume={24},
  number={2},
  pages={579--599},
  year={2015},
  publisher={Taylor \& Francis}
}

@article{katzfuss2021general,
  title={A general framework for Vecchia approximations of Gaussian processes},
  author={Katzfuss, Matthias and Guinness, Joseph},
  journal={Statistical Science},
  volume={36},
  number={1},
  pages={124--141},
  year={2021},
  publisher={JSTOR}
}

@article{guhaniyogi2020large,
  title={Large multi-scale spatial modeling using tree shrinkage priors},
  author={Guhaniyogi, Rajarshi and Sanso, Bruno},
  journal={Statistica Sinica},
  volume={30},
  number={4},
  pages={2023--2050},
  year={2020},
  publisher={JSTOR}
}

@book{cressie2011statistics,
  title={Statistics for spatio-temporal data},
  author={Cressie, Noel and Wikle, Christopher K},
  year={2011},
  publisher={John Wiley \& Sons}
}

@article{heaton2019case,
  title={A case study competition among methods for analyzing large spatial data},
  author={Heaton, Matthew J and Datta, Abhirup and Finley, Andrew O and Furrer, Reinhard and Guinness, Joseph and Guhaniyogi, Rajarshi and Gerber, Florian and Gramacy, Robert B and Hammerling, Dorit and Katzfuss, Matthias and others},
  journal={Journal of agricultural, biological and environmental Statistics},
  volume={24},
  number={3},
  pages={398--425},
  year={2019},
  publisher={Springer}
}

@article{fotheringham2009geographically,
  title={Geographically weighted regression},
  author={Fotheringham, A Stewart and Brunsdon, Chris and Charlton, Martin},
  journal={The Sage handbook of spatial analysis},
  volume={1},
  pages={243--254},
  year={2009},
  publisher={Sage Thousand Oaks, CA}
}

@article{zhan2026mapping,
  title={Mapping Drivers of Greenness: Spatial Variable Selection for MODIS Vegetation Indices},
  author={Zhan, Qishi and Yu, Cheng-Han and Chen, Yuchi and Dong, Zhikang and Guhaniyogi, Rajarshi},
  journal={arXiv preprint arXiv:2602.07681},
  year={2026}
}

@article{mu2018estimation,
  title={Estimation and inference in spatially varying coefficient models},
  author={Mu, Jingru and Wang, Guannan and Wang, Li},
  journal={Environmetrics},
  volume={29},
  number={1},
  pages={e2485},
  year={2018},
  publisher={Wiley Online Library}
}

@incollection{wheeler2021geographically,
  title={Geographically weighted regression},
  author={Wheeler, David C},
  booktitle={Handbook of regional science},
  pages={1895--1921},
  year={2021},
  publisher={Springer}
}

@article{guhaniyogi2025bayesian,
  title={Bayesian data sketching for varying coefficient regression models},
  author={Guhaniyogi, Rajarshi and Baracaldo, Laura and Banerjee, Sudipto},
  journal={Journal of Machine Learning Research},
  volume={26},
  number={98},
  pages={1--29},
  year={2025}
}

@article{guhaniyogi2022distributed,
  title={Distributed Bayesian varying coefficient modeling using a Gaussian process prior},
  author={Guhaniyogi, Rajarshi and Li, Cheng and Savitsky, Terrance D and Srivastava, Sanvesh},
  journal={Journal of machine learning research},
  volume={23},
  number={84},
  pages={1--59},
  year={2022}
}

@article{kim2021generalized,
  title={Generalized spatially varying coefficient models},
  author={Kim, Myungjin and Wang, Li},
  journal={Journal of Computational and Graphical Statistics},
  volume={30},
  number={1},
  pages={1--10},
  year={2021},
  publisher={Taylor \& Francis}
}

@article{guhaniyogi2011adaptive,
  title={Adaptive Gaussian predictive process models for large spatial datasets},
  author={Guhaniyogi, Rajarshi and Finley, Andrew O and Banerjee, Sudipto and Gelfand, Alan E},
  journal={Environmetrics},
  volume={22},
  number={8},
  pages={997--1007},
  year={2011},
  publisher={Wiley Online Library}
}

@article{guhaniyogi2013modeling,
  title={Modeling complex spatial dependencies: Low-rank spatially varying cross-covariances with application to soil nutrient data},
  author={Guhaniyogi, Rajarshi and Finley, Andrew O and Banerjee, Sudipto and Kobe, Richard K},
  journal={Journal of agricultural, biological, and environmental statistics},
  volume={18},
  number={3},
  pages={274--298},
  year={2013},
  publisher={Springer}
}

@article{PalaciosSteel2006,
  author  = {Palacios, M. Blanca and Steel, Mark F. J.},
  title   = {Non-Gaussian Bayesian Geostatistical Modeling},
  journal = {Journal of the American Statistical Association},
  year    = {2006},
  volume  = {101},
  number  = {474},
  pages   = {604--618},
  doi     = {10.1198/016214505000001195}
}

@inproceedings{sohn2015learning,
  title     = {Learning Structured Output Representation Using Deep Conditional Generative Models},
  author    = {Sohn, Kihyuk and Lee, Honglak and Yan, Xinchen},
  booktitle = {Advances in Neural Information Processing Systems},
  volume    = {28},
  pages     = {3483--3491},
  year      = {2015},
  publisher = {Curran Associates, Inc.}
}

@article{chen2024deepkriging,
  title   = {{DeepKriging}: Spatially Dependent Deep Neural Networks for Spatial Prediction},
  author  = {Chen, Wanfang and Li, Yuxiao and Reich, Brian J. and Sun, Ying},
  journal = {Statistica Sinica},
  volume  = {34},
  number  = {1},
  pages   = {291--311},
  year    = {2024},
  doi     = {10.5705/ss.202021.0277}
}

@article{gramacy2015local,
  title   = {Local Gaussian Process Approximation for Large Computer Experiments},
  author  = {Gramacy, Robert B. and Apley, Daniel W.},
  journal = {Journal of Computational and Graphical Statistics},
  volume  = {24},
  number  = {2},
  pages   = {561--578},
  year    = {2015},
  doi     = {10.1080/10618600.2014.914442}
}

@article{vecchia1988estimation,
  title   = {Estimation and Model Identification for Continuous Spatial Processes},
  author  = {Vecchia, A. V.},
  journal = {Journal of the Royal Statistical Society: Series B (Methodological)},
  volume  = {50},
  number  = {2},
  pages   = {297--312},
  year    = {1988},
  doi     = {10.1111/j.2517-6161.1988.tb01729.x}
}

@article{gelfand2004nonstationary,
  title={Nonstationary multivariate process modeling through spatially varying coregionalization},
  author={Gelfand, Alan E and Schmidt, Alexandra M and Banerjee, Sudipto and Sirmans, CF},
  journal={Test},
  volume={13},
  number={2},
  pages={263--312},
  year={2004},
  publisher={Springer}
}

@article{francom2020bass,
  title   = {{BASS}: An {R} Package for Fitting and Performing Sensitivity Analysis of Bayesian Adaptive Spline Surfaces},
  author  = {Francom, Devin and Sans{\'o}, Bruno},
  journal = {Journal of Statistical Software},
  volume  = {94},
  number  = {8},
  pages   = {1--36},
  year    = {2020},
  doi     = {10.18637/jss.v094.i08}
}

@article{abatzoglou2016impact,
  author  = {Abatzoglou, John T. and Williams, A. Park},
  title   = {Impact of Anthropogenic Climate Change on Wildfire across Western {US} Forests},
  journal = {Proceedings of the National Academy of Sciences},
  year    = {2016},
  volume  = {113},
  number  = {42},
  pages   = {11770--11775},
  doi     = {10.1073/pnas.1607171113}
}

@article{williams2019observed,
  author  = {Williams, A. Park and Abatzoglou, John T. and Gershunov, Alexander and Guzman-Morales, Jorge and Bishop, Daniel A. and Balch, Jennifer K. and Lettenmaier, Dennis P.},
  title   = {Observed Impacts of Anthropogenic Climate Change on Wildfire in {California}},
  journal = {Earth's Future},
  year    = {2019},
  volume  = {7},
  number  = {8},
  pages   = {892--910},
  doi     = {10.1029/2019EF001210}
}

@article{westerling2006warming,
  author  = {Westerling, Anthony L. and Hidalgo, Hugo G. and Cayan, Daniel R. and Swetnam, Thomas W.},
  title   = {Warming and Earlier Spring Increase Western {US} Forest Wildfire Activity},
  journal = {Science},
  year    = {2006},
  volume  = {313},
  number  = {5789},
  pages   = {940--943},
  doi     = {10.1126/science.1128834}
}

@article{running2004continuous,
  author = {Running, Steven W. and Nemani, Ramakrishna R. and others},
  title = {A continuous satellite‐derived measure of global terrestrial primary production},
  journal = {Bioscience},
  year = {2004},
  volume = {54},
  number = {6},
  pages = {547--560}
}

@article{pettorelli2005using,
  author = {Pettorelli, Nathalie and Vik, Jon Olav and Mysterud, Atle and Gaillard, Jean‐Michel and Tucker, Compton J. and Stenseth, Nils C.},
  title = {Using the satellite‐derived NDVI to assess ecological responses to environmental change},
  journal = {Trends in Ecology \& Evolution},
  year = {2005},
  volume = {20},
  number = {9},
  pages = {503--510}
}

@inproceedings{Kipf2017,
  author    = {Kipf, Thomas N. and Welling, Max},
  title     = {Semi-Supervised Classification with Graph Convolutional Networks},
  booktitle = {International Conference on Learning Representations},
  year      = {2017},
  url       = {https://openreview.net/forum?id=SJU4ayYgl}
}

@inproceedings{Hamilton2017,
  author    = {Hamilton, Will and Ying, Zhitao and Leskovec, Jure},
  title     = {Inductive Representation Learning on Large Graphs},
  booktitle = {Advances in Neural Information Processing Systems},
  volume    = {30},
  year      = {2017},
  url       = {https://papers.nips.cc/paper_files/paper/2017/hash/5dd9db5e033da9c6fb5ba83c7a7ebea9-Abstract.html}
}

@inproceedings{Wu2019,
  author    = {Wu, Zonghan and Pan, Shirui and Long, Guodong
               and Jiang, Jing and Zhang, Chengqi},
  title     = {{Graph WaveNet} for Deep Spatial-Temporal Graph Modeling},
  booktitle = {Proceedings of the Twenty-Eighth International
               Joint Conference on Artificial Intelligence},
  pages     = {1907--1913},
  year      = {2019},
  doi       = {10.24963/ijcai.2019/264},
  url       = {https://www.ijcai.org/proceedings/2019/264}
}

@article{LeCun1998,
  author  = {LeCun, Yann and Bottou, L{\'e}on
             and Bengio, Yoshua and Haffner, Patrick},
  title   = {Gradient-Based Learning Applied to Document Recognition},
  journal = {Proceedings of the IEEE},
  volume  = {86},
  number  = {11},
  pages   = {2278--2324},
  year    = {1998},
  doi     = {10.1109/5.726791},
  url     = {https://ieeexplore.ieee.org/document/726791}
}

@article{Reichstein2019,
  author  = {Reichstein, Markus and Camps-Valls, Gustau
             and Stevens, Bjorn and Jung, Martin and Denzler, Joachim
             and Carvalhais, Nuno and {Prabhat}},
  title   = {Deep Learning and Process Understanding
             for Data-Driven {Earth} System Science},
  journal = {Nature},
  volume  = {566},
  number  = {7743},
  pages   = {195--204},
  year    = {2019},
  doi     = {10.1038/s41586-019-0912-1},
  url     = {https://www.nature.com/articles/s41586-019-0912-1}
}

@article{Raissi2019,
  author  = {Raissi, Maziar and Perdikaris, Paris
             and Karniadakis, George E.},
  title   = {Physics-Informed Neural Networks:
             A Deep Learning Framework for Solving Forward and Inverse
             Problems Involving Nonlinear Partial Differential Equations},
  journal = {Journal of Computational Physics},
  volume  = {378},
  pages   = {686--707},
  year    = {2019},
  doi     = {10.1016/j.jcp.2018.10.045},
  url     = {https://doi.org/10.1016/j.jcp.2018.10.045}
}

@inproceedings{Chen2018NODE,
  author    = {Chen, Ricky T. Q. and Rubanova, Yulia
               and Bettencourt, Jesse and Duvenaud, David K.},
  title     = {Neural Ordinary Differential Equations},
  booktitle = {Advances in Neural Information Processing Systems},
  volume    = {31},
  year      = {2018},
  url       = {https://papers.nips.cc/paper_files/paper/2018/hash/69386f6bb1dfed68692a24c8686939b9-Abstract.html}
}

@inproceedings{Vaswani2017,
  author    = {Vaswani, Ashish and Shazeer, Noam and Parmar, Niki
               and Uszkoreit, Jakob and Jones, Llion and Gomez, Aidan N.
               and Kaiser, {\L}ukasz and Polosukhin, Illia},
  title     = {Attention Is All You Need},
  booktitle = {Advances in Neural Information Processing Systems},
  volume    = {30},
  year      = {2017},
  url       = {https://papers.nips.cc/paper_files/paper/2017/hash/3f5ee243547dee91fbd053c1c4a845aa-Abstract.html}
}

@inproceedings{Gao2022,
  author    = {Gao, Zhihan and Shi, Xingjian and Wang, Hao
               and Zhu, Yi and Wang, Yuyang and Li, Mu
               and Yeung, Dit-Yan},
  title     = {{Earthformer}: Exploring Space-Time Transformers
               for {Earth} System Forecasting},
  booktitle = {Advances in Neural Information Processing Systems},
  volume    = {35},
  year      = {2022},
  doi       = {10.52202/068431-1841},
  url       = {https://proceedings.neurips.cc/paper_files/paper/2022/hash/a2affd71d15e8fedffe18d0219f4837a-Abstract-Conference.html}
}

@inproceedings{Kingma2014,
  author    = {Kingma, Diederik P. and Welling, Max},
  title     = {Auto-Encoding Variational {Bayes}},
  booktitle = {International Conference on Learning Representations},
  year      = {2014},
  url       = {https://arxiv.org/abs/1312.6114}
}

@inproceedings{Goodfellow2014,
  author    = {Goodfellow, Ian J. and Pouget-Abadie, Jean
               and Mirza, Mehdi and Xu, Bing and Warde-Farley, David
               and Ozair, Sherjil and Courville, Aaron and Bengio, Yoshua},
  title     = {Generative Adversarial Nets},
  booktitle = {Advances in Neural Information Processing Systems},
  volume    = {27},
  year      = {2014},
  url       = {https://papers.nips.cc/paper/5423-generative-adversarial-nets}
}

@article{Ravuri2021,
  author  = {Ravuri, Suman and Lenc, Karel and Willson, Matthew
             and Kangin, Dmitry and Lam, Remi and Mirowski, Piotr
             and Fitzsimons, Megan and Athanassiadou, Maria
             and Kashem, Sheleem and Madge, Sam and Prudden, Rachel
             and Mandhane, Amol and Clark, Aidan and Brock, Andrew
             and Simonyan, Karen and Hadsell, Raia and Robinson, Niall
             and Clancy, Ellen and Arribas, Alberto and Mohamed, Shakir},
  title   = {Skilful Precipitation Nowcasting Using
             Deep Generative Models of Radar},
  journal = {Nature},
  volume  = {597},
  pages   = {672--677},
  year    = {2021},
  doi     = {10.1038/s41586-021-03854-z}
}

@article{Leinonen2020,
  author  = {Leinonen, Jussi and Nerini, Daniele and Berne, Alexis},
  title   = {Stochastic Super-Resolution for Downscaling Time-Evolving
             Atmospheric Fields With a Generative Adversarial Network},
  journal = {IEEE Transactions on Geoscience and Remote Sensing},
  volume  = {59},
  number  = {9},
  pages   = {7211--7223},
  year    = {2021},
  doi     = {10.1109/TGRS.2020.3032790}
}

@article{Harris2022,
  author  = {Harris, Lucy and McRae, Andrew T. T. and Chantry, Matthew
             and Dueben, Peter D. and Palmer, Tim N.},
  title   = {A Generative Deep Learning Approach to Stochastic
             Downscaling of Precipitation Forecasts},
  journal = {Journal of Advances in Modeling Earth Systems},
  volume  = {14},
  pages   = {e2022MS003120},
  year    = {2022},
  doi     = {10.1029/2022MS003120}
}

@article{Laloy2018,
  author  = {Laloy, Eric and H{\'e}rault, Romain
             and Jacques, Diederik and Linde, Niklas},
  title   = {Training-Image Based Geostatistical Inversion Using
             a Spatial Generative Adversarial Neural Network},
  journal = {Water Resources Research},
  volume  = {54},
  pages   = {381--406},
  year    = {2018},
  doi     = {10.1002/2017WR022148}
}

@article{Zhu2019,
  author  = {Zhu, Yinhao and Zabaras, Nicholas
             and Koutsourelakis, Phaedon-Stelios and Perdikaris, Paris},
  title   = {Physics-Constrained Deep Learning for High-Dimensional
             Surrogate Modeling and Uncertainty Quantification
             Without Labeled Data},
  journal = {Journal of Computational Physics},
  volume  = {394},
  pages   = {56--81},
  year    = {2019},
  doi     = {10.1016/j.jcp.2019.05.024}
}

@article{Lawrence2005,
  author  = {Lawrence, Neil},
  title   = {Probabilistic Non-Linear Principal Component Analysis
             With {Gaussian} Process Latent Variable Models},
  journal = {Journal of Machine Learning Research},
  volume  = {6},
  number  = {60},
  pages   = {1783--1816},
  year    = {2005},
  url     = {https://jmlr.org/papers/v6/lawrence05a.html}
}

@inproceedings{Titsias2010,
  author    = {Titsias, Michalis and Lawrence, Neil D.},
  title     = {{Bayesian Gaussian} Process Latent Variable Model},
  booktitle = {Proceedings of the Thirteenth International Conference
               on Artificial Intelligence and Statistics},
  series    = {Proceedings of Machine Learning Research},
  volume    = {9},
  pages     = {844--851},
  year      = {2010},
  url       = {https://proceedings.mlr.press/v9/titsias10a.html}
}

@inproceedings{Damianou2013,
  author    = {Damianou, Andreas and Lawrence, Neil D.},
  title     = {Deep {Gaussian} Processes},
  booktitle = {Proceedings of the Sixteenth International Conference
               on Artificial Intelligence and Statistics},
  series    = {Proceedings of Machine Learning Research},
  volume    = {31},
  pages     = {207--215},
  year      = {2013},
  url       = {https://proceedings.mlr.press/v31/damianou13a.html}
}

@inproceedings{Salimbeni2017,
  author    = {Salimbeni, Hugh and Deisenroth, Marc Peter},
  title     = {Doubly Stochastic Variational Inference
               for Deep {Gaussian} Processes},
  booktitle = {Advances in Neural Information Processing Systems},
  volume    = {30},
  year      = {2017},
  url       = {https://proceedings.neurips.cc/paper/2017/hash/8208974663db80265e9bfe7b222dcb18-Abstract.html}
}

@inproceedings{Garnelo2018,
  author    = {Garnelo, Marta and Rosenbaum, Dan and Maddison, Christopher
               and Ramalho, Tiago and Saxton, David and Shanahan, Murray
               and Teh, Yee Whye and Rezende, Danilo and Eslami, S. M. Ali},
  title     = {Conditional Neural Processes},
  booktitle = {Proceedings of the 35th International Conference
               on Machine Learning},
  series    = {Proceedings of Machine Learning Research},
  volume    = {80},
  pages     = {1704--1713},
  year      = {2018},
  url       = {https://proceedings.mlr.press/v80/garnelo18a.html}
}

@inproceedings{Gordon2020,
  author    = {Gordon, Jonathan and Bruinsma, Wessel P.
               and Foong, Andrew Y. K. and Requeima, James
               and Dubois, Yann and Turner, Richard E.},
  title     = {Convolutional Conditional Neural Processes},
  booktitle = {International Conference on Learning Representations},
  year      = {2020},
  url       = {https://openreview.net/forum?id=Skey4eBYPS}
}

@misc{Bruinsma2021,
  author        = {Bruinsma, Wessel P. and Requeima, James
                   and Foong, Andrew Y. K. and Gordon, Jonathan
                   and Turner, Richard E.},
  title         = {The {Gaussian} Neural Process},
  year          = {2021},
  eprint        = {2101.03606},
  archivePrefix = {arXiv},
  primaryClass  = {stat.ML},
  note          = {arXiv preprint},
  url           = {https://arxiv.org/abs/2101.03606}
}
\end{document}